\documentclass[12pt,letterpaper,leqno]{article} 
\usepackage{amsmath,amsfonts,amssymb,amsthm,enumerate,multirow,array,graphicx,lscape,subcaption,color,epstopdf,setspace,mathrsfs,listings,apptools,natbib,algorithmic,algorithm,cite}
\usepackage[inline]{enumitem}
\usepackage[letterpaper,margin=1in]{geometry}
\usepackage[bookmarksdepth=2,colorlinks=false]{hyperref}
\usepackage[OT1]{fontenc}
\usepackage{titlesec,titling}
\pretitle{\begin{center}\Large\bfseries}
\posttitle{\par\end{center}}
\preauthor{\begin{center}\lineskip 0.5em\scshape\begin{tabular}[t]{c}}
\postauthor{\end{tabular}\par\end{center}}
\predate{\begin{center}\normalfont}
\postdate{\end{center}\vskip -1em}

\providecommand{\keywords}[1]{\par{\scshape Keywords:} #1}
\providecommand{\jelcodes}[1]{\par{\scshape JEL Codes:} #1}
\titleformat{\section}[hang]{\normalfont\bfseries}{\relax\thesection.~}{0pt}{}[]
\AtAppendix{\titleformat{\section}[block]{\normalfont\scshape\filcenter}{Appendix~\thesection.~}{0pt}{}[]}
\titleformat{\subsection}[runin]{\normalfont\itshape}{\thesubsection.~}{0pt}{}[.]
\AtAppendix{\titleformat{\subsection}[hang]{\normalfont\bfseries}{\relax\thesubsection.~}{0pt}{}[]}
\titleformat{\subsubsection}[runin]{\normalfont\itshape}{\thesubsubsection.~}{0pt}{}[.]
\titlespacing{\section}{\parindent}{1em}{1em}
\titlespacing{\subsection}{\parindent}{1em}{1em}
\titlespacing{\subsubsection}{\parindent}{1em}{1em}
\newtheoremstyle{imsthm}{1em}{1em}{\itshape}{\parindent}{\scshape}{.}{.5em}{}
\theoremstyle{imsthm}
\newtheorem{thm}{Theorem}
\newtheorem{lmm}[thm]{Lemma}
\newtheorem{pro}[thm]{Proposition}
\newtheorem{crl}[thm]{Corollary}
\newtheoremstyle{imsdfn}{1em}{1em}{\normalfont}{\parindent}{\scshape}{.}{.5em}{}
\theoremstyle{imsdfn}

\newtheorem*{rmk}{Remarks}

\AtAppendix{\numberwithin{equation}{section}}
\AtAppendix{\numberwithin{figure}{section}}
\AtAppendix{\numberwithin{thm}{section}}
\usepackage[font=small,labelfont=sc]{caption}
\usepackage{natbib}
\bibliographystyle{apalike}
\setlength{\bibsep}{0pt}

\setlist{noitemsep}


\DeclareMathOperator*{\argmax}{argmax}

\usepackage{tikz}
\usetikzlibrary{calc,fit,positioning,intersections}
\usetikzlibrary{decorations.pathreplacing}
\usepackage{epigraph}

\setlength\epigraphwidth{.8\textwidth}
\setlength\epigraphrule{0pt}

\begin{document}

\title{Costly Persuasion by a Partially Informed Sender}
\author{Shaofei Jiang\thanks{University of Bonn (email: sjiang@uni-bonn.de). I thank Sarah Auster, V. Bhaskar, Francesc Dilm\'{e}, William Fuchs, Sven Rady, Vasiliki Skreta, Maxwell Stinchcombe, Caroline Thomas, Mark Whitmeyer, Thomas Wiseman, and seminar participants at Bonn, UT Austin, DUFE-IAER, Tianjin University, SEA Annual Meeting, Midwest Economic Theory Conference, Stony Brook Game Theory Conference, AMES China, AMES, EEA-ESEM Congress, European Winter Meeting of the Econometric Society, SAET Conference for helpful comments. I gratefully acknowledge funding from the Deutsche Forschungsgemeinschaft (DFG, German Research Foundation) under Germany's Excellence Strategy - GZ 2047/1, Projekt-ID 390685813. Errors are my own.}}
\date{November 14, 2024}

\maketitle

\thispagestyle{empty}

\onehalfspacing

\begin{abstract}

I study a model of costly Bayesian persuasion by a privately and partially informed sender who conducts a public experiment. The cost of running an experiment is the expected reduction of a weighted log-likelihood ratio function of the sender's belief. This is microfounded by a Wald sequential sampling problem where good news and bad news cost differently. I focus on equilibria satisfying the D1 criterion. The equilibrium outcome depends crucially on the relative costs of drawing good and bad news in the experiment. If good news is not too costly compared to bad news, there exists a unique separating equilibrium, and the receiver learns more information thanks to sender private information. If good news is sufficiently costlier than bad news, the single-crossing property fails. There may exist pooling and partial pooling equilibria, and in some equilibria, the receiver learns less information compared to a benchmark with an uninformed sender.
\vskip 1em

\keywords{Bayesian persuasion, Informed principal, Signaling games}
\jelcodes{C72, D82, D83}
\end{abstract}

\clearpage

\pagenumbering{arabic}

\section{Introduction}

Persuasion through public experimentation is prevalent. For example, a pharmaceutical company uses clinical trials to persuade the Food and Drug Administration (FDA) to approve a new drug; an interest group funds policy pilots to lobby Congress about a policy; an entrepreneur develops a prototype to convince an investor of a new technology. Bayesian persuasion \citep{kamenica_gentzkow_2011} provides a stylized model to analyze these situations: a sender (e.g., the pharmaceutical company/the interest group/the entrepreneur) conducts a public experiment (e.g., the clinical trial/the policy pilot/the prototype) to persuade a receiver (e.g., the FDA/the congress/the investor) about a commonly unknown state of the world (e.g., the quality of the drug/the economic benefit of the policy/the reliability of the technology).

Conducting experiments are costly, and often the sender has ex ante partial private information about the state of the world. For example, the pharmaceutical company learns about the new drug through internal research and animal testings; the interest group has access to in-house experts' opinions and scholarly research on the policy; and the entrepreneur has proprietary knowledge about the new technology. Hence, the sender can choose different experiments depending on her private information, and as a result, the receiver tried to infer the state of the world using two sources of information. Aside from observing the outcome of the public experiment, he may be able to learn about the state of the world by observing what experiment the sender opts for.

Does private information undermine the effectiveness of public experimentation? A natural intuition is that the sender with more preferable private information has a stronger incentive to provide more information. For example, an entrepreneur who is more confident about the new technology will invite scrutiny from the investor by developing a more complete prototype, while a less confident entrepreneur will develop a prototype with limited features and hope that all works fine. This is in analogy to results on monotonic signaling games \citep{cho_sobel_1990} where better sender types send higher signals, and it is shown to be the case by \citet{hedlund_2017,Hedlund_2024} in a persuasion problem with sender private information where experiments are costless and either the state of the world or the sender's private information is binary. However, this implies that the the sender's choice of experiment reveals her private information, and private information leads to more informative experiments being conducted in equilibrium. Hence, the receiver should welcome sender private information. This intuition is at odds with the fact that the receiver often uses costly interventions (e.g., hiring external experts, conducting technical due diligence) to counter private information on the sender's side.

We challenge this intuition by showing that it hinges on the assumption that experiments are costless. It is not generally the case when experiments are costly for the sender. Specifically, we study a Bayesian persuasion problem where the sender is privately and partially informed about a binary state of the world. At the outset of the game, she privately observes a noisy signal, which is her \emph{type}. Sender types are ordered by their prior beliefs, so higher types are more optimistic that the state of the world is good. The receiver takes a binary action to match the state of the world, but the sender only wants to convince the receiver of the good state.

We assume that experiments are costly for the sender and focus on the log-likelihood ratio cost of experiments. That is, the cost of running an experiment equals the expected reduction of a weighted log-likelihood ratio function evaluated at the sender's belief. This is microfounded via a \citet{wald_1945} sequential sampling problem where the cost of learning good news and bad news may differ.

The assumptions that news has direct payoff consequences and that good news and bad news may have different costs are natural in many applications. For example, from the point of view of persuading the FDA, it is good news if a patient recovers, and it is bad news if a patient does not recover. For the pharmaceutical company, bad news also leads to a higher cost of conducting the clinical trial, since the company has to treat the patient using existing drugs if she does not recover. For the interest group that funds policy pilots, it needs to compensate a subject if the policy has significant negative impact on her, which leads to a higher cost. However, depending on whether the interest group is for or against the policy, showing negative impacts can be bad or good news for the purpose of lobbying. In startup funding, successfully demonstrating a feature in the prototype is good news from the point of view of persuading the investor. However, good news may be costlier if the entrepreneur uses an incentive scheme that rewards the engineers who develop the prototype with a bonus for successfully developing a feature.

Our main result is that the equilibrium outcome of the game depends on the relative costs of learning good news and bad news. As is common in signaling games, multiple equilibria exist, so we focus on equilibria that satisfy the D1 criterion \citep{banks_sobel_1987,cho_kreps_1987}. If good news is not too costly compared to bad news, it is indeed true that higher sender types have stronger preferences for more information. There exists a unique separating equilibrium, that is, the sender's choice of experiment fully reveals her type. Compared to a benchmark where the sender's type is public, every sender type bar the lowest chooses a Blackwell more informative experiment in order to deter lower types from mimicking. Hence, the sender's private information leads to the receiver learning more about the state of the world in equilibrium.

In contrast, if good news is sufficiently costlier than bad news, running a more informative experiment is punitively more expensive for higher sender types, as they are more likely to learn good news. Our model becomes a signaling game where the single-crossing property fails. Indeed, two indifference curves intersect up to \emph{three} times. The equilibrium outcome is in general not unique. In every equilibrium, an interval of sender types choose the same experiment (this includes pooling and separating equilibria as special cases), while other sender types each choose a distinct experiment. We characterize the set of pooling equilibria, that is, the receiver learns nothing from the sender's choice of experiment, and show that the sender chooses a Blackwell less informative experiment in some pooling equilibrium than in a benchmark where the sender observes no private signal. Hence, the sender's private information can lead to less information in equilibrium.

The failure of single-crossing is not specific to the log-likelihood ratio cost of experiments. If experiments have the Shannon entropy cost, we show that the sender's payoff satisfies double-crossing, and pairwise pooling is possible in equilibrium. These results suggest that costly persuasion with sender private information is a natural class of problems where the single-crossing property can fail, and the sender's incentives and equilibrium outcomes depend crucially on the cost of experiments. The fact that the sender may have single-, double-, or triple-crossing payoffs points to the richness of the model. A more general theory that goes beyond binary persuasion problems and can accommodate more cost functions may be of interest for future research but is beyond the scope of this paper.

\subsection{Related literature}

\citet{kamenica_gentzkow_2011} introduce the study of Bayesian persuasion via unrestricted, costless experiments where the sender and the receiver have common prior about the state of the world. Their main result is concavification.\footnote{The sender's payoff can be expressed as a function over the (common) posterior belief, and the sender's equilibrium payoff as a function of the prior is the concave closure of that function.} \citet{alonso_camara_2016} study an extension where the sender and the receiver have heterogeneous priors, but they agree to disagree. They derive a bijection between the sender's and the receiver's posterior beliefs, hence concavification can be applied after a translation of beliefs. \citet{gentzkow_kamenica_2014} relaxes the assumption that experiments are costless. They show that the equilibrium can be solved using concavification if the cost of experiments is posterior separable \citep{caplin_dean_leahy_2022}.

The study of Bayesian persuasion games with a privately informed sender is more recent. \citet{perez-richet_2014} studies equilibrium refinement in Bayesian persuasion where the sender is fully informed of a binary state. \citet{koessler_skreta_2023} study a more general information design problem by a fully informed designer, allowing for many agents and private messages. \citet{Zapechelnyuk_2023} studies information design by a fully informed designer who cannot choose a fully revealing experiment. All these papers assume that information transmission is costless and apply a generalized version of the inscrutability principle \citep{myerson_1983}, which says that pooling equilibria are without loss of generality. In the current paper, we assume that the public experiment is costly. Moreover, the set of available experiments does not depend on the sender's type, and the experiment outcome cannot correlate with the sender's noisy signal conditional on the state of the world, so the sender cannot verifiably disclose her private information.\footnote{If the experiment can be arbitrarily correlated with the sender's private information, the sender is able to verifiably disclose her type by publicly replicating her private signal. \citet{alonso_camara_2018} study persuasion by a partially informed sender via costless experiments and allow correlation between the public experiment and the sender's private information even conditional on the state of the world. In their paper, pooling equilibria are again without loss of generality. This idea is also explored in models of sample selection and data tampering (see, e.g., \citet{Di_Tillio_Ottaviani_Sorensen_2017,Di_Tillio_Ottaviani_Sorensen_2021,Alonso_Camara_2024}).} Therefore, the inscrutability principle does not hold.

Within the literature of Bayesian persuasion, the closest to the current paper are by \citet{hedlund_2017,Hedlund_2024}. Both papers study persuasion by a partially informed sender but assume that experiments are costless. Moreover, sender types are ordered by likelihood ratio, and the sender's payoff is monotonic in the receiver's posterior belief with respect to the likelihood ratio order. If the state of the world is binary \citep{hedlund_2017} or the sender's type is binary\footnote{When either the state or the sender's type is binary, the receiver's interim belief is one-dimensional. This is crucial to the analysis by \citet{hedlund_2017,Hedlund_2024}. Along with the aforementioned assumptions that sender types are ordered and the sender's payoff is increasing in the receiver's belief, this implies that all sender types prefer higher interim beliefs of the receiver, hence it is a monotonic signaling game \citep{cho_sobel_1990}.} and the sender's payoff function is outer concave\footnote{That is, the sender's payoff is lower from choosing the fully revealing experiment than from choosing the uninformative experiment given any common prior.} \citep{Hedlund_2024}, the single-crossing property is satisfied. Intuitively, sender types with favorable private information have stronger preferences for more informative experiments. Therefore, only separating equilibria and pooling equilibria where all sender types choose the most informative (i.e., fully revealing) experiment are selected by the D1 criterion. \citet{Kosenko_2023} points out that, under the binary state setting of \citet{hedlund_2017}, there may exist D1 pooling equilibria (that are not fully revealing) if the set of experiments is restricted or the receiver's action is discrete.\footnote{However, this is not due to failure of single-crossing. Restricting the set of available experiments does not change the sender's payoff function. If the receiver's action is binary and experiments are costless (example 2 of \citet{Kosenko_2023}), the single-crossing property is still satisfied (an observation made in Section \ref{sec4.2} of the current paper). We show in Section \ref{sec8.3} that the D1 criterion selects a continuum of pooling equilibria. Although these pooling equilibria are not fully revealing, the bad outcome reveals the bad state (i.e., they reside on the boundary of the set of experiments). This is also the case in the equilibrium presented in example 2 of \citet{Kosenko_2023}.}

The introduction of a cost of experiments substantively changes the sender's incentives. When good news is sufficiently costlier than bad news, conducting a more informative experiment is punitively more expensive for sender types with preferable private information, hence they no longer have a stronger incentive to conduct more informative experiments. In other words, the single-crossing property fails, and as a result, pooling equilibria that are not fully revealing and partial pooling equilibria may be selected by the D1 criterion.

Two other papers that feature both costly experiments and sender private information are by \citet{li_li_2013} and by \citet{degan_li_2021}. In both papers, a privately informed sender chooses from a restricted class of noisy signals that differ only on their precision, and the cost is increasing in the precision. In contrast, our paper allows the sender to choose any Blackwell experiment and assumes that the cost of a Blackwell more informative experiment is higher. There are other works that study signaling through provision of information. \citet{Bull_Watson_2019} study disclosure of hard evidence where the sender also has private soft information about a binary state. \citet{Chen_Zhang_2020} study a privately informed seller that can signal her type through both an experiment and her pricing strategy.

The failure of single-crossing is worth noting beyond the Bayesian persuasion literature, since single-crossing is widely assumed in signaling games (see, e.g., the analysis of insurance markets by \citet{rothschild_stiglitz_1976,wilson_1977}). It also plays an important role in applying various equilibrium refinements in signaling games (e.g., \citet{riley_1979,cho_kreps_1987,cho_sobel_1990,Ramey_1996}) and for monotone comparative statics \citep{milgrom_shannon_1994}. \citet{chen_ishida_suen_2022} study signaling games where the sender's preference exhibits double-crossing instead of single-crossing. They characterize equilibria satisfying the D1 criterion. In our model, when good news is much costlier than bad news, two indifference curves of different sender types can intersect three times, hence violating even the double-crossing property.

Our paper is also related to the literature on the cost of information. Posterior-separable costs have been popular in modeling attention costs (e.g., \citet{sims_1998,sims_2003}) and are used to model costs of experiments by \citet{gentzkow_kamenica_2014}. However, an experiment, as defined by \citet{blackwell_1953}, is a concept independent of beliefs, and with heterogeneous priors, it is unclear which player's beliefs should be used to compute the cost of an experiment. By studying a \citet{wald_1945} sequential sampling problem, we show that the cost of an experiment equals the expected reduction in a weighted log-likelihood ratio function evaluated at the sender's belief. This is precisely the cost function studied by \citet{pomatto_strack_tamuz_2023}, who provide an axiomatic foundation for it. Our setup of the sequential sampling problem is similar to that in \citet{brocas_carrillo_2007} and \citet{henry_ottaviani_2019}, but we assume that the cost of acquiring each signal is a random variable. Our result complements other studies which microfound costs of information through sequential information acquisition (e.g., \citet{morris_strack_2019,bloedel_zhong_2020,hebert_woodford_2023}).

\vskip 1em

The rest of the paper is organized as follows. Section \ref{sec2} presents the model. Section \ref{sec4} shows that the single-crossing property fails when good news is sufficiently costlier than bad news. In Section \ref{sec5}, we characterize the set of D1 pooling equilibrium outcomes when single-crossing fails and the unique separating equilibrium outcome under single-crossing. In Sections \ref{sec8}, we discuss partial pooling equilibria. We also present a sequential sampling problem which microfounds the log-likelihood ratio cost function, and we show that single-crossing fails under the Shannon entropy cost. All proofs are in the Appendix.

\section{The Model}\label{sec2}

There is a sender (she), and a receiver (he). At the outset of the game, Nature determines a binary state of the world \(\omega\in\Omega:=\{G,B\}\) and a signal \(\theta\in\Theta:=\{1,2,\dots,N\}\) according to a commonly known joint distribution \(F\) with full support over \(\Omega\times\Theta\). Let \(\mu_0\) be the probability of the good state (i.e., \(\omega=G\)), and \(\mu_\theta\) the probability of the good state conditional on the signal realization \(\theta\). We assume that \(0<\mu_1<\mu_2<\dots<\mu_N<1\). The sender privately observes the signal \(\theta\), and neither player observes the state \(\omega\). Therefore, \(\theta\) is the sender's \emph{type}, and the sender's prior belief on the good state is \(\mu_\theta\). On the other hand, the receiver's prior belief  is \(\mu_0\).

The game proceeds as follows. The sender publicly chooses an experiment \(\pi\) on the state of the world. The experiment yields a binary outcome \(s\in\{g,b\}\).\footnote{It is without loss of generality to focus on binary experiments given the assumptions that Blackwell more informative experiments are costlier and that the receiver follows a threshold decision rule. These assumptions are presented in Sections \ref{sec2.3} and \ref{sec2.4}, respectively.} That is, \(\pi:\Omega\to\Delta(\{g,b\})\). The outcome of the chosen experiment \(s\) is determined according to the distribution \(\pi(\cdot|\omega)\) and is publicly observed. Notice that it is conditionally independent of the sender's private information. The receiver takes a binary action \(a\in\{0,1\}\), and payoffs are realized.\footnote{The model can be equivalently formulated as a mechanism design problem with an informed principal, where the outcome \(g\) is an action recommendation for the receiver to take the action \(a=1\), and the outcome \(b\) is an action recommendation for the receiver to take the action \(a=0\).}

\subsection{Strategies}

Given an experiment \(\pi\), let \(p=\pi(g|G)\) and \(q=\pi(g|B)\). Without loss of generality, \(p\geq q\). An experiment is thus identified with the pair of probabilities \((p,q)\), and the set of feasible experiments is \(\Pi=\{(p,q):1\geq p\geq q\geq 0\}\). We denote by \(\Pi^\circ\) the interior of \(\Pi\). An experiment \((p,q)\) is Blackwell more informative than another experiment \((p',q')\) if and only if \(\frac{q}{p}\leq\frac{q'}{p'}\) and \(\frac{1-p}{1-q}\leq\frac{1-p'}{1-q'}\).

A pure strategy of the sender \(\{\pi_\theta\}_{\theta\in\Theta}\) is the collection of experiments chosen by all sender types, where \(\pi_\theta\in\Pi\) is the experiment chosen by the type \(\theta\) sender. A pure strategy of the receiver is \(\mathbf{a}:\Pi\times\{g,b\}\to\{0,1\}\). It selects an action at every information set of the receiver, which is identified by the sender's choice of experiment \(\pi\) and its outcome \(s\).

\subsection{Beliefs}

After observing the sender's choice of experiment but before seeing its outcome, the receiver forms a belief about the sender's type and the state of the world. Let \(\gamma(\theta|\pi)\) denote his belief that the sender's type is \(\theta\) after experiment \(\pi\) is chosen. Then \(\beta(\pi):=\sum_{\theta\in\Theta}\gamma(\theta|\pi)\mu_\theta\) is the receiver's \emph{interim belief} on the good state. Notice that \(\beta(\pi)\in[\mu_1,\mu_N]\).

After the outcome is observed, both players update their beliefs. Let \(\hat{\mu}(\theta,\pi,s)\) and \(\hat{\beta}(\pi,s)\) be the posterior beliefs of the type \(\theta\) sender and the receiver, respectively, that the state is good after observing outcome \(s\) from experiment \(\pi\).

\subsection{Cost of experiments and the sender's payoff}\label{sec2.3}

The sender strictly prefers the high receiver action over the low action. Her payoff \(v(a,\pi|\theta)=a-c(\pi|\mu_\theta)\) consists of two parts: a reward which is normalized to 1 if the receiver chooses the high action, minus the cost of the experiment \(c(\pi|\mu_\theta)\), which equals the expected reduction of a weighted log-likelihood ratio function evaluated at her belief. That is, the cost of running an experiment \(\pi\) given the sender's prior belief \(\mu\) is
\begin{equation*}
	c(\pi|\mu) = \mathbb{E}[H(\mu)-H(\hat{\mu})],
\end{equation*}
where
\begin{equation*}
	H(\mu) = C_g\mu\ln\left(\frac{1-\mu}{\mu}\right)+C_b(1-\mu)\ln\left(\frac{\mu}{1-\mu}\right),
\end{equation*}
\(C_g,C_b>0\), and \(\hat{\mu}\) is the sender's posterior belief induced by the experiment \(\pi\).\footnote{The cost \(c(\pi|\mu)\in\mathbb{R}_+\cup\{+\infty\}\) is well-defined for all \(\mu\in(0,1)\) and \(\pi\neq(0,0),(1,1)\). For completeness, let \(c((0,0)|\mu)=c((1,1)|\mu)=0\) for all \(\mu\in(0,1)\).}

A few remarks are in order regarding the log-likelihood ratio cost function. First, \citet{pomatto_strack_tamuz_2023} show that this is the only family of cost functions satisfying three axioms.\footnote{The three axioms are: first, a Blackwell more informative experiment is costlier; second, the cost of generating independent experiments is the sum of their individual costs; third, the cost of generating an experiment with some probability is linear in the probability (see Theorem 5 in the online appendix of \citet{pomatto_strack_tamuz_2023}). For more than two states, a continuity condition is needed.} We microfound the cost function in Section \ref{sec7} and show that the parameters \(C_g\) and \(C_b\) are the costs of drawing good and bad news, respectively. Hence, the parameterization \(C_g<C_b\) models scenarios where bad news is costlier (e.g., pharmaceutical companies conducting clinical trials), while \(C_g>C_b\) models scenarios where good news is costlier (e.g., entrepreneurs developing prototypes). Second, the cost of an experiment \(c(\pi|\mu)\) depends on the sender's prior belief \(\mu\). Intuitively, a more optimistic sender is more likely to learn good news, and a more pessimistic sender is more likely to learn bad news. Hence, depending on the relative costs of learning good new and bad news, they can have different costs of running the same experiment. We show in the proof of Proposition \ref{pro3} that the cost of a given experiment is an increasing (decreasing) affine function of the sender's prior if \(\frac{C_g}{C_b}\) is above (below) a threshold. Finally, the cost of a Blackwell more informative experiment is always higher. The uninformative experiment\footnote{An experiment is uninformative if \(p=q\). Since all uninformative experiments are Blackwell equivalent and have zero cost, we will identify and refer to them as ``the uninformative experiment.''} has zero cost, and any experiment that can reveal the state (i.e., \(q<p=1\) or \(0=q<p\)) has an infinite cost.

\subsection{The receiver's payoff}\label{sec2.4}

The receiver's payoff \(u(a,\omega)\) depends on his action \(a\) and the state of the world \(\omega\). By normalization, \(u(0,G)=u(0,B)=0\), \(u(1,G)=1\), and \(u(1,B)=-\bar{\beta}/(1-\bar{\beta})\), hence the receiver follows a threshold decision rule and takes the high action if and only if his posterior belief is at least \(\bar{\beta}\).\footnote{As is standard in Bayesian persuasion, we assume that the receiver takes the sender preferred action when he is indifferent.} We assume that \(\mu_N<\bar{\beta}\), so the receiver is never persuaded at the interim stage. The sender's private signal is relatively noisy. Even if the receiver knows that the sender is the highest type, it is not optimal for the receiver to take the high action without learning from an experiment. We relax this assumption in Section \ref{sec5.2} and give a partial characterization of equilibrium outcomes when \(\mu_N>\bar{\beta}\).

\subsection{Equilibrium}

An \emph{equilibrium} consists of pure strategies of the players, \(\{\pi_\theta\}_{\theta\in\Theta}\) and \(\mathbf{a}:\Pi\times\{g,b\}\to\{0,1\}\), and the receiver's system of beliefs \(\beta:\Pi\to[\mu_1,\mu_N]\) and \(\hat{\beta}:\Pi\times\{g,b\}\to[0,1]\), such that
\begin{enumerate}[label=(\arabic*)]
	\item Given the receiver's strategy \(\mathbf{a}\), the sender's strategy is optimal, i.e.,
	\begin{equation*}
		\pi_\theta\in\argmax_{\pi\in\Pi}\mathbb{E}[v(\mathbf{a}(\pi,s),\pi|\theta)]
	\end{equation*}
	for all \(\theta\in\Theta\);
	\item The receiver is sequentially rational, i.e., \(\mathbf{a}(\pi,s)=1\) if and only if \(\hat{\beta}(\pi,s)\geq\bar{\beta}\);
	\item Beliefs are updated using Bayes' rule whenever possible. That is,
	\begin{equation*}
		\beta(\pi)=\frac{F(\{G\}\times\Theta_\pi)}{F(\Omega\times\Theta_\pi)}
	\end{equation*}
	if \(\Theta_\pi:=\{\theta:\pi_\theta=\pi\}\) is nonempty, and
	\begin{equation*}
		\hat{\beta}(\pi,s)=\mathbf{B}(\beta(\pi),\pi,s):=\frac{\beta(\pi)\pi(s|G)}{\beta(\pi)\pi(s|G)+(1-\beta(\pi))\pi(s|B)}
	\end{equation*}
	if \(\pi(s|G)+\pi(s|B)\neq 0\).
\end{enumerate}
We say an equilibrium is a \emph{persuasion equilibrium} if the high action is taken with positive probability on the equilibrium path. Otherwise, it is an \emph{uninformative equilibrium}. A persuasion equilibrium is a \emph{pooling equilibrium} if all sender types choose the same experiment. It is a \emph{separating equilibrium} if every sender type chooses a different experiment. A persuasion equilibrium that is neither pooling nor separating is a \emph{partial pooling equilibrium}. Given an equilibrium, we call the collection of experiments \(\{\pi_\theta\}_{\theta\in\Theta}\) the \emph{equilibrium outcome}.

\subsection{The D1 criterion}

We focus on equilibria that satisfy the D1 criterion. Given the receiver's interim belief \(\beta\in[\mu_1,\mu_N]\), let \(\bar{v}(\beta,\pi|\theta)\) be the type \(\theta\) sender's expected payoff from choosing an experiment \(\pi\). That is,
\begin{equation*}
	\bar{v}(\beta,\pi|\theta) = \mathbb{P}[\mathbf{B}(\beta,\pi,s)\geq\bar{\beta}]-c(\pi|\mu_\theta).
\end{equation*}
Fixing an equilibrium, let \(v_\theta^\star\) be the type \(\theta\) sender's equilibrium payoff, and for any deviation \(\pi\in\Pi\setminus\{\pi_\theta\}_{\theta\in\Theta}\), let
\begin{gather*}
	D_\theta(\pi) = \{\beta\in[\mu_1,\mu_N]:\bar{v}(\beta,\pi|\theta)>v_\theta^\star\}, \\
	D^0_\theta(\pi) = \{\beta\in[\mu_1,\mu_N]:\bar{v}(\beta,\pi|\theta)\geq v_\theta^\star\}.
\end{gather*}
\(D_\theta(\pi)\) is the set of the receiver's interim beliefs given which \(\pi\) is a profitable deviation for the type \(\theta\) sender, and \(D^0_\theta(\pi)\) is the set of the receiver's interim beliefs given which the deviation gives the sender at least the same payoff as her equilibrium payoff. An important special case is when \(D^0_\theta(\pi)\) is empty, that is, the deviation \(\pi\) is strictly equilibrium dominated for the sender regardless of the receiver's interim belief.

An equilibrium satisfies the D1 criterion if: for all pairs of distinct sender types \((i,j)\) and deviations \(\pi\), if \(D_i^0(\pi)\subsetneq D_j(\pi)\), then the receiver's off-path interim belief \(\beta(\pi)\) lies in the convex hull of \(\{\mu_\theta:\theta\neq i\}\). In words, if some sender type \(j\) is keener to deviate to \(\pi\) than sender type \(i\), in the sense that this deviation is profitable for her given a larger set of receiver beliefs than for sender type \(i\), then the receiver should not attribute this deviation to sender type \(i\). A special case is when \(\pi\) is strictly equilibrium dominated for all sender types \(\theta<n\). Then the D1 criterion requires that \(\beta(\pi)\geq \mu_n\) if \(D_n(\pi)\) is nonempty.

\section{The Single-Crossing Property}\label{sec4}

The sender faces a trade-off between information quality and cost. She wants to minimize the cost, but the experiment must be sufficiently informative so that the receiver is willing to take the high action if the outcome turns out good. Given the receiver's interim belief \(\beta\), an experiment \(\pi=(p,q)\) is said to be \emph{persuasive at belief \(\beta\)} if the receiver will take the high action following the good outcome, that is, if \(\mathbf{B}(\beta,\pi,g)\geq\bar{\beta}\). Equivalently,
\begin{equation*}
	\frac{q}{p}\leq \mathbf{Q}(\beta):=\frac{\beta}{1-\beta}\Big/\frac{\bar{\beta}}{1-\bar{\beta}}.
\end{equation*}
The sender's expected payoff from choosing a persuasive experiment \(\pi\) is
\begin{equation*}
	f(\pi,\mu_\theta) := \mu_\theta p+(1-\mu_\theta)q - c(\pi|\mu_\theta).
\end{equation*}
All other experiments are \emph{unpersuasive} (at belief \(\beta\)). If an unpersuasive experiment is chosen, the receiver takes the low action regardless of the experiment's outcome, and the sender's expected payoff is \(-c(\pi|\mu_r)\leq 0\). The sender's payoff is zero if and only if she chooses the uninformative experiment.

\subsection{Marginal rate of substitution and single-crossing}\label{sec4.1}

In an equilibrium, the sender chooses either the uninformative experiment or some experiment \(\pi\) that is persuasive at the receiver's interim belief \(\beta(\pi)\). In the latter case, her expected payoff is \(f(\pi,\mu_\theta)\). The marginal rate of substitution (of \(p\) for \(q\) for the sender type \(\theta\))
\begin{equation*}
	MRS(\pi|\mu_\theta) = -\frac{\partial f(\pi,\mu_\theta)/\partial p}{\partial f(\pi,\mu_\theta)/\partial q}
\end{equation*}
is the marginal utility of increasing \(p\) relative to increasing \(q\). Increasing either \(p\) or \(q\) leads to a higher probability of the high receiver action, but increasing \(p\) increases the cost, whereas increasing \(q\) reduces the cost. Hence, the marginal rate of substitution is negative when \(p\) is small, but it becomes positive and goes to infinity as \(p\) goes to \(1\).

The marginal rate of substitution has the usual geometric representation. In Figure \ref{fig.2}, the right triangle is the set of experiments \(\Pi\), where \(p\) and \(q\) are shown on the horizontal and vertical axes, respectively. The solid curve shows an indifference curve of some sender type \(h\). The marginal rate of substitution \(MRS(\pi|\mu_h)\) is the slope of the indifference curve at the experiment \(\pi\).

\begin{figure}[t]
	\centering
	\begin{tikzpicture}[scale=.65]
        \node[inner sep=0pt] at (5,5) {\includegraphics[width=.405\textwidth]{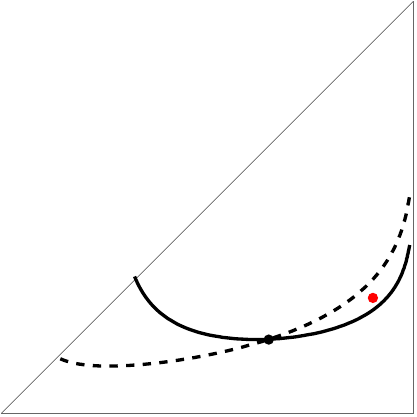}};
        \node[below] at (5,0) {$p$};
        \node[right] at (10,5) {$q$};
        \node[above] at (6.5,1.8) {$\pi$};
        \draw (9.1,2.8) -- (11,2.8) node[right] {$\tilde{\pi}$};
        \draw[very thick] (0,9) -- (1,9) node[right] {high sender type \(h\)};
        \draw[very thick,dashed] (0,8) -- (1,8) node[right] {low sender type \(l\)};
        \node at (.5,10) {\(\uparrow\)};
        \node[right] at (1,10) {direction of increasing payoffs};
    \end{tikzpicture}
	\caption{The single-crossing property and an example of a nearby deviation}\label{fig.2}
\end{figure}

At the uninformative experiment, the marginal rate of substitution
\begin{equation}\label{eq.3}
	MRS(\pi|\mu_\theta) = -\frac{\mu_\theta}{1-\mu_\theta}
\end{equation}
is decreasing in \(\mu_\theta\). We say that \emph{the single-crossing property} is satisfied if the marginal rate of substitution is weakly decreasing in the sender's prior belief at all experiments, i.e.,
\begin{equation*}
	\frac{\partial}{\partial\mu}MRS(\pi|\mu)\leq 0
\end{equation*}
for all \(\pi\in\Pi^\circ\).\footnote{The marginal rate of substitution is not well defined if \(p=1\) or \(q=0\), hence the need to define single-crossing only on \(\Pi^\circ\).} Geometrically, this implies that the indifference curve of a lower sender type is everywhere steeper than that of a higher sender type, hence any two indifference curves intersect at most once. For example, the dashed curve in Figure \ref{fig.2} shows the indifference curve of a lower sender type \(l<h\) through \(\pi\). It has a more upward slope than the indifference curve of the high type sender \(h\) at \(\pi\), and the two indifference curves intersect only at \(\pi\).

When the single-crossing property holds, higher types of the sender have stronger incentives to conduct more informative experiments. That is, given any experiment, there exists a more informative experiment that decreases all but the highest type sender's payoff. Hence, the highest type of the sender can credibly prove her type by providing more information. This is formalized in Section \ref{sec4.3}.

\subsection{Failure of single-crossing}\label{sec4.2}

If experiments are costless, the single-crossing property is always satisfied, since the sender's payoff \(f(\pi,\mu_\theta)=\mu_\theta p+(1-\mu_\theta)q\) is simply the probability of the good outcome, and the marginal rate of substitution, given by (\ref{eq.3}), is decreasing in \(\mu_\theta\). When experiments are costly, the sender's trade-off is more involved, as the sender's choice of experiment affects her payoff also through its cost. Proposition \ref{pro3} below shows that the single-crossing property fails if good news is sufficiently costlier than bad news. Intuitively, higher types of the sender are more likely to learn good news. Hence, costlier good news decreases their incentives to provide more information, and when good news is sufficiently costlier than bad news, it is no longer true that higher sender types have stronger preferences for more informative experiments.

\begin{pro}\label{pro3}
	There exists \(\hat{K}:\mathbb{R}_{++}\to\mathbb{R}_{++}\) such that the single-crossing property is satisfied if and only if \(C_g\leq\hat{K}(C_b)\). \(\hat{K}\) is twice continuously differentiable, increasing and concave, and for all \(C_b>0\), \(\hat{K}(C_b)>C_b\), and \(\lim_{C_b\downarrow 0}\hat{K}'(C_b)=\infty\).
\end{pro}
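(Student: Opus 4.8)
\emph{Plan.} The strategy is to express single-crossing as a one-dimensional inequality in the ``log-likelihood spread'' $\lambda:=\ln\frac{p(1-q)}{q(1-p)}$ of the experiment, to identify the threshold as the value of $C_g$ at which this inequality first becomes tight, and then to read off all the claimed properties from an explicit parametrization of the threshold curve. First I would record the closed form of the cost: a direct computation of $\mathbb{E}[H(\mu)-H(\hat\mu)]$ (using that the joint probability of state $G$ and outcome $g$ equals $\mu p$, etc.) gives
\[
c(\pi|\mu)=C_g\,\mu\,D(p\|q)+C_b\,(1-\mu)\,D(q\|p),\qquad D(x\|y):=x\ln\tfrac{x}{y}+(1-x)\ln\tfrac{1-x}{1-y},
\]
the cost function of \citet{pomatto_strack_tamuz_2023}. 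In particular $c(\pi|\cdot)$, hence $f(\pi,\cdot)$, is affine in $\mu$, so $\partial_p f$ and $\partial_q f$ are affine in $\mu$, say $\partial_p f=a_1\mu+a_0$ and $\partial_q f=b_1\mu+b_0$; a computation gives $a_1=1-C_g\lambda+C_b m_1$, $a_0=-C_b m_1$, $b_1=C_g m_2-1-C_b\lambda$, $b_0=1+C_b\lambda$, with $m_1=\frac{p-q}{p(1-p)}$ and $m_2=\frac{p-q}{q(1-q)}$. Since $m_1,m_2,\lambda\ge 0$ we get $\partial_q f>0$, so $MRS(\pi|\mu)=-\partial_p f/\partial_q f$ is well defined and $\partial_\mu MRS(\pi|\mu)=-(a_1b_0-a_0b_1)/(\partial_q f)^2$ has a $\mu$-independent sign. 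Using $m_1m_2=\frac{(p-q)^2}{p(1-p)q(1-q)}=e^{\lambda}+e^{-\lambda}-2$, this collapses to
\[
a_1b_0-a_0b_1=1+(C_b-C_g)\lambda+C_gC_b\,\psi(\lambda),\qquad \psi(\lambda):=e^{\lambda}+e^{-\lambda}-2-\lambda^2\ \ (\ge 0),
\]
so (as $\lambda$ ranges over $(0,\infty)$ when $\pi$ ranges over $\Pi^\circ$) single-crossing holds iff $\Phi(\lambda):=1+(C_b-C_g)\lambda+C_gC_b\psi(\lambda)\ge 0$ for all $\lambda>0$.

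Next I would define the threshold. Since $\psi''(\lambda)=e^{\lambda}+e^{-\lambda}-2>0$ for $\lambda>0$, $\Phi$ is strictly convex, with $\Phi(0)=1$, $\Phi\to\infty$, and $\Phi'(0)=C_b-C_g$; if $C_g\le C_b$ then $\Phi\ge\Phi(0)=1>0$. On $S:=\{\lambda>0:C_b\psi(\lambda)<\lambda\}$ (nonempty since $\psi(\lambda)/\lambda\to 0$ as $\lambda\downarrow 0$, bounded since $\psi(\lambda)/\lambda\to\infty$) one has $\Phi(\lambda)=(\lambda-C_b\psi(\lambda))(R(\lambda)-C_g)$ with $R(\lambda):=\frac{C_b\lambda+1}{\lambda-C_b\psi(\lambda)}$, while $\Phi\ge 1+C_b\lambda>0$ off $S$; hence single-crossing holds iff $C_g\le\hat K(C_b):=\inf_{\lambda\in S}R(\lambda)$. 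As $R\to\infty$ at both ends of $S$, this is a minimum, attained at an interior $\lambda^{\star}=\lambda^{\star}(C_b)$, and since $C_b(\lambda-C_b\psi(\lambda))\le C_b\lambda<C_b\lambda+1$ we get $R>C_b$ everywhere, so $\hat K(C_b)>C_b$. At $C_g=\hat K(C_b)$ the minimum of $\Phi$ over $\lambda$ is $0$ and (because $\Phi(0)=1>0$) attained at the interior point $\lambda^{\star}$, so $\Phi(\lambda^{\star})=\Phi'(\lambda^{\star})=0$. In reciprocal variables $u:=1/C_b$, $w:=1/\hat K(C_b)$ this reads $u-w=\psi'(\lambda^{\star})$ and $uw=\lambda^{\star}\psi'(\lambda^{\star})-\psi(\lambda^{\star})$, so $(u+w)^2=\psi'(\lambda^{\star})^2+4\lambda^{\star}\psi'(\lambda^{\star})-4\psi(\lambda^{\star})$. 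The crucial identity is $\psi'^2+4\lambda\psi'-4\psi\equiv\psi''^2$ (immediate from $\psi(\lambda)=2\cosh\lambda-2-\lambda^2$), which removes the square root: $u+w=\psi''(\lambda^{\star})$, hence $u=e^{\lambda^{\star}}-\lambda^{\star}-1$ and $w=e^{-\lambda^{\star}}+\lambda^{\star}-1$. Thus the graph of $\hat K$ is the curve
\[
C_b=\frac{1}{e^{\lambda}-\lambda-1},\qquad \hat K(C_b)=\frac{1}{e^{-\lambda}+\lambda-1},\qquad \lambda>0 .
\]
Since $\lambda\mapsto 1/(e^{\lambda}-\lambda-1)$ is a $C^{\infty}$ strictly decreasing bijection from $(0,\infty)$ onto $(0,\infty)$ (derivative $-(e^{\lambda}-1)/(e^{\lambda}-\lambda-1)^2<0$), $\hat K:\mathbb{R}_{++}\to\mathbb{R}_{++}$ is $C^{\infty}$, in particular twice continuously differentiable.

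For the remaining properties, write $u(\lambda)=e^{\lambda}-\lambda-1$, $w(\lambda)=e^{-\lambda}+\lambda-1$, so $C_b=1/u(\lambda)$ and $\hat K=1/w(\lambda)$. Then $\frac{dw}{du}=\frac{w'(\lambda)}{u'(\lambda)}=\frac{1-e^{-\lambda}}{e^{\lambda}-1}=e^{-\lambda}$, and the chain rule gives $\hat K'(C_b)=e^{-\lambda}(u(\lambda)/w(\lambda))^2=e^{-\lambda}(\hat K(C_b)/C_b)^2>0$, so $\hat K$ is increasing; moreover $C_b\downarrow 0$ forces $\lambda\uparrow\infty$, along which $u/w\sim e^{\lambda}/\lambda$, whence $\hat K'(C_b)\sim e^{\lambda}/\lambda^2\to\infty$. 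Differentiating $\hat K'$ once more along the parametrization (using $dC_b/d\lambda=-u'(\lambda)/u(\lambda)^2<0$) and simplifying, the inequality $\hat K''(C_b)\le 0$ reduces to
\[
g(\lambda):=(\lambda-3)e^{\lambda}-(\lambda+3)e^{-\lambda}+\lambda^2+6\ \ge\ 0\qquad(\lambda>0),
\]
which holds because $g(0)=g'(0)=g''(0)=0$ and $g'''(\lambda)=2\lambda\cosh\lambda\ge 0$; hence $\hat K$ is concave.

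The routine parts are the cost computation and the reduction to $\Phi\ge 0$. The real content, and the step I expect to be the main obstacle, is the tangency system in the second paragraph: the reason it is solvable in closed form is the identity $\psi'^2+4\lambda\psi'-4\psi\equiv\psi''^2$, without which the threshold curve carries an irreducible square root and the later differentiations become unmanageable. Within that argument I expect the second differentiation of $\hat K'$ to be the most error-prone; the pleasant surprise is that it collapses to the single scalar inequality $g\ge 0$, which is then trivial.
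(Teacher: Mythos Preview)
Your proof is correct and follows the same overall strategy as the paper: reduce the sign of $\partial_\mu MRS$ to a one-variable function, locate the threshold $\hat K(C_b)$ as the value of $C_g$ at which this function is tangent to zero, parametrize the resulting curve, and differentiate along the parametrization. The paper works with $t=e^{-\lambda}=\frac{(1-p)q}{p(1-q)}$ and the function $\Delta(t)=-\Phi(-\ln t)$, and uses the auxiliary variable $\hat x(C_b)$ solving $x-\ln(1+x)=1/C_b$ (which is your $e^{\lambda^\star}-1$).

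Where your argument genuinely differs is in how the threshold curve is obtained. The paper \emph{posits} the formula $\hat K(C_b)=\bigl(\tfrac{\hat x^2}{1+\hat x}-\tfrac{1}{C_b}\bigr)^{-1}$ and verifies by substitution that $\Delta(t^\star)=0$. You instead pass to the reciprocals $u=1/C_b$, $w=1/C_g$, write the tangency system as $u-w=\psi'(\lambda^\star)$, $uw=\lambda^\star\psi'(\lambda^\star)-\psi(\lambda^\star)$, and invoke the identity $\psi'^2+4\lambda\psi'-4\psi\equiv\psi''^2$ to get $u+w=\psi''(\lambda^\star)$ and hence the symmetric closed form $u=e^{\lambda}-\lambda-1$, $w=e^{-\lambda}+\lambda-1$. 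This is the same curve (one checks $\hat x=e^{\lambda}-1$), but your derivation explains \emph{why} a closed form exists rather than verifying one. The payoff shows up downstream: your concavity step collapses to $g(\lambda)=(\lambda-3)e^{\lambda}-(\lambda+3)e^{-\lambda}+\lambda^2+6\ge 0$ with $g'''=2\lambda\cosh\lambda$, whereas the paper's computation of $\hat K''$ is longer and ends with a separate third-derivative argument in the variable $\hat x$. One small point you use implicitly: that $S$ is an interval (so ``both ends'' makes sense) follows because $\psi(\lambda)/\lambda$ is increasing, which in turn follows from $(\lambda\psi'-\psi)'=\lambda\psi''\ge 0$; this is worth stating.
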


When single-crossing fails, there exist two loci of experiments at which the indifference curves of all sender types are tangent. These experiments are highlighted by the curves \(p=\hat{\mathbf{p}}(q)\) and \(p=\check{\mathbf{p}}(q)\) in Figure \ref{fig.4}. If an experiment is sufficiently uninformative (i.e., if \(p<\hat{\mathbf{p}}(q)\)) or sufficiently informative (i.e., if \(p>\check{\mathbf{p}}(q)\)), the marginal rate of substitution is decreasing in the sender's type. In the intermediate region (i.e., if \(\hat{\mathbf{p}}(q)<p<\check{\mathbf{p}}(q)\)), the marginal rate of substitution is increasing in the sender's type. Hence, two indifference curves can intersect up to \emph{three} times, once in each region. 

Consider two experiments \(\pi_1=(p_1,q_1)\) and \(\pi_2=(p_2,q_2)\) such that \(p_1=\hat{\mathbf{p}}(q_1)\) and \(p_2=\check{\mathbf{p}}(q_2)\). Figure \ref{fig.4} shows the indifference curves of two types of the sender \(h>l\) at these experiments. At experiment \(\pi_1\), the high-type sender's indifference curve is more convex than the low-type sender's indifference curve. Hence, in a neighborhood of \(\pi_1\), the high-type sender's indifference curve is higher than the low-type sender's indifference curve. But at experiment \(\pi_2\), the low-type sender's indifference curve is more convex and therefore higher than that of the high-type sender.

Proposition \ref{pro3_part2} below summarizes the results.

\begin{pro}\label{pro3_part2}
	If \(C_g>\hat{K}(C_b)\), there exist \(\hat{\mathbf{p}},\check{\mathbf{p}}:(0,1)\to(0,1)\) such that \(q<\hat{\mathbf{p}}(q)<\check{\mathbf{p}}(q)\) for all \(q\in(0,1)\), and
	\begin{equation}\label{eq.4}
		\frac{\partial}{\partial\mu}MRS(\pi|\mu) \left\{\begin{array}{ll}
			<0 & \text{if \(p<\hat{\mathbf{p}}(q)\) or \(p>\check{\mathbf{p}}(q)\)} \\
			=0 & \text{if \(p=\hat{\mathbf{p}}(q)\) or \(p=\check{\mathbf{p}}(q)\)} \\
			>0 & \text{if \(\hat{\mathbf{p}}(q)<p<\check{\mathbf{p}}(q)\)}
		\end{array}\right..
	\end{equation}
	Moreover, \(\hat{\mathbf{p}}(q)\) is decreasing in \(C_g\) and increasing in \(C_b\), and \(\check{\mathbf{p}}(q)\) is increasing in \(C_g\) and decreasing in \(C_b\) for all \(q\in(0,1)\).
\end{pro}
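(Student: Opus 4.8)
The plan is to collapse the two–dimensional sign problem for \(\partial_\mu MRS\) to a one–dimensional one. I would first record, as in the proof of Proposition \ref{pro3}, that the cost admits the closed form \(c(\pi|\mu)=C_g\mu\,d(p\|q)+C_b(1-\mu)\,d(q\|p)\) with \(d(x\|y)=x\ln\frac{x}{y}+(1-x)\ln\frac{1-x}{1-y}\), which is affine in \(\mu\). Differentiating \(f(\pi,\mu)=\mu p+(1-\mu)q-c(\pi|\mu)\) and abbreviating \(L:=\ln\frac{p(1-q)}{q(1-p)}\), \(A:=\frac{p-q}{p(1-p)}\), \(B:=\frac{p-q}{q(1-q)}\) (all nonnegative on \(\Pi^\circ\)), one gets \(\partial_p f=\mu(1-C_gL)-C_b(1-\mu)A\) and \(\partial_q f=(1-\mu)(1+C_bL)+C_g\mu B>0\). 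Writing \(a=1-C_gL\), \(b=C_bA\), \(c=1+C_bL\), \(d=C_gB\), so that \(\partial_p f=\mu(a+b)-b\) and \(\partial_q f=\mu(d-c)+c\) are affine in \(\mu\), a short computation of \(\partial_\mu\!\left(-\partial_p f/\partial_q f\right)\) shows the \(\mu\)-dependence in the numerator cancels, leaving
\[\mathrm{sign}\,\partial_\mu MRS(\pi|\mu)=\mathrm{sign}\big(-(ac+bd)\big)=-\,\mathrm{sign}\big((1-C_gL)(1+C_bL)+C_gC_b\,AB\big).\]
Since \(AB=\frac{(p-q)^2}{pq(1-p)(1-q)}=e^{L}+e^{-L}-2=4\sinh^2(L/2)\), the bracket depends on \((p,q)\) only through \(L\); call it \(\phi(L):=(1-C_gL)(1+C_bL)+C_gC_b\cdot 4\sinh^2(L/2)\). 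For fixed \(q\), the map \(p\mapsto L\) is a strictly increasing bijection of \((q,1)\) onto \((0,\infty)\), so the geometry of \(\partial_\mu MRS\) is entirely governed by the sign of \(\phi\) on \((0,\infty)\).

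Next I analyze \(\phi\). One checks \(\phi''(L)=2C_gC_b(\cosh L-1)\ge 0\), so \(\phi\) is convex, and \(\phi(0)=1>0\), \(\phi(L)\to+\infty\). By Proposition \ref{pro3}, the hypothesis \(C_g>\hat{K}(C_b)\) is equivalent to the failure of single-crossing; by the displayed identity this means \(\phi(L)<0\) for some \(L>0\). Convexity then forces \(\{L:\phi(L)<0\}=(\ell_1,\ell_2)\) for some \(0<\ell_1<\ell_2<\infty\), with \(\phi>0\) off \([\ell_1,\ell_2]\). I define \(\hat{\mathbf{p}}(q)\) and \(\check{\mathbf{p}}(q)\) as the unique \(p\in(q,1)\) with \(L(p,q)=\ell_1\) and \(=\ell_2\) respectively — explicitly \(\hat{\mathbf{p}}(q)=qe^{\ell_1}/(1-q+qe^{\ell_1})\), and likewise for \(\check{\mathbf{p}}(q)\). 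Because \(1<e^{\ell_1}<e^{\ell_2}<\infty\) and \(t\mapsto qt/(1-q+qt)\) is increasing with \(1\mapsto q\) and \(\infty\mapsto 1\), this yields \(q<\hat{\mathbf{p}}(q)<\check{\mathbf{p}}(q)<1\); and (\ref{eq.4}) is immediate from the displayed sign identity together with the monotonicity of \(p\mapsto L\).

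For the comparative statics I apply the implicit function theorem to \(\phi(\ell_i;C_g,C_b)=0\), which is legitimate because strict convexity with \(\phi(0),\phi(\infty)>0\) forces \(\partial_L\phi(\ell_1)<0<\partial_L\phi(\ell_2)\). Eliminating the term \(C_gC_b\cdot 4\sinh^2(L/2)\) via \(\phi=0\), one finds at any root \(\partial_{C_g}\phi=-(1+C_bL)/C_g<0\) and \(\partial_{C_b}\phi=(C_gL-1)/C_b\). The \(C_g\)-statics follow at once: \(\partial_{C_g}\ell_1<0<\partial_{C_g}\ell_2\), so \(\hat{\mathbf{p}}(q)\) decreases and \(\check{\mathbf{p}}(q)\) increases in \(C_g\). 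For \(C_b\), the sign of \(\partial_{C_b}\phi\) at a root is a priori ambiguous, so I first show \(\ell_1>1/C_g\): using \(4\sinh^2(L/2)=2\cosh L-2\ge L^2\), the expansion \(\phi(L)=1+(C_b-C_g)L-C_gC_bL^2+C_gC_b\cdot 4\sinh^2(L/2)\ge 1+(C_b-C_g)L\) holds for all \(L\), and on \([0,1/C_g]\) the right-hand side is at least \(1+(C_b-C_g)/C_g=C_b/C_g>0\) since \(C_b<C_g\) (which holds because \(\hat{K}(C_b)>C_b\)); hence \(\phi>0\) on \([0,1/C_g]\) and so \(\ell_1>1/C_g\), whence a fortiori \(\ell_2>1/C_g\). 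Therefore \(\partial_{C_b}\phi=(C_gL-1)/C_b>0\) at both roots, giving \(\partial_{C_b}\ell_1>0\) and \(\partial_{C_b}\ell_2<0\); since \(p\mapsto L\) is increasing for fixed \(q\), \(\hat{\mathbf{p}}(q)\) increases and \(\check{\mathbf{p}}(q)\) decreases in \(C_b\).

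The main obstacle is the \(C_b\) comparative static: unlike \(\partial_{C_g}\phi\), the quantity \(\partial_{C_b}\phi=(C_gL-1)/C_b\) does not have a fixed sign on \((0,\infty)\), and the argument only closes after localizing the smaller root to the right of \(1/C_g\). The other place requiring care is the reduction step — checking that the \(\mu\)-dependence cancels and that the residual factor depends on \((p,q)\) only through \(L\) — but this is mechanical once the abbreviations \(a,b,c,d\) are introduced.
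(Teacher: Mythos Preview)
Your proof is correct and follows essentially the same strategy as the paper: reduce the sign of \(\partial_\mu MRS\) to a one-variable function via the log-likelihood-ratio change of variable, locate its two zeros, and then do implicit-function comparative statics. The paper parameterizes by \(t=\frac{(1-p)q}{p(1-q)}=e^{-L}\) and works with \(\Delta(t)=-\phi(-\ln t)\); it shows \(\Delta\) is single-peaked (rather than your convexity of \(\phi\)) to get exactly two zeros. The one genuine methodological difference is the \(C_b\) comparative static: the paper observes directly that \(\partial_{C_b}\Delta(t)=\ln t+C_g\big[(\ln t)^2-\tfrac{(1-t)^2}{t}\big]<0\) for \emph{all} \(t\in(0,1)\), which immediately gives the signs at both roots, whereas you take the extra step of first proving \(\ell_1>1/C_g\) in order to sign \(\partial_{C_b}\phi=(C_gL-1)/C_b\) at the roots. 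Both routes are valid; the paper's is shorter for \(C_b\), while your convexity argument for the two-root structure is arguably cleaner than the single-peakedness computation.
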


\begin{figure}[t]
	\centering
	\begin{tikzpicture}[scale=.65]
        \small
        \node[inner sep=0pt] at (5,5) {\includegraphics[width=.405\textwidth]{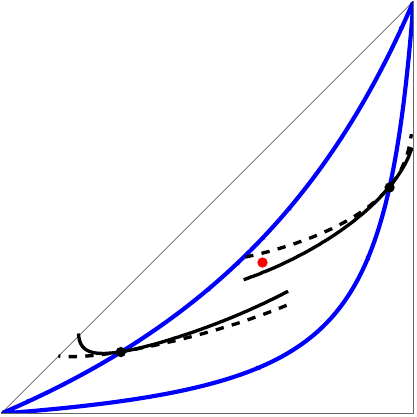}};
        \node[below] at (5,0) {$p$};
        \node[right] at (10,5) {$q$};
        \draw (9.1,8) -- (11,8) node[right] {$p=\hat{\mathbf{p}}(q)$};
        \draw (7.7,2) -- (11,2) node[right] {$p=\check{\mathbf{p}}(q)$};
        \node[above] at (2.9,1.5) {$\pi_1$};
        \node[above] at (9.2,5.5) {$\pi_2$};
        \draw (6.4,3.65) -- (11,3.65) node[right] {$\tilde{\pi}$};
        \draw[very thick] (0,9) -- (1,9) node[right] {high sender type \(h\)};
        \draw[very thick,dashed] (0,8) -- (1,8) node[right] {low sender type \(l\)};
        \node at (.5,10) {\(\uparrow\)};
        \node[right] at (1,10) {direction of increasing payoffs};
    \end{tikzpicture}
	\caption{Experiments that are robust to nearby deviations when the single-crossing property fails}\label{fig.4}
\end{figure}

\subsection{Nearby deviations}\label{sec4.3}

The marginal rate of substitution calibrates the sender's incentive to make slight deviations. If two sender types have different marginal rates of substitution at an experiment, the D1 criterion rules out equilibria where they both choose this experiment.

\begin{lmm}\label{lmm4}
	Let \(\pi\in\Pi^\circ\), and \(i,j\in\Theta\). If \(MRS(\pi|\mu_i)\neq MRS(\pi|\mu_j)\), any equilibrium such that \(\pi_i=\pi_j\) does not satisfy the D1 criterion.
\end{lmm}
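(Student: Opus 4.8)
The plan is to argue by contradiction: assume a D1 equilibrium has $\pi_i=\pi_j=\pi\in\Pi^\circ$ with $MRS(\pi|\mu_i)\neq MRS(\pi|\mu_j)$, and construct a nearby off-path experiment $\tilde\pi$ whose D1-mandated interim belief makes it a strictly profitable deviation for the highest sender type that pools on $\pi$. First I would record the structure of the pool. Let $\Theta_\pi=\{\theta:\pi_\theta=\pi\}$, so $i,j\in\Theta_\pi$ and hence $|\Theta_\pi|\ge 2$; write $\bar\theta:=\max\Theta_\pi$. Because $\pi$ is informative ($p>q$ on $\Pi^\circ$) its cost is strictly positive, so $\pi$ must be persuasive at $\beta_0:=\beta(\pi)$, since otherwise each type in $\Theta_\pi$ earns $-c(\pi|\mu_\theta)<0$ and would strictly prefer the uninformative experiment. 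Thus $v_\theta^\star=f(\pi,\mu_\theta)$ for $\theta\in\Theta_\pi$, and $v_\theta^\star\ge f(\pi,\mu_\theta)$ for every $\theta$ (any type can switch to the on-path experiment $\pi$, whose interim belief is $\beta_0$). Let $\beta_0^\ast$ be the belief with $\mathbf{Q}(\beta_0^\ast)=q/p$, so $\pi$ is persuasive exactly on $[\beta_0^\ast,1)$ and $\beta_0^\ast\le\beta_0$; moreover $\beta_0<\mu_{\bar\theta}$, since $\beta_0$ is a strict convex combination of $\{\mu_\theta:\theta\in\Theta_\pi\}$.

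Next I would build $\tilde\pi$. Propositions \ref{pro3} and \ref{pro3_part2} imply that, for a fixed experiment, the sign of $\partial MRS(\pi|\mu)/\partial\mu$ does not depend on $\mu$; since $MRS(\pi|\mu_i)\neq MRS(\pi|\mu_j)$, the experiment $\pi$ lies on neither tangency locus, so $\mu\mapsto MRS(\pi|\mu)$ is strictly monotone on $(0,1)$, say strictly decreasing (the increasing case is symmetric, with $d$ below replaced by $-d$). Choose $\mu^\dagger\in(\beta_0^\ast,\mu_{\bar\theta})\setminus\{\mu_1,\dots,\mu_N\}$; this interval is nonempty because $\beta_0^\ast\le\beta_0<\mu_{\bar\theta}$. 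Set $m:=MRS(\pi|\mu^\dagger)$, $d:=(1,m)$, and $\tilde\pi:=\pi+\varepsilon d$ for small $\varepsilon>0$; as $\varepsilon$ ranges over a small interval the point $\tilde\pi$ traces a segment meeting the finite set $\{\pi_\theta\}_\theta$ at most finitely often, so a generic small $\varepsilon$ keeps $\tilde\pi\in\Pi^\circ\setminus\{\pi_\theta\}_\theta$. The directional derivative of $f(\cdot,\mu_\theta)$ along $d$ equals $(\partial f/\partial q)(m-MRS(\pi|\mu_\theta))$ with $\partial f/\partial q>0$, so by strict monotonicity of $MRS(\pi|\cdot)$ it has the sign of $\mu_\theta-\mu^\dagger$; taking $\varepsilon$ small (finitely many types) yields $f(\tilde\pi,\mu_\theta)<f(\pi,\mu_\theta)$ whenever $\mu_\theta<\mu^\dagger$, and $f(\tilde\pi,\mu_{\bar\theta})>f(\pi,\mu_{\bar\theta})=v_{\bar\theta}^\star$. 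Since $q/p\le\mathbf{Q}(\beta_0)<\mathbf{Q}(\mu_N)$ and $\mu^\dagger>\beta_0^\ast$, for $\varepsilon$ small the persuasion threshold $\beta^\ast(\tilde\pi)$ of $\tilde\pi$ satisfies $\beta^\ast(\tilde\pi)<\mu^\dagger$ (and $<\mu_N$), and the bad outcome of $\tilde\pi$ gives posterior $<\mu_N<\bar\beta$ at every belief in $[\mu_1,\mu_N]$, so it is never persuasive.

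Now the D1 step. From the previous paragraph $\bar v(\beta,\tilde\pi|\theta)$ is a step function of $\beta$: it equals $-c(\tilde\pi|\mu_\theta)<0\le v_\theta^\star$ for $\beta<\beta^\ast(\tilde\pi)$ and $f(\tilde\pi,\mu_\theta)$ for $\beta\ge\beta^\ast(\tilde\pi)$. Hence every nonempty $D_\theta(\tilde\pi)$ or $D_\theta^0(\tilde\pi)$ is the one interval $[\max(\beta^\ast(\tilde\pi),\mu_1),\mu_N]$, with $D_\theta^0(\tilde\pi)=\emptyset$ iff $f(\tilde\pi,\mu_\theta)<v_\theta^\star$ and $D_\theta(\tilde\pi)\neq\emptyset$ iff $f(\tilde\pi,\mu_\theta)>v_\theta^\star$. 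So $D_{\bar\theta}(\tilde\pi)\neq\emptyset$, while $D_\theta^0(\tilde\pi)=\emptyset$ for every $\theta$ with $\mu_\theta<\mu^\dagger$ (because $f(\tilde\pi,\mu_\theta)<f(\pi,\mu_\theta)\le v_\theta^\star$). Thus $\tilde\pi$ is strictly equilibrium dominated for every type with prior below $\mu^\dagger$ while $D_{\bar\theta}(\tilde\pi)$ is nonempty, so by the D1 criterion (the special case noted after its definition) the off-path belief obeys $\beta(\tilde\pi)\in\mathrm{conv}\{\mu_\theta:\mu_\theta>\mu^\dagger\}$, whence $\beta(\tilde\pi)>\mu^\dagger>\beta^\ast(\tilde\pi)$. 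But then $\tilde\pi$ is persuasive at $\beta(\tilde\pi)$, so switching to $\tilde\pi$ gives type $\bar\theta$ the payoff $f(\tilde\pi,\mu_{\bar\theta})>v_{\bar\theta}^\star$, contradicting optimality of her equilibrium experiment.

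The step I expect to be most delicate is the construction of $\tilde\pi$: turning ``the indifference curves of $i$ and $j$ cross transversally at $\pi$'' into a perturbation that D1 forces to be evaluated at a belief strictly above $\tilde\pi$'s persuasion threshold. This leans on the fixed-$\pi$ monotonicity of $MRS(\pi|\cdot)$ from Propositions \ref{pro3} and \ref{pro3_part2} (together with a small observation that ``monotone and non-constant'' is here ``strictly monotone''), and on placing $\mu^\dagger$ between $\beta_0^\ast$ and $\mu_{\bar\theta}$ so that one direction $d$ simultaneously hurts all types below $\mu^\dagger$ and helps $\bar\theta$. Two degenerate cases should be handled separately: if $\beta_0^\ast<\mu_1$ then $\tilde\pi$ is persuasive at every belief in $[\mu_1,\mu_N]$ and $\bar\theta$'s profitable deviation is immediate without invoking D1; and one must track that $v_\theta^\star\ge f(\pi,\mu_\theta)$ holds with equality for pooling types and possibly strict inequality otherwise, which is what keeps the domination strict off the equilibrium path.
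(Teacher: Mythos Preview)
Your proof is correct and follows essentially the same approach as the paper's: both argue by contradiction, single out the highest pooling type, exploit the fact (from Propositions~\ref{pro3} and~\ref{pro3_part2}) that the sign of $\partial MRS(\pi|\mu)/\partial\mu$ is independent of $\mu$, and construct a nearby off-path experiment that D1 forces the receiver to attribute to a high type, making it a profitable deviation. The mechanics differ slightly: the paper works with indifference curves $\mathfrak{q}_\theta$ through $\pi$ and picks $\tilde\pi$ with $\tilde q\in(\mathfrak{q}_i(\tilde p),\mathfrak{q}_{i-1}(\tilde p))$, thereby separating the highest pooling type $i$ from \emph{all} lower types in one step; you instead move along the direction tangent to a fictitious type $\mu^\dagger$'s indifference curve, which separates types above and below $\mu^\dagger$. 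Your route is a bit more elaborate (you must then argue $\beta(\tilde\pi)>\mu^\dagger>\beta^\ast(\tilde\pi)$ rather than directly $\beta(\tilde\pi)\ge\mu_i$), but it is also more explicit about two points the paper glosses over: that $v_\theta^\star\ge f(\pi,\mu_\theta)$ holds for \emph{all} $\theta$ (not just pooling types), and the degenerate case where $\tilde\pi$ is already persuasive at $\mu_1$ so D1 is not needed.
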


We illustrate Lemma \ref{lmm4} using Figure \ref{fig.2}. For simplicity, assume that there are only two sender types \(l<h\), and there is a pooling equilibrium where they both choose \(\pi\). The experiment \(\pi\) is persuasive at \(\mu_0\), as it is the receiver's interim belief in the pooling equilibrium. Hence, there is a \emph{nearby deviation} \(\tilde{\pi}\) that is persuasive at the receiver's highest possible interim belief \(\mu_h\). For the high-type sender, deviating to \(\tilde{\pi}\) is profitable if the receiver's interim belief is sufficiently high. Specifically, if \(\beta(\tilde{\pi})=\mu_h\), the sender gets \(f(\tilde{\pi},\mu_h)\), which is higher than her equilibrium payoff \(f(\pi,\mu_h)\). In contrast, this deviation is strictly equilibrium dominated for the low-type sender. Her payoff from deviating is at most \(f(\tilde{\pi},\mu_l)\), which is strictly less than her equilibrium payoff \(f(\pi,\mu_l)\). Therefore, the D1 criterion asserts that it is only reasonable to attribute the deviation to the high type, that is, \(\beta(\tilde{\pi})=\mu_h\). But this makes \(\tilde{\pi}\) a profitable deviation for the high type. Therefore, there is no D1 pooling equilibrium where both types choose \(\pi\).

As a result of Lemma \ref{lmm4}, when the single-crossing property is satisfied, no pooling or partial pooling equilibrium satisfies the D1 criterion. In any D1 equilibrium, except for those who choose the uninformative experiment, all sender types choose distinct experiments. When the single-crossing property fails, pooling or partial pooling equilibria may be selected by the D1 criterion. We say an experiment \(\pi=(p,q)\) is \emph{robust to nearby deviations} if \(p=\hat{\mathbf{p}}(q)\). For example, at experiment \(\pi_1\) in Figure \ref{fig.4}, the high type sender's indifference curve is lower than the low type sender's indifference curve, hence any nearby deviation that is profitable for the high-type sender is also profitable for the low-type sender. Therefore, the D1 criterion is moot in determining the receiver's beliefs after seeing these nearby deviations. In contrast, if \(p=\check{\mathbf{p}}(q)\) (e.g., at experiment \(\pi_2\)), the high-type sender's indifference curve is lower. Therefore, we can find a nearby deviation (e.g., \(\tilde{\pi}\)) that is profitable only for the high-type sender but strictly equilibrium dominated for the low-type sender. This eliminates \(\pi_2\) as a pooling D1 equilibrium outcome.

\subsection{Large deviations}\label{sec4.4}

A pooling equilibrium outcome that is robust to nearby deviations does not necessarily satisfy the D1 criterion. We will illustrate in Figure \ref{fig.6} that there may be a \emph{large deviation} that is profitable only for the high-type sender.

\begin{figure}[t]
	\centering
	\begin{tikzpicture}[scale=.65]
        \small
        \node[inner sep=0pt] at (5,5) {\includegraphics[width=.405\textwidth]{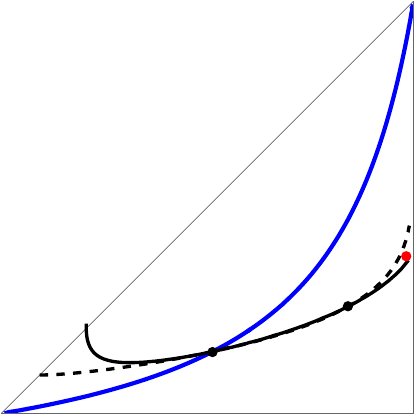}};
        \node[below] at (5,0) {$p$};
        \node[right] at (10,5) {$q$};
        \draw (9.6,8) -- (11,8) node[right] {$p=\hat{\mathbf{p}}(q)$};
        \draw (9.9,3.8) -- (11,3.8) node[right] {$\tilde{\pi}$};
        \node[above] at (5.1,1.5) {$\pi$};
        \node[above] at (8.4,2.6) {$\pi^\prime$};
        \draw[very thick] (0,9) -- (1,9) node[right] {high sender type \(h\)};
        \draw[very thick,dashed] (0,8) -- (1,8) node[right] {low sender type \(l\)};
        \node at (.5,10) {\(\uparrow\)};
        \node[right] at (1,10) {direction of increasing payoffs};
    \end{tikzpicture}
	\caption{An example of a large deviation that breaks a pooling equilibrium}\label{fig.6}
\end{figure}

Let \(\pi=(\hat{\mathbf{p}}(q),q)\) be a pooling equilibrium outcome. The indifference curves of the two sender types are tangent at \(\pi\), but they intersect again to the right of \(\pi\), at \(\pi^\prime\), and the low-type sender's indifference curve is steeper at \(\pi^\prime\). Therefore, we can find a deviation close to \(\pi^\prime\), say \(\tilde{\pi}\), that is above the high-type sender's indifference curve and below the low-type sender's indifference curve. Two cases are possible. If \(\tilde{\pi}\) is persuasive at belief \(\mu_h\), then this deviation is profitable for the high-type sender for some interim belief of the receiver, but it is strictly equilibrium dominated for the low-type sender. In this case, the D1 criterion requires that the receiver attributes this deviation to the high sender type, thus breaking the pooling equilibrium. However, if \(\tilde{\pi}\) is unpersuasive even at belief \(\mu_h\)--that is, the receiver always chooses the low action--then \(\tilde{\pi}\) is not a profitable deviation for either type of the sender. In this case, the D1 criterion does not impose any restriction on the receiver's interim belief and cannot rule out the pooling equilibrium outcome \(\pi\).

We say an equilibrium \(\pi=(\hat{\mathbf{p}}(q),q)\) is \emph{robust to large deviations} if, for all experiments \(\tilde{\pi}\) that are persuasive at belief \(\mu_N\), \(f(\tilde{\pi},\mu_N)>f(\pi,\mu_N)\) implies \(f(\tilde{\pi},\mu_{\theta})>f(\pi,\mu_{\theta})\) for all \(\theta\in\Theta\). In words, if a deviation is profitable for the highest sender type, it is profitable for \emph{all lower} sender types, so the highest sender type cannot convince the receiver of her type by choosing an off the equilibrium path experiment.

Intuitively, this requires that \(\pi^\prime\) in Figure \ref{fig.6} be unpersuasive at any belief lower than \(\mu_N\). The following lemma shows that, as \(\pi\) moves higher along the curve \(p=\hat{\mathbf{p}}(q)\), \(\pi^\prime\) is also higher. Hence, restricting to experiments that are robust to nearby deviations, robustness to large deviations is equivalent to a lower bound on \(q\).

\begin{lmm}\label{lmm.a6}
	Let \(C_g>\hat{K}(C_b)\). There exists \(\hat{q}\geq 0\) such that an experiment \((\hat{\mathbf{p}}(q),q)\) that is persuasive at belief \(\mu_N\) is robust to large deviations if and only if \(q\geq\hat{q}\).
\end{lmm}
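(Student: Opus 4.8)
The plan is to turn ``robustness to large deviations'' into a geometric statement about the indifference curves through $\pi=(\hat{\mathbf{p}}(q),q)$, exploit the fact (established in the proof of Proposition~\ref{pro3}) that for each fixed experiment the cost, hence $f(\pi,\cdot)$, is affine in the sender's prior, and finally reduce everything to a monotone comparison in $q$.

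\emph{Geometric reformulation.} Since $f(\cdot,\mu)$ is affine in $\mu$, if the type-$\theta$ and type-$N$ indifference curves through $\pi$ meet at a point $\tilde\pi\neq\pi$ then $f(\tilde\pi,\mu)=f(\pi,\mu)$ for \emph{all} $\mu$; hence all types' indifference curves through $\pi$ share a common second point, which is the point $\pi'$ of Section~\ref{sec4.4} and Figure~\ref{fig.6}. The same affineness shows that if a deviation $\tilde\pi$ satisfies $f(\tilde\pi,\mu_N)>f(\pi,\mu_N)\ge f(\tilde\pi,\mu_\theta)$ for some $\theta<N$, then $f(\tilde\pi,\mu_1)\le f(\pi,\mu_1)$ as well (the affine map $\mu\mapsto f(\tilde\pi,\mu)-f(\pi,\mu)$ is positive at $\mu_N$ and $\le 0$ at $\mu_\theta<\mu_N$, hence increasing, hence $\le 0$ at $\mu_1\le\mu_\theta$). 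So failure of robustness to large deviations is equivalent to the existence of a $\tilde\pi$ persuasive at $\mu_N$ lying on the high-payoff side of the type-$N$ curve $I_N$ and on the low-payoff side of the type-$1$ curve $I_1$. By Proposition~\ref{pro3_part2} and~(\ref{eq.4}), $I_1$ and $I_N$ are tangent at $\pi$ with $I_N$ the more convex curve, so $I_N$ lies above $I_1$ on both sides of $\pi$ and the two curves cross again exactly at $\pi'$, beyond which $I_N$ lies below $I_1$. Hence the experiments on the high-payoff side of $I_N$ and the low-payoff side of $I_1$ form the bounded lens pinched at $\pi'$ and lying beyond it, and robustness to large deviations holds iff this lens contains no experiment persuasive at $\mu_N$.

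\emph{Reduction to $\pi'$ and the threshold.} I would then show that the lens meets the persuasion region $\{(p,q):q/p\le\mathbf{Q}(\mu_N)\}$ iff its vertex $\pi'=(p',q')$ does, i.e.\ iff $q'/p'<\mathbf{Q}(\mu_N)$; equivalently, $\pi$ is robust to large deviations iff $q'/p'\ge\mathbf{Q}(\mu_N)$, that is, $\pi'$ is unpersuasive at every belief below $\mu_N$. This uses the sign of the indifference-curve slopes inside the band $\hat{\mathbf{p}}(q)<p<\check{\mathbf{p}}(q)$ (where $\pi'$ lives) from~(\ref{eq.4}) to check that $q/p$ is minimized over the closed lens at $\pi'$. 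The substantive step is the comparative statics asserted in the text: as $q$ increases and $\pi$ slides up the curve $p=\hat{\mathbf{p}}(q)$, the common second point $\pi'(q)$ moves so that $q'(q)/p'(q)$ is strictly increasing. Writing $f(p,q,\mu)=(1-\mu)A(p,q)+\mu B(p,q)$ with $A,B$ the explicit expressions coming from the log-likelihood-ratio cost, $\pi'(q)$ is pinned down by $A(\pi')=A(\hat{\mathbf{p}}(q),q)$ and $B(\pi')=B(\hat{\mathbf{p}}(q),q)$; differentiating this two-equation system in $q$ (the intersection at $\pi'$ being transversal, so the relevant Jacobian is invertible) and using $\hat{\mathbf{p}}'$ together with the monotonicity properties of $\hat{\mathbf{p}}$ from Proposition~\ref{pro3_part2}, I would obtain $\frac{d}{dq}(q'/p')>0$. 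Given this and continuity of $\pi'(q)$, the set of feasible $q$ (those with $(\hat{\mathbf{p}}(q),q)$ persuasive at $\mu_N$) for which $q'/p'\ge\mathbf{Q}(\mu_N)$ is an upward-closed interval, and $\hat q\ge 0$ is its left endpoint ($\hat q=0$ when the inequality holds throughout the feasible range).

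\emph{Main obstacle.} The crux is signing $\frac{d}{dq}(q'/p')$. The difficulty is that $\pi'$ is defined only implicitly, as the non-tangential intersection of two indifference curves that are themselves moving with $q$, so the total derivative is a quotient of several terms of a priori ambiguous sign; pinning them down requires feeding in both the explicit cost (so that $A,B$ and their partials are computable) and the qualitative facts about $\hat{\mathbf{p}}$ from Propositions~\ref{pro3} and~\ref{pro3_part2}, including the convexity ordering of the curves at the tangency. The vertex reduction of the previous paragraph is a secondary technical point: one must rule out that the lens curves back toward the persuasion frontier away from $\pi'$, which again rests on the slope information in~(\ref{eq.4}).
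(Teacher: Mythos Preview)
Your overall strategy is viable but differs from the paper's, and one subsidiary claim is false. The paper does not track the ratio $q'/p'$ at the common second intersection $\pi'$. Instead it fixes $\bar Q=\mathbf{Q}(\mu_N)$ and, for each prior $\mu$, looks at the second crossing $p_2(\mu,q)$ of the type-$\mu$ indifference curve through $\pi$ with the boundary line $q=\bar Q p$, proving that robustness holds iff $p_2(\cdot,q)$ is weakly decreasing in $\mu$. The threshold $\hat q$ is the (unique, when it exists) $q$ at which $p_2$ is constant in $\mu$---equivalently, at which $\pi'$ sits exactly on the boundary line---and both uniqueness and the threshold property follow from signing the single cross-partial $\partial^2 p_2/\partial\mu\partial q$ at that one point. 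After computing $\partial p_2/\partial q$, this reduces to showing that the one-variable function $g(t)=n(t)/m(t)$ is decreasing, using the sign information on $M,N,J,K$ from Lemmas~\ref{lmm.a7} and~\ref{lmm.a8}. Your route instead asks for the \emph{global} inequality $d(q'/p')/dq>0$; this would pin down the same $\hat q$ but is strictly stronger than what the paper establishes, requires carrying an implicit-function calculation through a quotient, and needs a separate treatment of the case where $\pi'$ fails to exist in $\Pi^\circ$.

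The false claim: you place $\pi'$ in the middle band $\hat{\mathbf{p}}(q)<p<\check{\mathbf{p}}(q)$. At $\pi'$ the curves $I_1$ and $I_N$ cross transversally with $I_1$ passing from below to above $I_N$, so $MRS(\pi'|\mu_1)>MRS(\pi'|\mu_N)$, and~(\ref{eq.4}) then forces $p'>\check{\mathbf{p}}(q')$; the paper uses precisely this fact. Your vertex reduction is nevertheless correct, but for a different reason: each indifference curve is convex and meets the line $q=\bar Q p$ exactly twice, so $I_N$ lies strictly above the line for all $p>p_2(\mu_N,q)$; hence if $\pi'$ is on or above the line then $p'\ge p_2(\mu_N,q)$, and the whole lens (which is bounded below by $I_N$ for $p>p'$) lies above the line as well.
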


\section{Pooling and Separating Equilibria}\label{sec5}

We characterize in this section equilibria satisfying the D1 criterion. Proposition \ref{pro5} characterizes the set of all D1 pooling equilibrium outcomes when the single-crossing property fails, and Proposition \ref{pro6} characterizes the unique equilibrium outcome when the single-crossing property holds.

It is useful to define
\begin{equation*}
	V(\mu,\beta) := \sup\{f(\pi,\mu):~\pi~\text{is persuasive at belief}~\beta\}.
\end{equation*}
Intuitively, this is the sender's equilibrium payoff in a symmetric information version of the game where the sender and the receiver have prior beliefs \(\mu\) and \(\beta\), respectively, and they agree to disagree.\footnote{That is, there is no sender private information. The sender publicly chooses an experiment, and the receiver takes an action having observed the outcome of that experiment. The receiver learns about the state of the world only from the experiment's outcome and, in equilibrium, updates her belief from her prior using Bayes' rule.} Since \(f(\pi,\mu)\) is strictly convex in \(\pi\), if \(V(\mu,\beta)>0\), the supremum is uniquely obtained at some experiment \(\pi\); if \(V(\mu,\beta)=0\), \(f(\pi,\mu)<0\) for all experiments \(\pi\) that are persuasive at belief \(\beta\), and the sender obtains zero payoff only by choosing the uninformative experiment.\footnote{Notice that \(\lim_{p\downarrow 0}f((p,\mathbf{Q}(\beta)p),\mu)=0\). Therefore, \(V(\mu,\beta)\) is nonnegative.}

In our model with sender private information, \(V(\mu_\theta,\mu_1)\) is the sender's \emph{payoff guarantee}. Since the receiver's interim belief is at least \(\mu_1\), the sender gets \(f(\pi,\mu_\theta)\) regardless of the receiver's interim belief from deviating to an experiment \(\pi\) that is persuasive at belief \(\mu_1\). Therefore, there is no profitable deviation only if her equilibrium payoff is at least \(V(\mu_\theta,\mu_1)\).

The notation \(V(\mu,\beta)\) also provides a convenient way to describe the cost of experiments. In Lemma \ref{lmm.a1}, we show that \(V(\mu,\beta)>0\) if and only if a convex combination of \(C_g\) and \(C_b\) is less than some threshold, that is, the ``average'' cost of experimenting is sufficiently low.

\subsection{Pooling equilibria}

A pooling equilibrium outcome is selected by the D1 criterion if and only if it is robust both to nearby deviations and to large deviations. The following proposition characterizes the set of all D1 pooling equilibrium outcomes.

\begin{pro}\label{pro5}
	Let \(C_g>\hat{K}(C_b)\). An experiment \(\pi=(p,q)\) is a D1 pooling equilibrium outcome if and only if:
	\begin{enumerate}[label=(\roman*)]
		\item The receiver's obedience constraint: \(\frac{q}{p}\leq\mathbf{Q}(\mu_0)\);
		\item The lowest sender type's participation constraint: \(f(\pi,\mu_1)\geq V(\mu_1,\mu_1)\);
		\item Robustness to nearby deviations: \(p=\hat{\mathbf{p}}(q)\);
		\item Robustness to large deviations: \(q\geq\hat{q}\), where \(\hat{q}\) is defined in Lemma \ref{lmm.a6}.
	\end{enumerate}
\end{pro}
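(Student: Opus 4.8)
The plan is to prove that (i)--(iv) are jointly necessary and jointly sufficient for $\pi=(p,q)$ to be a D1 pooling equilibrium outcome. The one structural fact I will lean on throughout is already contained in the proof of Proposition~\ref{pro3}: for a fixed experiment $\pi'$ the cost $c(\pi'|\mu)$ is affine in the belief $\mu$, hence so is $f(\pi',\mu)$, and therefore $\mu\mapsto f(\pi',\mu)-f(\pi,\mu)$ is affine. Together with the fact that an off-path experiment always receives an interim belief in $[\mu_1,\mu_N]$, this collapses the question ``which sender types would want to deviate to $\pi'$?'' to the sign pattern of an affine function on the ordered grid $\mu_1<\dots<\mu_N$; in particular such a difference can fail to be single-signed only by being positive on an initial segment of types and nonpositive on the complementary final segment, or vice versa. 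This one-dimensionality is what makes the case analysis manageable.

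For necessity, assume $\pi$ is the outcome of a D1 pooling equilibrium. Bayes' rule forces $\beta(\pi)=\mu_0$; since a pooling equilibrium is a persuasion equilibrium, the good outcome must induce the high action at belief $\mu_0$, which rearranges exactly to the obedience constraint (i) (and forces $\pi$ to be informative). For (ii) I would invoke the payoff-guarantee argument: every off-path experiment receives belief $\ge\mu_1$, so any $\pi'$ persuasive at $\mu_1$ is persuasive at its interim belief and yields type $1$ the payoff $f(\pi',\mu_1)$; optimality then gives $v_1^\star=f(\pi,\mu_1)\ge V(\mu_1,\mu_1)$. For (iii), Lemma~\ref{lmm4} forces $MRS(\pi|\mu_1)=MRS(\pi|\mu_N)$; since by Proposition~\ref{pro3_part2} the sign of $\frac{\partial}{\partial\mu}MRS(\pi|\mu)$ is independent of $\mu$, this is possible only when that derivative vanishes identically, i.e. $p\in\{\hat{\mathbf{p}}(q),\check{\mathbf{p}}(q)\}$, after which I would discard $p=\check{\mathbf{p}}(q)$ by the second-order comparison illustrated around $\pi_2$ in Figure~\ref{fig.4}: there the top type's indifference curve is locally the lowest, so some nearby experiment persuasive at $\mu_N$ is strictly profitable for the top type at belief $\mu_N$ but strictly equilibrium dominated for every lower type, and D1 would then set $\beta=\mu_N$, making that experiment a profitable deviation and breaking pooling. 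Finally, for (iv), if $q<\hat{q}$ then Lemma~\ref{lmm.a6} yields a large deviation $\tilde\pi$ (an experiment near the far crossing $\pi'$ of the top and bottom indifference curves) that is persuasive at $\mu_N$, strictly profitable there for the top type, but --- by affineness, since $f(\tilde\pi,\mu_N)>f(\pi,\mu_N)$ forces $f(\tilde\pi,\cdot)-f(\pi,\cdot)$ to have positive slope --- equilibrium dominated for a lower segment of types (using $v_\theta^\star=f(\pi,\mu_\theta)\ge 0$); D1 then raises $\beta(\tilde\pi)$ to a belief at which $\tilde\pi$ is still persuasive, again breaking the pooling equilibrium. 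Verifying that this forced belief remains in the persuasive range is where the precise location of $\tilde\pi$ supplied by Lemma~\ref{lmm.a6} is needed.

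For sufficiency, assume (i)--(iv). Step~1 is to upgrade (ii) to the full family $f(\pi,\mu_\theta)\ge V(\mu_\theta,\mu_1)$ for every $\theta$: if $\pi^\star_\theta$ attains $V(\mu_\theta,\mu_1)$ (the uninformative experiment if that value is $0$), then $\pi^\star_\theta$ is persuasive at $\mu_N$, so robustness to large deviations (iv) applied to $\pi^\star_\theta$ would force $f(\pi^\star_\theta,\mu_1)>f(\pi,\mu_1)$ whenever $f(\pi^\star_\theta,\mu_N)>f(\pi,\mu_N)$, contradicting (ii); hence $f(\pi^\star_\theta,\mu_N)\le f(\pi,\mu_N)$ and $f(\pi^\star_\theta,\mu_1)\le f(\pi,\mu_1)$, and affineness gives $f(\pi,\mu_\theta)\ge V(\mu_\theta,\mu_1)$, in particular $f(\pi,\mu_\theta)\ge 0$ for all $\theta$. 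Step~2 is to build the equilibrium: all types play $\pi$, $\beta(\pi)=\mu_0$, and the receiver follows the threshold rule (so by (i) the good outcome triggers the high action and type $\theta$ gets $f(\pi,\mu_\theta)\ge 0$). Step~3 is to assign off-path beliefs by cases on the affine difference $g(\mu)=f(\pi',\mu)-f(\pi,\mu)$: if $\pi'$ is not persuasive at $\mu_N$, or if $g\le 0$ on the grid, then $D_\theta(\pi')=\emptyset$ for all $\theta$, D1 is vacuous, and $\beta(\pi')=\mu_1$ deters the deviation; otherwise $g$ is positive at $\mu_1$ or at $\mu_N$, and if $g(\mu_N)>0$ and $\pi'$ is persuasive at $\mu_N$ then (iv) forces $g>0$ on the whole grid, so $\pi'$ cannot be persuasive at $\mu_1$ (that would violate the participation constraint just established), all types are ``equally keen'', D1 is vacuous, and $\beta(\pi')=\mu_1$ makes $\pi'$ unpersuasive, while if $g(\mu_N)\le 0$ but $g(\mu_1)>0$ then $\pi'$ is again not persuasive at $\mu_1$, D1 deletes only a final segment of types and never forces the belief below the persuasive threshold, so $\beta(\pi')=\mu_1$ works. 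In each case no type has a profitable deviation and the belief system satisfies D1, so $\pi$ is a D1 pooling equilibrium outcome; here condition (iii) is used implicitly, both to make (iv) and Lemma~\ref{lmm.a6} applicable and to supply the local geometry at $\pi$ (top indifference curve above the bottom one) that rules out the Figure~\ref{fig.2}-type nearby perturbations.

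The step I expect to be the main obstacle is Step~3 of the sufficiency direction --- checking, deviation by deviation, that the belief D1 is \emph{forced} to assign never climbs high enough to make the deviation persuasive, so that the only deviations D1 cannot neutralize are precisely those already blocked by the participation constraints of Step~1. This is where the affine (one-dimensional) structure of $f(\pi,\cdot)$, robustness to large deviations (iv), and robustness to nearby deviations (iii) must be combined; by contrast the obedience constraint (i), the lowest-type constraint (ii), and the necessity direction are comparatively mechanical once Lemmas~\ref{lmm4} and~\ref{lmm.a6} are in hand.
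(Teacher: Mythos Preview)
Your proposal is essentially correct and close in spirit to the paper's proof, but the organization and one key step differ in an interesting way, and there is one small gap.

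\textbf{Where you and the paper diverge.} The paper decomposes the sufficiency direction into two independent lemmas: first, (i)+(ii)+(iii) alone already make $\pi$ a pooling equilibrium outcome (Lemma~\ref{pro1}); second, adding (iv) makes the worst-belief equilibrium D1 (Lemma~\ref{lmm.a6_part2}). To get the all-type participation $f(\pi,\mu_\theta)\ge V(\mu_\theta,\mu_1)$ in the first lemma, the paper does \emph{not} use (iv); it instead invokes Lemma~\ref{lmm.a5}, which says that in the symmetric-information benchmark the optimal $\hat p$ is decreasing in the sender's prior, together with Lemma~\ref{lmm.a8}. Your Step~1 replaces this by a direct argument from (iv) and affineness: if the $\mu_1$-optimal experiment $\pi^\star_\theta$ beat $\pi$ at $\mu_N$, robustness to large deviations would force it to beat $\pi$ at $\mu_1$ too, contradicting (ii). This is slicker and avoids the detour through the benchmark, at the cost of not separating ``pooling equilibrium'' from ``D1 pooling equilibrium.'' Similarly, your Step~3 case analysis on the sign of the affine function $g(\mu)=f(\pi',\mu)-f(\pi,\mu)$ is a clean substitute for the indifference-curve geometry the paper uses in Lemma~\ref{lmm.a6_part2}.

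\textbf{The gap.} Your Step~1 does not fully cover the case $V(\mu_\theta,\mu_1)=0$. There you set $\pi^\star_\theta$ to be the uninformative experiment, but that experiment is not persuasive at $\mu_N$, so the robustness-to-large-deviations argument does not apply, and you have not shown $f(\pi,\mu_\theta)\ge 0$. You need this inequality both to rule out a deviation to the uninformative experiment in Step~2 and to justify ``$v_\theta^\star\ge 0$'' in Step~3. The paper handles it by citing Lemma~\ref{lmm.a8}, which gives $f((\hat{\mathbf p}(q),q),\mu)>0$ for all $\mu$; this is precisely where (iii) does additional work beyond ``local geometry.'' The fix is just to invoke that lemma. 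A second minor point: in your necessity argument for (iv), ``$f(\tilde\pi,\mu_N)>f(\pi,\mu_N)$ forces positive slope'' is not true on its own; the positive slope comes from combining this with $f(\tilde\pi,\mu_{\theta'})\le f(\pi,\mu_{\theta'})$ for the lower type supplied by the failure of robustness.
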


The receiver's obedience constraint (i) says that the experiment \(\pi\) must be persuasive at the receiver's interim belief \(\mu_0\), so the receiver is willing to take the high action following the good outcome. The participation constraint (ii) says that the lowest sender type's equilibrium payoff must be at least her payoff guarantee. Conditions (iii) and (iv), as we show earlier, ensure that the equilibrium outcome is robust to nearby deviations and large deviations. Restricting to experiments that are robust to nearby deviations, the participation constraint of the lowest sender type implies those of all sender types. This greatly simplifies the process of solving for D1 pooling equilibria.

\subsection{Uninformative equilibria}\label{sec4.5}

There exists an equilibrium where all sender types choose the uninformative experiment if the cost of experiments is sufficiently high. Specifically, if \(V(\mu_1,\mu_1)=0\), no sender type has a positive payoff guarantee. Therefore, there is an uninformative equilibrium where the receiver attributes any deviation to the lowest sender type. This equilibrium, however, may be ruled out by the D1 criterion using large deviations. The analysis is akin to that of pooling equilibria. In Lemma \ref{lmm.a9}, we show that there exists an experiment \(\pi'\in\Pi^\circ\) that gives all sender types zero payoffs, i.e., \(f(\pi',\mu_\theta)=0\) for all \(\theta\), and at \(\pi'\), the marginal rate of substitution is decreasing in the sender's type. Hence, we can find a deviation \(\tilde{\pi}\) such that \(f(\tilde{\pi},\mu_N)>0\), and \(f(\tilde{\pi},\mu_\theta)<0\) for all \(\theta<N\). If \(\tilde{\pi}\) is persuasive at belief \(\mu_N\), the uninformative equilibrium is ruled out by the D1 criterion. The following proposition shows that there exists an uninformative D1 equilibrium if and only if \(\mu_N\) is sufficiently small.

\begin{pro}\label{pro11}
	Let \(C_g>\hat{K}(C_b)\). There exists \(\bar{\mu}\in(0,\bar{\beta})\) such that an uninformative D1 equilibrium exists if and only if
	\begin{enumerate*}[label=(\roman*)]
		\item \(V(\mu_1,\mu_1)=0\), and
		\item \(\mu_N\leq\bar{\mu}\).
	\end{enumerate*}
	If \(V(\mu_N,\mu_N)=0\), the unique D1 equilibrium outcome is that all sender types choose the uninformative experiment.
\end{pro}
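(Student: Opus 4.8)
The plan is to split Proposition~\ref{pro11} into its two assertions and to build on the discussion in Section~\ref{sec4.5} together with Lemmas~\ref{lmm.a9} and~\ref{lmm.a1}. One observation is used throughout: in any uninformative equilibrium every type obtains payoff exactly $0$, since running any experiment $\pi$ while the receiver always plays the low action gives $-c(\pi|\mu_\theta)\le 0$, which is weakly dominated by the uninformative experiment; hence on the equilibrium path every type plays the uninformative experiment. \emph{Necessity of (i).} If $V(\mu_1,\mu_1)>0$, let $\pi$ be the experiment attaining it; it is persuasive at $\mu_1$ with $f(\pi,\mu_1)=V(\mu_1,\mu_1)>0$. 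Since the receiver's interim belief after any deviation is at least $\mu_1$, deviating to $\pi$ triggers the high action after the good outcome regardless of the off-path belief, so type $1$ can secure a payoff of $V(\mu_1,\mu_1)>0$, contradicting that her equilibrium payoff is $0$. Hence $V(\mu_1,\mu_1)=0$, and then $V(\mu_\theta,\mu_1)=0$ for all $\theta$ (Section~\ref{sec4.5}).

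\emph{Necessity of (ii) and construction of $\bar\mu$.} This follows the sketch in Section~\ref{sec4.5}. By Lemma~\ref{lmm.a9} there is $\pi'\in\Pi^\circ$ with $f(\pi',\mu_\theta)=0$ for all $\theta$ and $\frac{\partial}{\partial\mu}MRS(\pi'|\mu)<0$; since the types' indifference curves through $\pi'$ fan out with the highest type's curve the flattest, perturbing $\pi'$ in a suitable direction yields $\tilde\pi$ with $f(\tilde\pi,\mu_N)>0$ and $f(\tilde\pi,\mu_\theta)<0$ for every $\theta<N$, i.e., $\tilde\pi$ is strictly equilibrium dominated for all $\theta<N$. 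If such a $\tilde\pi$ can in addition be chosen persuasive at $\mu_N$, then $D_N(\tilde\pi)\neq\emptyset$, the D1 criterion forces $\beta(\tilde\pi)\ge\mu_N$, and $\tilde\pi$ becomes a strictly profitable deviation for type $N$, so no uninformative equilibrium survives D1. Thus survival requires that no perturbation of $\pi'$ of this kind be persuasive at $\mu_N$, i.e., that the relevant perturbation cone at $\pi'$ avoid $\{q/p\le\mathbf{Q}(\mu_N)\}$. Because $\mathbf{Q}(\mu_N)$ is strictly increasing in $\mu_N$ while the position of $\pi'$ and the cone at it vary continuously, this cuts out a threshold $\bar\mu$ such that a persuasive-at-$\mu_N$ separating deviation exists iff $\mu_N>\bar\mu$; one checks $\bar\mu>0$ (for small $\mu_N$, $\mathbf{Q}(\mu_N)$ is too small for any perturbation of $\pi'$ in the cone to be persuasive at $\mu_N$) and $\bar\mu<\bar\beta$ (as $\mu_N\uparrow\bar\beta$ the frontier $\mathbf{Q}(\mu_N)\uparrow 1$ becomes slack enough to meet the cone). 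This is the analogue for the uninformative equilibrium of the bound $\hat q$ in Lemma~\ref{lmm.a6}.

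\emph{Sufficiency.} Assume (i) and $\mu_N\le\bar\mu$. Take the all-uninformative profile and assign the off-path belief $\beta(\pi)=\mu_1$ wherever the D1 criterion permits. Such a deviation yields $f(\pi,\mu_\theta)\le V(\mu_\theta,\mu_1)=0$ if $\pi$ is persuasive at $\mu_1$ and $-c(\pi|\mu_\theta)\le 0$ otherwise, so it is unprofitable. D1 can force $\beta(\pi)>\mu_1$ only when $\pi$ is strictly equilibrium dominated for all $\theta<N$ and $D_N(\pi)\neq\emptyset$; any such $\pi$ is a persuasive-at-$\mu_N$ separating deviation, which is ruled out by $\mu_N\le\bar\mu$ (and one checks that ``intermediate'' separating deviations, profitable for some $n<N$ and equilibrium dominated for all $\theta<n$, are weakly harder to render persuasive, hence excluded as well). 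So the belief system is D1-consistent and the profile is a D1 equilibrium. \emph{Part (b).} If $V(\mu_N,\mu_N)=0$, every experiment persuasive at an achievable interim belief $\beta\le\mu_N$ is persuasive at $\mu_N$ and therefore satisfies $f(\cdot,\mu_N)<0$; hence in every equilibrium type $N$ obtains at most $0$, thus exactly $0$, and plays the uninformative experiment. If some type played an informative experiment with positive payoff, let $m$ be the largest such; then only types $\le m$ choose its experiment $\pi_m$, so $\beta(\pi_m)\le\mu_m$, whence $\pi_m$ is persuasive at $\mu_m$ and $0<f(\pi_m,\mu_m)\le V(\mu_m,\mu_m)$. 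It therefore suffices to show $V(\mu,\mu)=0$ for every $\mu\le\mu_N$, which I would obtain from the explicit threshold in Lemma~\ref{lmm.a1}: for $C_g>\hat K(C_b)$ the set $\{\mu:V(\mu,\mu)=0\}$ is an up-interval, so $V(\mu_N,\mu_N)=0$ and $\mu_\theta\le\mu_N$ give $V(\mu_\theta,\mu_\theta)=0$. Hence no persuasion equilibrium exists, so the all-uninformative profile (all uninformative experiments being identified) is the unique equilibrium outcome, a fortiori the unique D1 outcome; in particular (i) holds and $\mu_N\le\bar\mu$, consistently with the first part.

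\emph{Main obstacle.} The deviation arguments and the D1 belief bookkeeping are routine. The substantive work is the two comparative-statics facts: (a) that the set of $\mu_N$ admitting a persuasive-at-$\mu_N$, top-type-only separating deviation near $\pi'$ is an up-interval with threshold $\bar\mu\in(0,\bar\beta)$, and that lower-index separating deviations never bind first — this requires a careful account of how the locus $\pi'$ from Lemma~\ref{lmm.a9} and the perturbation cone at it move relative to the persuasion frontier $q/p=\mathbf{Q}(\mu_N)$; and (b) that $\{\mu:V(\mu,\mu)=0\}$ is upward closed, which I would extract from the closed form underlying Lemma~\ref{lmm.a1}.
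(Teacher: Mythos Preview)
Your necessity arguments track the paper's, but the sufficiency direction has a real gap and the definition of $\bar\mu$ is left too vague to close it.

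The paper defines $\bar\mu$ explicitly: with $\pi^\star=(p^\star,q^\star)$ the unique experiment from Lemma~\ref{lmm.a9}, set $\bar\mu=\mathbf{Q}^{-1}(q^\star/p^\star)$. Then $\mu_N>\bar\mu$ means exactly that $\pi^\star$ is persuasive at $\mu_N$, so a whole neighborhood of $\pi^\star$ is, and the separating nearby deviation you describe exists. Your ``perturbation cone'' description gestures at this but never pins down the threshold.

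The substantive gap is in sufficiency. You write that ``D1 can force $\beta(\pi)>\mu_1$ only when $\pi$ is strictly equilibrium dominated for all $\theta<N$ and $D_N(\pi)\neq\emptyset$.'' That is false: D1 rules out type~$1$ as soon as $f(\pi,\mu_1)<0$ and $f(\pi,\mu_j)>0$ for \emph{some} $j$, forcing $\beta(\pi)\ge\mu_2$. Your parenthetical that intermediate separating deviations are ``weakly harder to render persuasive'' does not address this; the issue is not persuasiveness but whether such a $\pi$ exists at all. The paper's argument is global, not local: when $\mu_N\le\bar\mu$, any $\pi$ persuasive at $\mu_N$ with $f(\pi,\mu_\theta)>0$ for some $\theta$ must satisfy $p<p^\star$, and then the ordering of the zero-payoff curves $\sigma_\mu$ through $\pi^\star$ (established in Lemma~\ref{lmm.a9}) gives $f(\pi,\mu_{\theta'})>0$ for every $\theta'<\theta$. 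Hence type~$1$ is \emph{never} ruled out by D1, and $\beta(\pi)=\mu_1$ is always admissible. This is the heart of the proof, and it is precisely the step you defer to ``one checks.''

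For part~(b), your logic is right but your terminology is reversed: you need $\{\mu:V(\mu,\mu)=0\}$ to be \emph{downward} closed (an initial interval), not an ``up-interval,'' so that $V(\mu_N,\mu_N)=0$ propagates to $V(\mu_\theta,\mu_\theta)=0$ for $\mu_\theta\le\mu_N$. The paper uses this monotonicity (asserted in the proof of Proposition~\ref{pro6}) the same way you do.
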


If \(\mu_N\leq\bar{\mu}\), all experiments \((\hat{\mathbf{p}}(q),q)\) are robust to large deviations. This is in line with Proposition \ref{pro5}, which states that, fixing \(\mu_N\), an experiment \((\hat{\mathbf{p}}(q),q)\) is robust to large deviations if and only if \(q\) is large. Moreover, when an uninformative equilibrium exists, \(V(\mu_1,\mu_1)=0\), so \((\hat{\mathbf{p}}(q),q)\) is a D1 pooling equilibrium outcome for all \(q\in(0,\bar{q}]\). That is, the uninformative D1 equilibrium coexists with a continuum of D1 pooling equilibria. Although the set of D1 pooling equilibrium outcomes is not closed, the uninformative equilibrium outcome is a limit point in the sense that, as \(q\) goes to zero, the players' equilibrium payoffs and the joint distribution of the state and the receiver's action converge to those in the uninformative equilibrium.

\subsection{Separating equilibrium under single-crossing}

We now study the case that the single-crossing property holds. The following proposition shows that, if the cost of experiments is low, the D1 criterion selects a unique separating equilibrium outcome; if the cost of experiments is high, a set of the lowest types of the sender choose the uninformative experiment, while all other sender types choose distinct experiments.

\begin{pro}\label{pro6}
	Let \(C_g\leq\hat{K}(C_b)\). There exists a unique D1 equilibrium outcome \(\{\pi_\theta\}_{\theta\in\Theta}\) such that:
	\begin{enumerate}[label=(\roman*)]
		\item \(\pi_\theta\) is the uninformative experiment if \(V(\mu_\theta,\mu_\theta)=0\).
		\item \(\pi_\theta\) is persuasive at belief \(\mu_\theta\) if \(V(\mu_\theta,\mu_\theta)>0\).
		\item The lowest sender type's equilibrium payoff \(v_1^\star\) equals \(V(\mu_1,\mu_1)\). Every sender type \(\theta\geq 2\) such that \(V(\mu_\theta,\mu_\theta)>0\) chooses the experiment \(\pi_\theta=(p_\theta,\mathbf{Q}(\mu_\theta)p_\theta)\), where \(p_\theta\) is the unique solution of
		\begin{equation}\label{eq.5}
			\begin{aligned}
				v^\star_\theta = &\max_{p} f((p,\mathbf{Q}(\mu_\theta)p),\mu_\theta) \\
				&s.t.~ f((p,\mathbf{Q}(\mu_\theta)p),\mu_{\theta-1})\leq v_{\theta-1}^\star.
			\end{aligned}
		\end{equation}
	\end{enumerate}
\end{pro}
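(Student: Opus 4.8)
The plan is to recognize the claimed profile as the least-cost separating (``Riley'') outcome and to run the standard three steps: construct the candidate by the stated recursion, verify it is a Perfect Bayesian equilibrium surviving the D1 criterion, and show every D1 equilibrium induces it. Two structural reductions come first. Since $c(\pi|\mu)$ is strictly decreasing in $q$ for fixed $p$, $f(\pi,\mu)$ is strictly increasing in $q$, so any persuasive experiment chosen optimally binds the obedience constraint and lies on the ray $q=\mathbf{Q}(\beta)p$ with $\beta$ the receiver's interim belief; along such a ray $p\mapsto f((p,\mathbf{Q}(\beta)p),\mu)$ is single-peaked and vanishes as $p\downarrow 0$, so the program in (iii) has a unique maximizer whenever its value is positive. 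Moreover every persuasive experiment chosen in equilibrium is interior (experiments on the boundary of $\Pi$ reveal the state and cost $+\infty$), so Lemma~\ref{lmm4} applies: single-crossing makes $MRS(\cdot|\mu)$ strictly decreasing in $\mu$ on $\Pi^\circ$ (with the knife-edge $C_g=\hat K(C_b)$ treated separately), hence no two sender types choose the same persuasive experiment in a D1 equilibrium. So any D1 equilibrium already has the asserted shape: a bottom block of types pooling at the uninformative experiment and the rest separating.

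For existence, I would define $v_\theta^\star$ and $\pi_\theta$ by the stated recursion and prove by induction on $\theta$, using single-crossing, that \eqref{eq.5} is feasible, that its constraint either binds or is slack with $\pi_\theta$ the unconstrained persuasive-at-$\mu_\theta$ optimum and $v_\theta^\star=V(\mu_\theta,\mu_\theta)$, that $v_\theta^\star$ is nondecreasing and the $\pi_\theta$ Blackwell-ranked and distinct, and that the full incentive matrix holds — the adjacent-upward constraint is exactly the constraint in \eqref{eq.5}, the adjacent-downward one follows from $f(\pi_{\theta-1},\mu_{\theta-1})=v_{\theta-1}^\star$ together with monotonicity of $\pi_\theta$ and single-crossing, and the non-adjacent ones by single-crossing chaining. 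I would then complete this to a PBE by setting $\beta(\pi_\theta)=\mu_\theta$ on path and, off path, assigning each $\pi$ the belief $\mu_{n(\pi)}$ with $n(\pi)$ the largest type for which $\pi$ is not equilibrium-dominated (and $\mu_1$ if $\pi$ is persuasive at no belief), and checking that no type gains; the key input is $v_\theta^\star\ge V(\mu_\theta,\mu_1)$ (every type already obtains her payoff guarantee), which follows inductively because the $V(\mu_\theta,\mu_1)$-optimal experiment is persuasive at $\mu_1$, hence gives type $\theta-1$ at most $V(\mu_{\theta-1},\mu_1)\le v_{\theta-1}^\star$ and is therefore feasible in type $\theta$'s program.

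For uniqueness, fix a D1 equilibrium; by the reduction each type $\theta$ either gets $0$ (uninformative) or gets $f(\pi,\mu_\theta)\le V(\mu_\theta,\mu_\theta)$ at interim belief $\mu_\theta$. ``No type exceeds $v_\theta^\star$'' is a bottom-up induction: type $1$ gets at most $V(\mu_1,\mu_1)=v_1^\star$ by definition of $V$; and if type $\theta$ did strictly better than $v_\theta^\star$, her on-path experiment would violate the constraint in \eqref{eq.5} and hence give type $\theta-1$ more than $v_{\theta-1}^\star$, but type $\theta-1$ could have chosen it — contradicting the hypothesis. ``No type falls below $v_\theta^\star$'' is where D1 is invoked: inductively from the bottom, if type $\theta$'s payoff were below $v_\theta^\star$, a small perturbation of $\pi_\theta$ toward larger $p$ stays persuasive at $\mu_\theta$ and strictly profitable for type $\theta$, yet is strictly equilibrium-dominated for every lower type (single-crossing carries $f(\pi_\theta,\mu_{\theta-1})\le v_{\theta-1}^\star$ down to all $\theta'<\theta$, with the perturbation making it strict); D1 then forces the off-path belief into the convex hull of $\{\mu_\theta,\dots,\mu_N\}$, making the perturbation a profitable deviation for type $\theta$ — a contradiction. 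The two directions pin down $v_\theta^\star$ and hence $\pi_\theta$, and $\pi_\theta$ is uninformative exactly when $V(\mu_\theta,\mu_\theta)=0$.

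The main obstacle I anticipate is the lower-bound (D1) step of uniqueness: for each $\theta$ one must exhibit an off-path perturbation of the least-cost separating experiment that is at once persuasive at $\mu_\theta$, strictly profitable for type $\theta$, and strictly equilibrium-dominated for every lower type, and then check that D1's convex-hull conclusion genuinely forces a belief no less than $\mu_\theta$ rather than merely allowing one that still kills the deviation. Closely related is the bookkeeping around the transition type — the largest $\theta$ with $V(\mu_\theta,\mu_\theta)=0$ — where one must verify that \eqref{eq.5} remains feasible as its constraint degenerates to $f(\cdot,\mu_{\theta-1})\le 0$, that the perturbation argument still works when $\pi_{\theta-1}$ is the uninformative experiment, and that no type pooled at the uninformative experiment profitably mimics an informative type. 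The single-crossing chaining and the upper-bound induction should be comparatively routine.
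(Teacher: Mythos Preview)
Your proposal follows essentially the same architecture as the paper's proof: reduce to separation via single-crossing and Lemma~\ref{lmm4}, build the Riley outcome by the recursion~\eqref{eq.5}, verify it is a D1 equilibrium, and prove uniqueness with a D1 perturbation argument. The paper organizes the last two steps slightly differently (it first proves that~\eqref{eq.5} is equivalent to a relaxed problem~(\ref{eq.a5}) over all persuasive-at-$\mu_\theta$ experiments, which gives the binding obedience constraint and uniqueness of $p_\theta$ in one shot, and it runs the necessity argument directly on $\pi_\theta\neq\pi_\theta^\star$ rather than via separate upper/lower bounds on payoffs), but the substance is the same.

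There is one concrete slip in your sufficiency construction. You propose the off-path belief $\mu_{n(\pi)}$ with $n(\pi)$ the \emph{largest} type for whom $\pi$ is not equilibrium-dominated. That choice does not support the equilibrium: take $\pi'$ on the ray $q=\mathbf{Q}(\mu_N)p$ near the unconstrained optimum for type~$N$; whenever the IC constraint in~\eqref{eq.5} binds for $\theta=N$ one has $f(\pi',\mu_N)>v_N^\star$, and with your belief $\beta(\pi')=\mu_N$ this is persuasive and profitable for type~$N$. The paper instead sets $\beta(\pi')=\mu_{\min\hat\Theta(\pi')}$ where $\hat\Theta(\pi')=\{\theta:f(\pi',\mu_\theta)>v_\theta^\star\}$. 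The point is that in this game every $\theta\in\hat\Theta(\pi')$ has the \emph{same} $D$-set $[\mathbf{Q}^{-1}(q'/p'),\mu_N]$, so D1 is agnostic among them; picking the minimum then lets you argue that if $\pi'$ were persuasive at $\mu_{\min\hat\Theta(\pi')}$, it would be feasible in~\eqref{eq.5} for that type and beat $\pi_{\min\hat\Theta(\pi')}^\star$, a contradiction. This is exactly the ``key input'' you flag in your last paragraph, and the fix is just to swap $\max$ for $\min$ in your belief rule.
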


The uniqueness result under single-crossing is reminiscent of \citeauthor{cho_sobel_1990}'s (\citeyear{cho_sobel_1990}) results despite differences in our settings.\footnote{\citet{cho_sobel_1990} assume that the signal space is the product of intervals, and higher sender types are more willing to send higher signals (with respect to the product order). In our game, the monotonicity assumption does not hold. Moreover, since the cost of any fully revealing experiment is infinite, the signal space consists of \(\Pi^\circ\) and the uninformative experiment, which is not compact. The structure of the unique D1 equilibrium outcome in our game also differs from that in \citet{cho_sobel_1990}. Recall that in \citet{cho_sobel_1990}, the equilibrium signal is nondecreasing in the sender's type, and some highest sender types may pool on the greatest signal.} Notice that \(V(\mu_\theta,\mu_\theta)\) is weakly increasing in the sender's type. Therefore, conditions (i) and (ii) imply that a set of the lowest sender types may choose the uninformative experiment. When the cost of experiments is so high that \(V(\mu_N,\mu_N)=0\), all sender types choose the uninformative experiment. Condition (iii) solves the equilibrium outcome by induction, starting from the lowest sender type. The intuition is that no sender type can increase her payoff without violating a lower sender type's incentive constraints. The single-crossing property allows us to solve a simplified maximization problem (\ref{eq.5}) which includes only the incentive constraint of the adjacent lower type.

Let us illustrate this in Figure \ref{fig.5}. Suppose that there are two sender types \(h>l\), and both of them have a positive payoff guarantee. That is, by choosing some experiment \(\hat{\pi}_\theta\), the type \(\theta\) sender is able to get payoff \(V(\mu_\theta,\mu_l)>0\) regardless of the receiver's belief. In a separating equilibrium, the receiver knows the sender's type at the interim stage. Hence, for the low-type sender, she chooses the experiment \(\hat{\pi}_l\) and receives her payoff guarantee in any separating equilibrium.

\begin{figure}[th]
	\centering
	\begin{tikzpicture}[scale=.65]
        \small
        \node[inner sep=0pt] at (5,5) {\includegraphics[width=.405\textwidth]{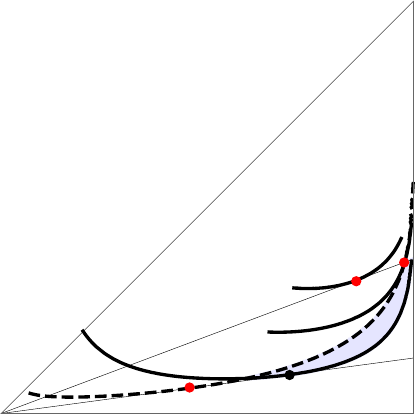}};
        \node[below] at (5,0) {$p$};
        \node[right] at (10,5) {$q$};
        \node[right] at (10,1.3) {$\mathbf{Q}(\mu_l)$};
        \node[right] at (10,3.8) {$\mathbf{Q}(\mu_h)$};
        \draw (9.8,3.7) -- (8.9,4.6) node[above left=-.1] {$\pi_h^\star$};
        \draw (8.6,3.3) -- (8.2,3.7) node[above left=-.2] {$\pi^{si}(\mu_h)$};
        \node[below] at (4.6,.59) {$\hat{\pi}_l$};
        \node[below] at (7,.82) {$\hat{\pi}_h$};
        \draw (9.5,2.4) -- (10.5,2.4) node[right] {$\Pi_h^\star$};
        \draw[very thick] (0,9) -- (1,9) node[right] {high sender type \(h\)};
        \draw[very thick,dashed] (0,8) -- (1,8) node[right] {low sender type \(l\)};
        \node at (.5,10) {\(\uparrow\)};
        \node[right] at (1,10) {direction of increasing payoffs};
    \end{tikzpicture}
	\caption{The unique D1 equilibrium outcome under single-crossing}\label{fig.5}
\end{figure}

For the high-type sender, her choice of experiment \(\pi_h=(p,q)\) has to satisfy three constraints. First, it must give her at least her payoff guarantee, i.e., \(f(\pi_h,\mu_h)\geq V(\mu_h,\mu_l)\). Second, it cannot be a profitable deviation for the low-type sender, i.e., \(f(\pi_h,\mu_l)\leq f(\hat{\pi}_l,\mu_l)\). Finally, it must be persuasive at the receiver's interim belief \(\mu_h\), i.e., \(\frac{q}{p}\leq\mathbf{Q}(\mu_h)\). The shaded area in Figure \ref{fig.5} shows the set of experiments \(\Pi_h^\star\) satisfying these constraints. Among them, \(\pi_h^\star\) uniquely maximizes the high-type sender's payoff. The solid curve through \(\pi_h^\star\) shows the high-type sender's indifference curve.

Any experiment \(\pi_h\in\Pi_h^\star\) is supported by a separating equilibrium, but the unique D1 equilibrium outcome is the one where \(\pi_h=\pi_h^\star\). Consider a separating equilibrium where \(\pi_h\neq\pi_h^\star\). We can find an experiment \(\tilde{\pi}\) slightly below the experiment \(\pi^\star_h\) which gives the high-type sender a higher payoff than \(\pi_h\). That is, \(\tilde{\pi}\) is a profitable deviation for the high type if the receiver's interim belief is high but is strictly equilibrium dominated for the low type. Hence, the D1 criterion requires that the receiver attributes this deviation to the high type. But given this off-path belief, \(\tilde{\pi}\) is a profitable deviation for the high type. Therefore, the separating equilibrium does not satisfy D1. In contrast, when the high-type sender is prescribed to choose \(\pi_h^\star\), any deviation that is profitable for the high-type sender is also profitable for the low-type sender. Hence, the D1 criterion is silent about the receiver's interim belief.

For the sender, the D1 equilibrium outcome is Pareto optimal among all separating equilibrium outcomes, that is, there does not exist a separating equilibrium outcome that gives all types of the sender weakly higher payoffs and at least one sender type a strictly higher payoff.\footnote{It is easy to see that the result holds for partial pooling equilibrium outcomes where types pool only on the uninformative experiment. That is, \(\pi_i\neq\pi_j\) for all \(i\neq j\) and \(\pi_i,\pi_j\in\Pi^\circ\).} This follows from Proposition \ref{pro6}, which states that it is not possible to increase any sender type's equilibrium payoff without also raising the equilibrium payoff of the lower adjacent type. But in all separating equilibria, the lowest sender type has an equilibrium payoff equal to her payoff guarantee.

\begin{crl}\label{crl8}
	Let \(C_g\leq\hat{K}(C_b)\). The D1 equilibrium outcome is Pareto optimal among all separating equilibrium outcomes.
\end{crl}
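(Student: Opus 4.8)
The plan is to prove the slightly stronger statement that the sender's payoff vector $(v_1^\star,\dots,v_N^\star)$ in the D1 equilibrium outcome of Proposition \ref{pro6} weakly dominates her payoff vector in \emph{every} separating equilibrium, component by component. Since that outcome is itself an equilibrium outcome, this rules out a separating equilibrium outcome giving every sender type a weakly higher payoff and some type a strictly higher one, which is exactly the claim. (The D1 outcome is itself separating unless at least two of the lowest types pool on the uninformative experiment; but even in that case the domination claim yields the corollary as stated, since any such type gets payoff $0=v_\theta^\star$ in any separating equilibrium as well.)

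Fix an arbitrary separating equilibrium and write $w_\theta$ for type $\theta$'s payoff. Because it is separating, the receiver's interim belief at $\pi_\theta$ is $\mu_\theta$, so $w_\theta=f(\pi_\theta,\mu_\theta)$ if $\pi_\theta$ is persuasive at $\mu_\theta$ and $w_\theta=0$ otherwise (a type whose experiment is unpersuasive at her own interim belief would weakly prefer the uninformative experiment, and the uninformative experiment is itself unpersuasive at $\mu_\theta<\bar\beta$). I would show $w_\theta\le v_\theta^\star$ by induction on $\theta$. For $\theta=1$, type $1$'s interim belief is $\mu_1$, so $w_1\le\sup\{f(\pi,\mu_1):\pi\text{ persuasive at }\mu_1\}=V(\mu_1,\mu_1)=v_1^\star$ by Proposition \ref{pro6}(iii). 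For the step, assume $w_{\theta-1}\le v_{\theta-1}^\star$. If $\pi_\theta$ is uninformative then $w_\theta=0\le v_\theta^\star$; otherwise $\pi_\theta$ is persuasive at $\mu_\theta$, and since $\mu_\theta\ge\mu_{\theta-1}$ its good outcome also persuades the receiver at the on-path belief $\mu_\theta$, so type $\theta-1$'s payoff from imitating type $\theta$ is $f(\pi_\theta,\mu_{\theta-1})$; optimality for type $\theta-1$ forces $f(\pi_\theta,\mu_{\theta-1})\le w_{\theta-1}\le v_{\theta-1}^\star$. Thus $\pi_\theta$ is persuasive at $\mu_\theta$ and meets the adjacent-type bound $v_{\theta-1}^\star$.

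The remaining, and main, obstacle is to see that the maximal payoff attainable subject to these two constraints is not enlarged by allowing experiments off the ray $q=\mathbf{Q}(\mu_\theta)p$ to which (\ref{eq.5}) restricts attention; granting this, $f(\pi_\theta,\mu_\theta)\le v_\theta^\star$ and the induction closes, so every separating equilibrium outcome is weakly Pareto dominated by the D1 outcome and the corollary follows. To establish it, take any $\pi=(p,q)$ persuasive at $\mu_\theta$ with $f(\pi,\mu_{\theta-1})\le v_{\theta-1}^\star$ and produce a ray experiment, feasible in (\ref{eq.5}), with weakly higher $\mu_\theta$-payoff: first raise $q$ to $\mathbf{Q}(\mu_\theta)p$ holding $p$ fixed, which weakly increases both $f(\cdot,\mu_\theta)$ and $f(\cdot,\mu_{\theta-1})$ because $f(\cdot,\mu)$ is increasing in $q$ on $\Pi^\circ$ (raising $q$ both raises the probability of the good outcome and lowers the cost); if the adjacent-type bound still holds we are done, and otherwise slide $p$ down along the ray, where $f((p,\mathbf{Q}(\mu_\theta)p),\mu_{\theta-1})$ is continuous in $p$ and tends to $0\le v_{\theta-1}^\star$ as $p\downarrow0$, so the intermediate value theorem restores the bound with equality at some smaller $p$. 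That this last ray experiment has $\mu_\theta$-payoff at least $f(\pi,\mu_\theta)$ is where the single-crossing property $C_g\le\hat{K}(C_b)$ enters: moving along type $\theta-1$'s indifference curve toward the ray raises type $\theta$'s payoff because her indifference curves are everywhere flatter, which is precisely the sorting fact exploited in the proof of Proposition \ref{pro6}; one may therefore alternatively just invoke that proof, which already bounds by $v_\theta^\star$ the highest separating-equilibrium payoff of each type.
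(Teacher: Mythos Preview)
Your proof is correct and follows essentially the same approach as the paper's brief argument in the text preceding the corollary: induct on the sender's type, anchoring at $w_1\le V(\mu_1,\mu_1)=v_1^\star$ and using the adjacent lower type's incentive constraint $f(\pi_\theta,\mu_{\theta-1})\le w_{\theta-1}\le v_{\theta-1}^\star$ to place $\pi_\theta$ in the feasible set of problem~(\ref{eq.5}), hence $w_\theta\le v_\theta^\star$. Your last paragraph's projection-to-the-ray construction is unnecessary once you note, as you do at the end, that the proof of Proposition~\ref{pro6} already shows the relaxed problem over all experiments persuasive at $\mu_\theta$ has the same value as (\ref{eq.5}); that is the cleaner route, since the sliding-$p$ step as written does not by itself guarantee $f(\pi'',\mu_\theta)\ge f(\pi,\mu_\theta)$ without further care about which root $p''$ is selected.
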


\subsection{Implications of the sender's private information}\label{sec5.4}

We compare the informativeness of the equilibrium experiment when the sender has private information to that when she has no private information. Let us revisit the costly persuasion problem studied by \citet{gentzkow_kamenica_2014}. The sender and the receiver have common prior \(\mu\), and there is no sender private information. This symmetric information game has a unique equilibrium in which the sender chooses some experiment, which we denote by \(\pi^{si}(\mu)\). If \(V(\mu,\mu)>0\), that is, if the average cost of experiments is low, \(\pi^{si}(\mu)\) is persuasive at belief \(\mu\); otherwise, \(\pi^{si}(\mu)\) is the uninformative experiment.

When the single-crossing property holds, every sender type bar the lowest chooses a Blackwell more informative experiment \(\pi^\star_\theta\) in the separating equilibrium than \(\pi^{si}(\mu_\theta)\) in order to prevent lower types from mimicking. For example, in Figure \ref{fig.5}, \(\pi^{si}(\mu_h)\) is the experiment the high-type sender would choose if the noisy signal were publicly observed. But with private information, the high-type sender chooses a Blackwell more informative experiment \(\pi^\star_h\) in the equilibrium. Therefore, the receiver learns more about the state of the world when the noisy signal is privately observed by the sender compared to when the noisy signal is publicly observed.

When the single-crossing property fails, the set of D1 pooling equilibrium outcomes is not Blackwell ordered. Recall that the set of pooling equilibrium outcomes is a continuum \(\{(\hat{\mathbf{p}}(q),q):q\in[\underline{q},\bar{q}]\}\), where the upper bound \(\bar{q}\) satisfies \(\frac{\hat{\mathbf{p}}(\bar{q})}{\bar{q}}=\mathbf{Q}(\mu_0)\). The pooling equilibrium outcome \((\hat{\mathbf{p}}(\bar{q}),\bar{q})\) is Blackwell less informative than \(\pi^{si}(\mu_0)\). That is, the receiver learns less in this equilibrium compared to a symmetric information benchmark where no one observes the noisy signal. Other pooling equilibrium outcomes are not Blackwell comparable to \(\pi^{si}(\mu_0)\).

The following proposition summarizes the results.

\begin{pro}\label{pro7}
	If \(C_g>\hat{K}(C_b)\) and the set of D1 pooling equilibrium outcomes \(\{(\hat{\mathbf{p}}(q),q):q\in[\underline{q},\bar{q}]\}\) is nonempty, \((\hat{\mathbf{p}}(\bar{q}),\bar{q})\) is Blackwell less informative than \(\pi^{si}(\mu_0)\). If \(C_g\leq\hat{K}(C_b)\), in the unique separating equilibrium, \(\pi_\theta^\star\) is Blackwell more informative than \(\pi^{si}(\mu_\theta)\) for all sender types \(\theta\) such that \(\pi_\theta^\star\neq\pi^{si}(\mu_\theta)\).
\end{pro}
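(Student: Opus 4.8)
The plan is to reduce both halves of the statement to a one-dimensional comparison along a ``persuasiveness ray,'' on which the Blackwell order becomes an order on a single coordinate, and then to feed in the structure of $\pi^{si}$ together with Propositions \ref{pro5} and \ref{pro6}. Two preliminary facts drive everything. First, whenever $V(\mu,\mu)>0$ the benchmark experiment $\pi^{si}(\mu)$ lies on the ray $q=\mathbf{Q}(\mu)p$: since a Blackwell less informative experiment is cheaper, $f(\pi,\mu)$ is strictly increasing in $q$, so the persuasiveness constraint $q/p\le\mathbf{Q}(\mu)$ binds at the benchmark; moreover $p\mapsto f((p,\mathbf{Q}(\mu)p),\mu)$ is strictly single-peaked, and since its derivative has the sign of $\mathbf{Q}(\mu)-MRS((p,\mathbf{Q}(\mu)p)|\mu)$, the peak $p^{si}(\mu)$ is the unique $p$ at which $MRS=\mathbf{Q}(\mu)$ along that ray. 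Second, on a fixed ray $q=\mathbf{Q}(\mu)p$ inside $\Pi$, Blackwell informativeness is strictly increasing in $p$, because $q/p$ is constant while $(1-p)/(1-\mathbf{Q}(\mu)p)$ is strictly decreasing in $p$ (as $\mathbf{Q}(\mu)<1$); hence two experiments on such a ray are Blackwell ordered by their $p$-coordinates.

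For the single-crossing case $C_g\le\hat{K}(C_b)$: by Proposition \ref{pro6}, every type $\theta$ with $V(\mu_\theta,\mu_\theta)>0$ chooses $\pi_\theta^\star=(p_\theta,\mathbf{Q}(\mu_\theta)p_\theta)$, where $p_\theta$ solves (\ref{eq.5}); the objective there is precisely $p\mapsto f((p,\mathbf{Q}(\mu_\theta)p),\mu_\theta)$, whose unconstrained maximizer is $\pi^{si}(\mu_\theta)$. If the adjacent lower type's constraint is slack there, $\pi_\theta^\star=\pi^{si}(\mu_\theta)$ and there is nothing to prove; otherwise it binds, and I claim the constrained maximizer lies strictly to the right of $p^{si}(\mu_\theta)$ on the ray. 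Single-crossing enters here: along the $\mathbf{Q}(\mu_\theta)$-ray the lower type's payoff is also single-peaked, and since $MRS$ is decreasing in type its peak is weakly left of $p^{si}(\mu_\theta)$, so the feasible set of (\ref{eq.5}) contains a terminal interval $[\tilde p,1)$ with $\tilde p>p^{si}(\mu_\theta)$ on which the objective is maximized at $\tilde p$; using that $f$ is affine in $\mu$ (proof of Proposition \ref{pro3}) together with single-crossing to compare the two intersections of the ray with the level curve $\{f(\cdot,\mu_{\theta-1})=v_{\theta-1}^\star\}$, the competing feasible piece to the left yields a strictly smaller payoff, so $p_\theta=\tilde p>p^{si}(\mu_\theta)$ and, by the second preliminary, $\pi_\theta^\star$ is Blackwell more informative than $\pi^{si}(\mu_\theta)$. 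For types with $V(\mu_\theta,\mu_\theta)=0$, and for the lowest type with a positive payoff guarantee (whose equilibrium payoff is $V(\mu_1,\mu_1)$), one has $\pi_\theta^\star=\pi^{si}(\mu_\theta)$ and the claim is vacuous. This ``monotone distortion'' is also implicit in the proof of Proposition \ref{pro6} and may simply be cited.

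For the case $C_g>\hat{K}(C_b)$: both $(\hat{\mathbf{p}}(\bar q),\bar q)$ and $\pi^{si}(\mu_0)$ lie on the ray $q=\mathbf{Q}(\mu_0)p$ --- the former because the obedience constraint of Proposition \ref{pro5}(i) binds at the largest admissible $q=\bar q$, the latter by the first preliminary (with $V(\mu_0,\mu_0)>0$, which holds once the pooling equilibrium delivers a positive payoff to some type, since then the pooling experiment is persuasive at $\mu_0$ and, $f$ being affine in $\mu$ with $\mu_0$ a convex combination of the $\mu_\theta$'s and each type's equilibrium payoff nonnegative, it gives type $\mu_0$ a nonnegative payoff). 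By the second preliminary it remains to show $\hat{\mathbf{p}}(\bar q)\le p^{si}(\mu_0)$, which by single-peakedness is equivalent to $MRS((\hat{\mathbf{p}}(\bar q),\bar q)|\mu_0)\le\mathbf{Q}(\mu_0)$. Since $(\hat{\mathbf{p}}(\bar q),\bar q)$ lies on the lower tangency locus of Proposition \ref{pro3_part2}, the marginal rate of substitution there is type-independent, so the task is to show that the common $MRS$ carried by the lower tangency locus, evaluated where it meets the $\mathbf{Q}(\mu_0)$-ray, does not exceed $\mathbf{Q}(\mu_0)$; I would establish this from the closed form of $\hat{\mathbf{p}}$ obtained in the proof of Proposition \ref{pro3} and the tangency condition $MRS((p,\mathbf{Q}(\mu_0)p)|\mu_0)=\mathbf{Q}(\mu_0)$ pinning down $\pi^{si}(\mu_0)$.

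The hard part is exactly this last inequality. It is genuinely a property of the cost geometry rather than of the refinement: the D1 conditions alone do not rule out the reverse inequality, because a deviation to a less informative experiment on the same ray is attractive to \emph{all} sender types simultaneously, so D1 leaves the receiver's off-path interim belief unrestricted and cannot break such an equilibrium. Proving $MRS\le\mathbf{Q}(\mu_0)$ on the lower tangency locus at the relevant point therefore forces one to work with the explicit expressions for the log-likelihood-ratio cost and for $\hat{\mathbf{p}}$ --- in effect comparing the slope of the locus with the tangency condition that defines $\pi^{si}(\mu_0)$ --- rather than with qualitative properties alone. A secondary, more routine, obstacle is the ``left-branch'' comparison in the single-crossing case, a single-crossing argument about two intersections of a ray with a convex indifference curve.
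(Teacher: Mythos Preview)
Your single-crossing half is essentially the paper's argument: both $\pi_\theta^\star$ and $\pi^{si}(\mu_\theta)$ sit on the ray $q=\mathbf{Q}(\mu_\theta)p$, and the constrained maximizer of (\ref{eq.5}) lies weakly to the right of the unconstrained peak. The paper phrases the last step as ``if $\hat{\pi}_\theta\notin\hat{\Pi}$ then $f((p,\mathbf{Q}(\mu_\theta)p),\mu_\theta)$ is decreasing at $p_\theta^\star$,'' which is exactly your observation that the constraint binds on the right branch; your extra worry about the left branch is handled in the paper by the proof of Proposition~\ref{pro6} (where $p_\theta^\star>p_{\theta-1}^\star$ is established via single-crossing), so you can simply cite it.

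For the pooling half your reduction is right---both experiments lie on the $\mathbf{Q}(\mu_0)$-ray and the comparison is $\hat{\mathbf{p}}(\bar q)<p^{si}(\mu_0)$---but the step you call ``the hard part'' is not hard, and you are proposing to attack it from the wrong side. You want to show $MRS\le\mathbf{Q}(\mu_0)$ at $(\hat{\mathbf{p}}(\bar q),\bar q)$; equivalently, that the derivative $f'(p)=\mu M(p,\mathbf{Q}(\mu_0)p)+(1-\mu)N(p,\mathbf{Q}(\mu_0)p)$ is still positive at $p=\hat{\mathbf{p}}(\bar q)$. This is immediate from Lemma~\ref{lmm.a7}, which says $M,N>0$ whenever $p\le\hat{\mathbf{p}}(q)$. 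The paper packages this as Lemma~\ref{lmm.a5}: the symmetric-information optimum always lies strictly between the two tangency loci, because at the FOC $M$ and $N$ must have opposite signs, which by Lemma~\ref{lmm.a7} happens only in the interior strip $\hat{\mathbf{p}}(q)<p<\check{\mathbf{p}}(q)$. From $\hat p>\hat{\mathbf{p}}(\mathbf{Q}(\mu_0)\hat p)$ and the monotonicity of $q/\hat{\mathbf{p}}(q)=q+\hat t(1-q)$ one gets $\mathbf{Q}(\mu_0)\hat p>\bar q$, hence $\hat p>\hat{\mathbf{p}}(\bar q)$. So there is no need to compute the common $MRS$ on the tangency locus from closed forms; the FOC sign analysis already places $\pi^{si}(\mu_0)$ strictly to the right of the locus on the ray. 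Your observation that $MRS$ is type-independent on the locus is correct (it follows from $\Delta=A_2A_3-A_1A_4=0$), but it is not needed.
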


The increase (decrease) in the experiment's informativeness does not translate to an increase (decrease) in the receiver's equilibrium payoff. In the separating equilibrium and the particular pooling equilibrium where the sender chooses \((\hat{\mathbf{p}}(\bar{q}),\bar{q})\), as well as in the symmetric information benchmark, the receiver's payoff is zero, which is the same payoff he gets from taking the low action without learning. This is a common feature in Bayesian persuasion with discrete receiver actions.\footnote{In \citet{Hedlund_2024}, the receiver takes a continuous action, and he strictly benefits from sender private information in a separating equilibrium.}

However, if the receiver can commit to a higher decision threshold than his sequentially rational threshold (hence he strictly prefers the high action when choosing it),\footnote{Since persuasion has otherwise no value to him, the receiver is willing to commit to a higher threshold if doing so is possible. This assumption of commitment is also natural in many applications where the decision threshold is set exogenously, or where the receiver is required to exercise caution when taking an action that differs from the default. For example, federal laws regulate evidence requirements for drug approval in the US (see, for example, \citet{darrow_dhruva_redberg_2021} for a review), and in a court of law, the accused is convicted only if the prosecution can prove that he/she is guilty beyond a reasonable doubt.} Proposition \ref{pro7} implies that 
\begin{enumerate*}[label=(\roman*)]
	\item if bad news is costlier, the receiver's equilibrium payoff in the separating equilibrium is weakly higher than that in the benchmark where the sender's signal is public;
	\item if good news is costlier, the receiver's equilibrium payoff is strictly lower in some pooling equilibria compared to that in the benchmark where the sender observes no noisy signal.
\end{enumerate*}
Moreover, the receiver's payoff in the public-signal benchmark is weakly higher than that in the no-signal benchmark. That is, providing public information never hurts the receiver. Therefore, either eliminating the sender's private information or making it public benefits the receiver when good news is costlier, but hurts the receiver when bad news is costlier.

\section{Discussions}\label{sec8}

\subsection{Partial pooling equilibria}\label{sec8.1}

When the single-crossing property fails, partial pooling equilibria may satisfy the D1 criterion. We present in this section some properties of these equilibria.

The nearby deviation argument still applies. Therefore, if an experiment \(\pi=(p,q)\) is chosen by two sender types \(h>l\), \(p=\hat{\mathbf{p}}(q)\). Moreover, any sender type \(m\in[l,h]\) must also choose \(\pi\). Suppose that \(\pi_m\neq\pi\). Since \(f(\pi,\mu)-f(\pi_m,\mu)\) is an affine function of \(\mu\), it must be the case that all sender types are indifferent between \(\pi\) and \(\pi_m\). That is, \(\pi_m\) is the experiment \(\pi'\) in Figure \ref{fig.6} where the indifference curves of all sender types through \(\pi\) intersect again. \(\pi_m\) is persuasive at belief \(\mu_m\), since it is chosen only by the type \(m\) sender. Therefore, there exists a deviation \(\tilde{\pi}\) that is persuasive at belief \(\mu_h\), gives the type \(h\) sender a higher payoff than \(\pi\) but is strictly equilibrium dominated for all sender types below \(h\). By the D1 criterion, the receiver should assign at least probability \(\mu_h\) to the high state after seeing this deviation, but this makes \(\tilde{\pi}\) a profitable deviation for the type \(h\) sender. Finally, as is shown in Lemma \ref{lmm.a8}, the sender's payoff from choosing \((\hat{\mathbf{p}}(q),q)\) increases in \(q\) for all types, hence in any partial pooling D1 equilibrium, there is a single interval of sender types \(\{\underline{\theta},\dots,\bar{\theta}\}\) who pool on the same experiment, while all other sender types choose distinct experiments.

If \(V(\mu_1,\mu_1)=0\), no experiment that is persuasive at belief \(\mu_1\) gives the lowest sender type a positive payoff, but any experiment \((\hat{\mathbf{p}}(q),q)\) gives her a positive payoff. Therefore, the lowest sender type is one of the pooling types. That is, any D1 partial pooling equilibrium features a set of the lowest sender types who choose the same experiment when the cost of experiments is high.

The large deviation argument also applies to partial pooling equilibria. That is, for all experiments \(\tilde{\pi}\) that are persuasive at belief \(\mu_{\bar{\theta}}\), \(f(\tilde{\pi},\mu_{\bar{\theta}})>f(\pi,\mu_{\bar{\theta}})\) implies \(f(\tilde{\pi},\mu_\theta)>f(\pi,\mu_\theta)\) for all \(\theta<\bar{\theta}\). By Proposition \ref{pro5}, this is equivalent to a lower bound \(\hat{q}_{\bar{\theta}}\) on \(q\), but this lower bound depends on the highest pooling type \(\bar{\theta}\).

\subsection{More precise sender private information}\label{sec5.2}

We relax the assumption that \(\mu_N<\bar{\beta}\) and assume instead that \(\mu_N>\bar{\beta}>\mu_0\). That is, the receiver's interim belief \(\beta\) can exceed his sequentially rational threshold \(\bar{\beta}\). Hence, it is possible that the receiver takes the high action regardless of the experiment's outcome. Specifically, if \(\mathbf{B}(\beta,\pi,b)\geq\bar{\beta}\), or equivalently, if \(\beta\geq\mathbf{Q}^{-1}\left(\frac{1-q}{1-p}\right)\), the receiver with interim belief \(\beta\) will take the high action even if he observes the bad outcome from the experiment \(\pi=(p,q)\). If this is the case, the sender's payoff from choosing \(\pi\) is \(1-c(\pi|\mu_\theta)\).

The new assumption does not affect the nearby deviation argument. We show in Lemma \ref{lmm.b8} that the payoff function \(1-c(\pi|\mu_\theta)\) satisfies the single-crossing property for all values of \(C_g\) and \(C_b\). Therefore, there is no D1 equilibrium in which two or more sender types choose an experiment \(\pi\), and the receiver always takes the high action regardless of the outcome of \(\pi\). Lemma \ref{lmm.b9} generalizes Lemma \ref{lmm4} to the case of \(\mu_N>\bar{\beta}\). Hence, when the single-crossing property holds, all sender types (except for those who choose the uninformative experiment) choose different experiments.

When the single-crossing property fails, no pooling or uninformative equilibrium satisfies the D1 criterion because of large deviations. Consider again the example in Figure \ref{fig.6}, where \(\pi\) is a pooling equilibrium outcome that is robust to nearby deviations, and \(\tilde{\pi}=(\tilde{p},\tilde{q})\) is a large deviations. We consider three possible cases of the receiver's interim belief \(\beta(\tilde{\pi})\).

\begin{enumerate}
	\item If \(\beta(\tilde{\pi})<\mathbf{Q}^{-1}\left(\frac{\tilde{q}}{\tilde{p}}\right)\), the sender gets \(-c(\tilde{\pi}|\mu_\theta)\) from choosing \(\tilde{\pi}\), so both sender types are better off with \(\pi\).
	\item If \(\mathbf{Q}^{-1}\left(\frac{\tilde{q}}{\tilde{p}}\right)\leq \beta(\tilde{\pi})<\mathbf{Q}^{-1}\left(\frac{1-\tilde{q}}{1-\tilde{p}}\right)\), the sender's payoff from choosing \(\tilde{\pi}\) is \(f(\pi,\mu_\theta)\), so deviating to \(\tilde{\pi}\) is profitable only for the high-type sender, and it gives the low-type sender a strictly lower payoff than \(\pi\).
	\item If \(\beta(\tilde{\pi})\geq\mathbf{Q}^{-1}\left(\frac{1-\tilde{q}}{1-\tilde{p}}\right)\), the sender's payoff from choosing \(\tilde{\pi}\) is \(1-c(\tilde{\pi}|\mu_\theta)>f(\tilde{\pi},\mu_\theta)\). Since the sender's payoff from choosing \(\pi\) is \(f(\pi|\mu_\theta)=f(\pi^\prime|\mu_\theta)\), deviating to \(\tilde{\pi}\) is profitable for both types of the sender if \(\tilde{\pi}\) is sufficiently close to \(\pi^\prime\).
\end{enumerate}

By assumption, \(\mathbf{Q}(\mu_h)>1>\frac{\tilde{q}}{\tilde{p}}\). Hence, the second case is true for a nonempty subset of the receiver's interim beliefs. Therefore, the high-type sender is keener to deviate to \(\tilde{\pi}\) than the low-type sender, and the D1 criterion requires that \(\beta(\tilde{\pi})=\mu_h\). However, this falls into either the second or the third case where \(\tilde{\pi}\) is a profitable deviation for the high-type sender. Hence, the pooling equilibrium does not satisfy the D1 criterion.

Partial pooling D1 equilibria may continue to exist. As we show earlier, an interval of sender types \(\{\underline{\theta},\dots,\bar{\theta}\}\) pool on some experiment \((\hat{\mathbf{p}}(q),q)\), while all other sender types choose distinct experiments. By the large deviation argument above, \(\mu_{\bar{\theta}}<\bar{\beta}\). That is, there is an upper bound on the highest pooling type.

\subsection{Limiting cases}\label{sec8.3}

In the classic signaling model, there is a discontinuity as the sender's private information becomes degenerate or the differential cost of sending signals diminishes. These discontinuities persist in our game, where the signal is a public experiment that reveals information about the payoff relevant state of the world.

When there is no sender private information, the symmetric information game with common prior \(\mu_h\) has a unique equilibrium outcome \(\pi^{si}(\mu_h)\). Now suppose that the sender has slight private information such that her prior is \(\mu_l<\mu_h\) with an infinitesimal probability. One of the following happens depending on the cost of experiments. If \(C_g\leq\hat{K}(C_b)\), the game has a unique separating equilibrium, where the high-type sender chooses a more informative experiment than \(\pi^{si}(\mu_h)\). The separating outcome is independent of the probability of the low type sender, hence it does not converge to \(\pi^{si}(\mu_h)\) as the probability goes to zero. This is akin to the discontinuity observed in the classic signaling model. If \(C_g>\hat{K}(C_b)\), the game has a continuum of pooling equilibria, and all pooling equilibrium outcomes satisfy \(p=\hat{\mathbf{p}}(q)\) and are therefore separated from \(\pi^{si}(\mu_h)\). Therefore, although the failure of single-crossing leads to multiplicity of equilibrium outcomes, none of them converges to the equilibrium outcome of the symmetric information game.

When experiments are costless, all equilibria are persuasive and pooling, and the set of D1 pooling equilibrium outcomes is \(\{(1,q):q\in[\mathbf{Q}(\mu_1),\mathbf{Q}(\mu_0)]\}\).\footnote{\citet{hedlund_2017} assumes that the receiver's action is continuous and shows that D1 equilibria are either separating or fully revealing. In our model, the receiver takes a binary action. When experiments are costless, all D1 equilibria are pooling, and only the bad state is revealed. This is consistent with the findings of \citet{Kosenko_2023}.} That is, the bad outcome reveals the bad state. Now consider a sequence of costly persuasion games with diminishing costs such that \(\frac{C_g^n}{C_b^n}\) is uniformly bounded, where \(C_g^n\) and \(C_b^n\) denote the costs of good news and bad news in the \(n\)'th game, respectively. Recall from Proposition \ref{pro3} that \(\lim_{C_b\downarrow 0}\hat{K}'(C_b)=\infty\). Therefore, as the costs diminish, the game has a unique separating equilibrium outcome. In the limit, it converges to a strategy profile where each sender type \(\theta\) chooses a distinct experiment \((1,\mathbf{Q}(\mu_\theta))\).

\subsection{Microfoundation of the log-likelihood ratio cost}\label{sec7}

The log-likelihood ratio cost of experiments can be microfounded by a Wald sequential sampling problem. The sender acquires noisy, binary signals about the state of the world before making an irreversible decision to stop. An experiment in our game is equivalent to a threshold stopping rule in the sampling problem, that is, the sender stops acquiring signals at the first instance her belief reaches some thresholds. Equating the cost of an experiment with the expected cost of acquiring signals in the sampling problem yields the log-likelihood ratio costs.

Consider the following stopping problem in discrete time.\footnote{A continuous time version of the model is a special case of that in \citet{morris_strack_2019}. The sender observes a Brownian motion with a state-dependent drift term, and her flow cost \(c(p_t)\) is a function of her instantaneous belief \(p_t\) on the good state. Motivated by our examples, it is straightforward to set \(c(p_t)=p_tc_g+(1-p_t)c_b\). That is, conditional on the state \(\omega\), the flow cost of experimenting is a constant \(c_\omega\). Then the log-likelihood ratio cost function is derived using Theorem 1 of \citet{morris_strack_2019}.} There is a binary state of the world \(\omega\in\{G,B\}\), and a sequence of binary signals \((s_n)_{n=1}^\infty\) such that each \(s_n\in\{g,b\}\). Let \(\mu_0\) be the probability of the good state, and conditional on the state, signals are distributed iid such that \(\mathbb{P}(s_n=g|\omega=G)=\mathbb{P}(s_n=b|\omega=B)=\alpha>\frac{1}{2}\). Both the state of the world and the signals are realized at the outset of the game and are not observed by the sender. 

We model the sender's information acquisition as follows. At a history \(h_n=(s_1,s_2,\dots,s_n)\), that is, the sender has acquired signals \(s_1,s_2,\dots,s_n\) and has not yet stopped, the sender chooses between acquiring an additional signal \(s_{n+1}\) and irreversibly stopping signal acquisition. The cost of acquiring the signal \(s_{n+1}\) is a random variable that equals \(c_g\) if \(s_{n+1}=g\) and \(c_b\) if \(s_{n+1}=b\). Hence, the total cost of signals if the sender stops at history \(h_n\) is \(c_g n_g(h_n)+c_b n_b(h_n)\), where \(n_g(h_n)\) and \(n_b(h_n)\) denote the number of good and bad signals in \(h_n\), respectively.

A strategy of the sender is a stopping time adapted to the natural filtration generated by histories. Specifically, consider the following \emph{threshold strategy} \(\tau\) of the sender: she stops at the first history where the difference between the number of good and bad signals equals some threshold values \(\underline{n}<0\) or \(\bar{n}>0\). Notice that the sender stops in finite time with probability one. Hence, interpreting the event \(n_g(h_\tau)-n_b(h_\tau)=\bar{n}\) as the good outcome and the event \(n_g(h_\tau)-n_b(h_\tau)=\underline{n}\) as the bad outcome, the strategy \(\tau\) is equivalent to an experiment \(\pi=(p,q)\) in the our model of costly persuasion, where \(p\) and \(q\) are the probabilities of the good outcome conditional on the good and bad state, respectively.

The sender's posterior belief on the good state induced by the stopping rule \(\tau\) is a random variable \(\hat{\mu}\), which equals \(\mu_{\bar{n}}:=\left[1+\frac{1-\mu_0}{\mu_0}\left(\frac{1-\alpha}{\alpha}\right)^{\bar{n}}\right]^{-1}\) following the good outcome, or \(\mu_{\underline{n}}:=\left[1+\frac{1-\mu_0}{\mu_0}\left(\frac{1-\alpha}{\alpha}\right)^{\underline{n}}\right]^{-1}\) following the bad outcome. This is the same as the posterior belief induced by the experiment \(\pi\).

\begin{pro}\label{pro9}
	The expected cost of implementing the strategy \(\tau\), \(\mathbb{E}[c_g n_g(h_\tau)+c_b n_b(h_\tau)],\) equals \(\mathbb{E}[H(\mu_0)-H(\hat{\mu})]\), where
	\begin{equation*}
		H(\mu)=\frac{1}{2\alpha-1}\left[\ln\left(\frac{\alpha}{1-\alpha}\right)\right]^{-1}\left[\bar{c}_g\mu\ln\left(\frac{1-\mu}{\mu}\right)+\bar{c_b}(1-\mu)\ln\left(\frac{\mu}{1-\mu}\right)\right],
	\end{equation*}
	and \(\bar{c}_g=\alpha c_g+(1-\alpha)c_b\), \(\bar{c}_b=(1-\alpha)c_g+\alpha c_b\).
\end{pro}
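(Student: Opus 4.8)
The plan is to reduce the identity to an optional-stopping statement for a single martingale. Write \(\mathcal{F}_n=\sigma(s_1,\dots,s_n)\), let \(\mu_n\) be the sender's belief on the good state after \(n\) signals (so that \(\hat\mu=\mu_\tau\)), and let \(\kappa_n=c_g n_g(h_n)+c_b n_b(h_n)\) be the sampling cost accumulated by step \(n\). I would show that \(Y_n:=H(\mu_n)+\kappa_n\) is an \((\mathcal{F}_n)\)-martingale to which optional stopping applies at \(\tau\); this gives \(\mathbb{E}[H(\mu_\tau)]+\mathbb{E}[\kappa_\tau]=\mathbb{E}[Y_0]=H(\mu_0)\), which rearranges to \(\mathbb{E}[c_g n_g(h_\tau)+c_b n_b(h_\tau)]=\mathbb{E}[H(\mu_0)-H(\hat\mu)]\).

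For the martingale property, the first observation is that, conditional on \(\mathcal{F}_n\), the next signal is good with probability \(a_n:=\mu_n\alpha+(1-\mu_n)(1-\alpha)\), so the conditional expected cost of \(s_{n+1}\) equals \(c_g a_n+c_b(1-a_n)=\mu_n\bar{c}_g+(1-\mu_n)\bar{c}_b\) by the definitions of \(\bar{c}_g\) and \(\bar{c}_b\). Hence \(Y_n\) is a martingale if and only if \(\mathbb{E}[H(\mu_{n+1})\mid\mathcal{F}_n]=H(\mu_n)-\mu_n\bar{c}_g-(1-\mu_n)\bar{c}_b\). Splitting \(H=\bar{c}_g H_g+\bar{c}_b H_b\) with \(H_g(\mu)=\lambda\mu\ln\frac{1-\mu}{\mu}\), \(H_b(\mu)=\lambda(1-\mu)\ln\frac{\mu}{1-\mu}=H_g(1-\mu)\) and \(\lambda=\big[(2\alpha-1)\ln\frac{\alpha}{1-\alpha}\big]^{-1}\), the symmetry exchanging good and bad signals (equivalently \(\mu\leftrightarrow 1-\mu\), which leaves the one-step update law unchanged) reduces everything to the single identity \(\mathbb{E}[H_g(\mu_{n+1})\mid\mathcal{F}_n]=H_g(\mu_n)-\mu_n\). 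I would check this after the substitution \(\ell=\ln\frac{\mu}{1-\mu}\): one signal shifts \(\ell\) to \(\ell+\delta\) (good) or \(\ell-\delta\) (bad) with \(\delta=\ln\frac{\alpha}{1-\alpha}\), the good-signal probability becomes \(a(\ell)=\frac{\alpha e^\ell+(1-\alpha)}{1+e^\ell}\), and \(H_g\) becomes \(\phi(\ell)=-\lambda\ell e^\ell/(1+e^\ell)\). Multiplying the target identity \(a(\ell)\phi(\ell+\delta)+(1-a(\ell))\phi(\ell-\delta)=\phi(\ell)-e^\ell/(1+e^\ell)\) through by \(1+e^\ell\), the factors \(a(\ell)(1+e^\ell)=\alpha e^\ell+(1-\alpha)\) and \((1-a(\ell))(1+e^\ell)=(1-\alpha)e^\ell+\alpha\) cancel the denominators \(1+e^{\ell\pm\delta}\) of the updated beliefs, and the left-hand side collapses to \(-\lambda e^\ell\big[\alpha(\ell+\delta)+(1-\alpha)(\ell-\delta)\big]=-\lambda e^\ell\big[\ell+(2\alpha-1)\delta\big]=-\lambda\ell e^\ell-e^\ell\), using \(\lambda(2\alpha-1)\delta=1\); this is exactly \(\phi(\ell)(1+e^\ell)-e^\ell\), the right-hand side. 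I expect this one-step verification to be the main obstacle: it is where the unusual prefactor \(\lambda\) of \(H\) gets pinned down, although the log-odds substitution keeps the algebra short.

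To justify optional stopping I would first note \(\mathbb{E}[\tau]<\infty\): from any interior value of the difference walk \(D_n=n_g-n_b\), a run of \(N:=\bar n-\underline n\) consecutive good signals forces exit from \((\underline n,\bar n)\), and conditional on either state this has probability at least \(\min(\alpha,1-\alpha)^N>0\), so \(\tau\) has geometric tails. Since \(D_{n\wedge\tau}\) stays in the finite set \(\{\underline n,\dots,\bar n\}\), the beliefs \(\mu_{n\wedge\tau}\) take finitely many values in \((0,1)\) and \(H(\mu_{n\wedge\tau})\) is bounded uniformly in \(n\), while \(|\kappa_{n\wedge\tau}|\le\max(c_g,c_b)\,\tau\in L^1\). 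Thus \((Y_{n\wedge\tau})_n\) is dominated by an integrable random variable and \(Y_{n\wedge\tau}\to Y_\tau\) almost surely, so \(\mathbb{E}[Y_\tau]=\lim_n\mathbb{E}[Y_{n\wedge\tau}]=\mathbb{E}[Y_0]=H(\mu_0)\), which finishes the proof.

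A fully equivalent but messier alternative would avoid the martingale and compute directly: write the total cost as \(\frac{c_g+c_b}{2}\tau+\frac{c_g-c_b}{2}D_\tau\), evaluate \(\mathbb{E}[\tau]\) and \(\mathbb{E}[D_\tau]\) by conditioning on the true state and combining the gambler's-ruin probabilities with Wald's identity, and match coefficients against \(H(\mu_0)-\mathbb{E}[H(\hat\mu)]\) after expressing \(\bar n\) and \(\underline n\) through the induced posteriors \(\mu_{\bar n},\mu_{\underline n}\). This works, but the martingale route above avoids the case split on the state and the explicit ruin formulas, so that is the argument I would write out.
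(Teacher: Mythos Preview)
Your argument is correct, but it takes a genuinely different route from the paper. The paper conditions on the state, sets up the linear recurrence \(z_n=\bar c_\omega+\alpha z_{n+1}+(1-\alpha)z_{n-1}\) for the expected remaining cost given current difference \(n\), solves it in closed form, and then verifies by brute-force expansion that the resulting expression matches \(\mathbb{E}[H(\mu_0)-H(\hat\mu)]\) after writing \(\mu_{\bar n},\mu_{\underline n}\) in terms of \(\bar n,\underline n\). This is exactly the ``messier alternative'' you sketch in your last paragraph. Your martingale approach is cleaner: the one-step verification in log-odds coordinates pins down the prefactor \(\lambda\) as the unique constant making \(Y_n=H(\mu_n)+\kappa_n\) a martingale, so the form of \(H\) is explained rather than merely checked, and you never need the explicit ruin probabilities or the state-by-state case split. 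The paper's route, by contrast, is entirely elementary (no optional stopping, no integrability bounds) and produces as a by-product the explicit conditional cost formulas \(z_0\) in each state, which could be useful if one cared about the state-contingent costs separately. Your optional-stopping justification is fine; since the stopped difference walk lives in a finite set, uniform boundedness of \(H(\mu_{n\wedge\tau})\) together with \(\kappa_{n\wedge\tau}\le(|c_g|+|c_b|)\tau\in L^1\) gives the required domination.
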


Hence, letting \(C_g=\frac{1}{2\alpha-1}\left[\ln\left(\frac{\alpha}{1-\alpha}\right)\right]^{-1}\bar{c}_g\) and \(C_b=\frac{1}{2\alpha-1}\left[\ln\left(\frac{\alpha}{1-\alpha}\right)\right]^{-1}\bar{c_b}\), the expected cost of implementing the strategy \(\tau\) equals the log-likelihood ratio cost of the experiment \(\pi\).

\subsection{Shannon entropy cost of experiments}\label{sec8.2}

Our result that the single-crossing property may fail in a costly Bayesian persuasion game with a partially informed sender is not specific to the log-likelihood ratio cost of experiments we use. The following proposition shows a similar result in a game with the Shannon entropy cost of experiments. That is, \(c(\pi|\mu)=\mathbb{E}[H(\mu)-H(\hat{\mu})]\), where \(H(\mu)=-C\left[\mu\ln\mu+(1-\mu)\ln(1-\mu)\right]\), and \(C>0\).

\begin{pro}\label{pro10}
	The single-crossing property fails if experiments have the Shannon entropy cost.
	\begin{enumerate}[label=(\roman*)]
		\item There exists \(\tilde{\mathbf{p}}:(0,1)\to(0,1)\) such that \(\tilde{\mathbf{p}}(q)>q\), and \(MRS(\pi|\mu)\) is decreasing in \(\mu\) if \(p\leq\tilde{\mathbf{p}}(q)\), and \(MRS(\pi|\mu)\) first increases and then decreases in \(\mu\) if \(p>\tilde{\mathbf{p}}(q)\);
		\item For any sender types \(i<j\), there exists \(\tilde{\mathbf{p}}_{i,j}:(0,1)\to(0,1)\) such that \(\tilde{\mathbf{p}}_{i,j}(q)>\tilde{\mathbf{p}}(q)\), and \(MRS(\pi|\mu_i)>MRS(\pi|\mu_j)\) if \(p<\tilde{\mathbf{p}}_{i,j}(q)\), \(MRS(\pi|\mu_i)=MRS(\pi|\mu_j)\) if \(p=\tilde{\mathbf{p}}_{i,j}(q)\), and \(MRS(\pi|\mu_i)<MRS(\pi|\mu_j)\) if \(p>\tilde{\mathbf{p}}_{i,j}(q)\).
	\end{enumerate}
\end{pro}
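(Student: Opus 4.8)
The plan is to obtain a closed form for $MRS(\pi|\mu)$ under the Shannon cost and then read off the sign of $\partial_\mu MRS$ for part (i) and of $MRS(\pi|\mu_i)-MRS(\pi|\mu_j)$ for part (ii). Write the Shannon cost as $c(\pi|\mu)=C\cdot I$, where $I$ is the mutual information between state and outcome, and use the conditional Kullback--Leibler form $I=\mu[p\ln\frac{p}{x}+(1-p)\ln\frac{1-p}{1-x}]+(1-\mu)[q\ln\frac{q}{x}+(1-q)\ln\frac{1-q}{1-x}]$, with $x:=\mu p+(1-\mu)q$ the probability of the good outcome. A direct computation gives $\partial_pI=\mu\ln\frac{p(1-x)}{(1-p)x}$ and $\partial_qI=(1-\mu)\ln\frac{q(1-x)}{(1-q)x}$ (the chain-rule contributions through $x$ cancel by Bayes plausibility, since $\mu p+(1-\mu)q=x$ and $\mu(1-p)+(1-\mu)(1-q)=1-x$). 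Hence $f(\pi,\mu)=x-CI$ has $\partial_pf=\mu(1-C\psi)$ and $\partial_qf=(1-\mu)(1-C\phi)$ with $\psi:=\ln\frac{p}{1-p}-\ln\frac{x}{1-x}$, $\phi:=\ln\frac{q}{1-q}-\ln\frac{x}{1-x}$, so that
\[
  MRS(\pi|\mu)=-\frac{\mu}{1-\mu}\cdot\frac{1-C\psi}{1-C\phi}.
\]
Two facts organize the argument: (a) $\psi$ and $\phi$ have the same $\mu$-derivative $-(p-q)/(x(1-x))<0$, so both strictly decrease in $\mu$ while $\psi-\phi\equiv\Delta:=\ln\frac{p(1-q)}{q(1-p)}$ is independent of $\mu$; and (b) for $p>q$ and $\mu\in(0,1)$ one has $\psi\in(0,\Delta)$ and $\phi\in(-\Delta,0)$, so $1-C\phi>0$ always, while $1-C\psi$ is positive iff $\psi<1/C$.

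\textbf{Part (i).} Let $\tilde{\mathbf p}(q)$ be the $p$ solving $\Delta=1/C$, i.e.\ $\tilde{\mathbf p}(q)=e^{1/C}q/(e^{1/C}q+1-q)\in(q,1)$. With $m:=\mu/(1-\mu)$ and $R:=(1-C\psi)/(1-C\phi)$, one computes $\partial_\mu MRS=-(m'R+mR')$, which has the sign of $-(R+\mu(1-\mu)R')$ since $m'>0$; and by (a), $R'=\frac{C\Delta}{(1-C\phi)^2}\cdot\frac{C(p-q)}{x(1-x)}\ge0$. If $p\le\tilde{\mathbf p}(q)$ then $\Delta\le1/C$, so $\psi<1/C$, $R>0$ on $(0,1)$, and $\partial_\mu MRS<0$: $MRS$ is decreasing in $\mu$. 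If $p>\tilde{\mathbf p}(q)$ there is a unique $\mu^\star\in(0,1)$ with $\psi(\mu^\star)=1/C$; for $\mu\in(\mu^\star,1)$, $R>0$, so $\partial_\mu MRS<0$, whereas on $(0,\mu^\star)$, $R<0$ (writing $1-C\psi=w-C\Delta$ with $w:=1-C\phi$ and noting $w<C\Delta$ there) and $\partial_\mu MRS$ has the sign of $1-\Omega(\mu)$, where $\Omega(\mu):=\frac{\mu(1-\mu)R'}{-R}=\frac{C^2\Delta(p-q)\,\mu(1-\mu)}{w\,x(1-x)\,(C\Delta-w)}$. Since $\Omega\to0$ as $\mu\downarrow0$, $\Omega\to\infty$ as $\mu\uparrow\mu^\star$, and $MRS$ already decreases on $(\mu^\star,1)$, part (i) reduces to showing $\Omega$ crosses $1$ exactly once on $(0,\mu^\star)$; I would get this from strict monotonicity of $\Omega$, an inequality $(\ln\Omega)'>0$ in which one uses $w'=\frac{C(p-q)}{x(1-x)}$ and $(x(1-x))'=(1-2x)(p-q)$.

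\textbf{Part (ii).} The displayed formula rewrites as $MRS(\pi|\mu)=-m+\frac{mC\Delta}{w}$; with $m_k:=\mu_k/(1-\mu_k)$, $\tilde L(m):=\ln\frac{(mp+q)(1-q)}{(m(1-p)+1-q)q}$ (so that $1-C\phi=1+C\tilde L(m)$ in the log-odds variable $m$), and $w_k:=1+C\tilde L(m_k)$,
\[
  MRS(\pi|\mu_i)-MRS(\pi|\mu_j)=(m_j-m_i)\bigl(1-T(p)\bigr),\qquad T(p):=\frac{C\,\Delta(p)\,N(p)}{(m_j-m_i)\,w_i(p)\,w_j(p)},
\]
with $N:=m_jw_i-m_iw_j$, so $MRS(\pi|\mu_i)>MRS(\pi|\mu_j)\iff T(p)<1$. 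Here $N>0$ on $(q,1)$ because $\tilde L$ is concave in $m$ with $\tilde L(0)=0$ (the strict concavity $\tilde L''=-\frac{p^2}{(mp+q)^2}+\frac{(1-p)^2}{(m(1-p)+1-q)^2}<0$ is exactly where $p>q$ is used), so $m\mapsto w(m)/m$ decreases and $m_i<m_j$ gives $m_jw_i>m_iw_j$. Moreover $T(q)=0$ (as $\Delta(q)=0$) and $T(p)\to\infty$ as $p\uparrow1$, since $\Delta\to\infty$ while $w_k\to1+C\ln(1+m_k/q)$ and $N\to m_j\bar w_i-m_i\bar w_j>0$, the last positivity again from concavity (equivalently, $m\mapsto\frac{1+C\ln(1+m/q)}{m}$ strictly decreases, using $\ln(1+t)>t/(1+t)$). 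Thus $T$ crosses $1$ on $(q,1)$; that the crossing is unique, so that $\tilde{\mathbf p}_{i,j}(q)$ is well defined, I would obtain from $T$ being strictly increasing, i.e.\ $(\ln T)'=\frac{\Delta'}{\Delta}+\frac{N'}{N}-\frac{w_i'}{w_i}-\frac{w_j'}{w_j}>0$, using $\Delta'=\frac{1}{p(1-p)}$ and the $p$-derivative $\tilde L_k'=\frac{m_k(m_k+1)}{(m_kp+q)(m_k(1-p)+1-q)}$. Finally $\tilde{\mathbf p}_{i,j}(q)>\tilde{\mathbf p}(q)$ because part (i) gives $MRS(\pi|\cdot)$ strictly decreasing, hence $T(p)<1$, whenever $p\le\tilde{\mathbf p}(q)$.

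\textbf{Main obstacle.} The real work is the two monotonicity statements — $\Omega$ strictly increasing on $(0,\mu^\star)$ in part (i) and $T$ strictly increasing on $(q,1)$ in part (ii). Each is an explicit one-variable inequality; the tools I expect to need are the strict concavity of $\tilde L$ (hence of $1+C\tilde L$) in the log-odds variable $m$, together with the identity $x(1-x)=\mu(1-\mu)(p-q)^2+\mu p(1-p)+(1-\mu)q(1-q)$, which controls $x(1-x)$ and its derivative. Everything else is bookkeeping with the closed form for $MRS$.
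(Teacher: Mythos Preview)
Your setup is correct and matches the paper's: the MRS formula, the threshold $\tilde{\mathbf p}(q)$ (solving $\Delta=1/C$), and the observation that $MRS\le 0\Rightarrow \partial_\mu MRS<0$ all agree. The genuine gap is exactly where you flag it: you do not prove that $\Omega$ is strictly increasing on $(0,\mu^\star)$ in part (i), nor that $T$ is strictly increasing on $(q,1)$ in part (ii). These are the entire content of the proposition once the bookkeeping is done, and the tools you list (concavity of $\tilde L$, the identity for $x(1-x)$) are not obviously enough to force either inequality; neither $(\ln\Omega)'$ nor $(\ln T)'$ simplifies into something with a manifest sign.

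The paper sidesteps both global monotonicity questions by working \emph{at} the relevant points. For (i), instead of showing $\Omega$ increasing, it shows that at any $\mu$ with $M'=0$ one has $M''<0$; this is done by differentiating the identity $M'=\tfrac{M}{\mu(1-\mu)}-\tfrac{\mu}{1-\mu}\tfrac{C^2L}{R^2}\tfrac{p-q}{r(1-r)}$, using $M'=0$ to substitute $\tfrac{C^2L}{R^2}\tfrac{p-q}{r(1-r)}=\tfrac{M}{\mu^2}$, and then checking the elementary inequality $3q-r-2rq>0$. For (ii), instead of showing $T$ increasing, the paper shows that at any $p$ with $MRS(\pi|\mu_i)=MRS(\pi|\mu_j)=:m$ the $p$-derivative of the difference is strictly negative; after substituting out $R$ using the common MRS value $m$, this reduces to showing that a certain $G(\mu)$ is increasing, and one computes
\[
G'(\mu)=\frac{(mp-q)^2}{p\,r^2}+\frac{(m(1-p)-(1-q))^2}{(1-p)(1-r)^2}>0,
\]
a sum of squares. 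The moral in both parts is the same: evaluating at the critical/crossing point lets you eliminate one variable and collapse the inequality into something with a transparent sign, whereas your approach asks for a harder global statement.
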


Part (i) of Proposition \ref{pro10} shows that the single-crossing property fails at experiments that are very informative (i.e., \(p>\tilde{\mathbf{p}}(q)\)). In fact, the double-crossing property \citep{chen_ishida_suen_2022} is satisfied. As part (ii) shows, given any two sender types, there exists a locus of experiments where the indifference curves of the two sender types are tangent, and the high-type sender's indifference curve is more convex. Therefore, the D1 criterion may select an equilibrium that features pairwise pooling where each experiment is chosen by up to two sender types.

\section{Conclusions}

We study Bayesian persuasion with costly experiments and a partially informed sender. The receiver can learn about the state of the world from the sender's choice of experiment as well as from the outcome of the experiment. We show that this is a class of signaling games where the single-crossing property may fail, and the equilibrium outcome crucially depends on the cost of experiments.

We focus on the log-likelihood ratio cost where the costs of drawing good news and bad news can differ. If good news is not too costly compared to bad news, there exists a unique separating equilibrium outcome. The receiver fully learns the sender's private signal from her choice of experiment, and the sender chooses a Blackwell more informative experiment than what she would choose if the noisy signal were public. On the other hand, if good news is much costlier than bad news, the single-crossing property fails. The sender with more preferable private information may not want to separate by providing more information. We characterize pooling equilibrium outcomes and show that the receiver learns less about the state in some pooling equilibria than in the benchmark with no noisy signal.

These results have practical implications for different real-life applications. In drug approval, information asymmetry at the preclinical stage incentivizes pharmaceutical companies to run more informative clinical trials. Hence, it may be beneficial to public welfare. In startup funding, however, an entrepreneur's private knowledge about a proprietary technology can hurt investors. Therefore, it is important to solicit opinions from external experts.

\clearpage
\appendix

\section{Auxiliary Results and Proofs}

\subsection{Proof of Proposition \ref{pro3}}

\begin{proof}
	Let \(\mu\in(0,1)\) and \(\pi=(p,q)\in\Pi^\circ\). The cost of running the experiment \(\pi\) for the sender type who has prior belief \(\mu\) is
	\begin{align*}
		c(\pi|\mu) &= C_g\left[\mu\ln\frac{1-\mu}{\mu}-\mu p\ln\frac{(1-\mu)q}{\mu p}-\mu(1-p)\ln\frac{(1-\mu)(1-q)}{\mu(1-p)}\right] \\
		&\quad +C_b\left[(1-\mu)\ln\frac{\mu}{1-\mu}-(1-\mu)q\ln\frac{\mu p}{(1-\mu)q}-(1-\mu)(1-q)\ln\frac{\mu(1-p)}{(1-\mu)(1-q)}\right] \\
		&= C_g\mu\left[-p\ln\frac{q}{p}-(1-p)\ln\frac{1-q}{1-p}\right] + C_b(1-\mu)\left[-q\ln\frac{p}{q}-(1-q)\ln\frac{1-p}{1-q}\right] \\
		&=C_g\mu\left[p\ln\frac{p}{q}+(1-p)\ln\frac{1-p}{1-q}\right] + C_b(1-\mu)\left[q\ln\frac{q}{p}+(1-q)\ln\frac{1-q}{1-p}\right] \\
		&= C_g\mu D_{KL}(P||Q) + C_b(1-\mu)D_{KL}(Q||P),
	\end{align*}
	where \(D_{KL}\) is the Kullback-Leibler divergence \citep{kullback_leibler_1951}, and \(P\) and \(Q\) are Bernoulli distributions with success rates \(p\) and \(q\), respectively. Notice that it is an affine function of \(\mu\), and it is increasing in \(\mu\) if \(\frac{C_g}{C_b}>\frac{D_{KL}(P||Q)}{D_{KL}(Q||P)}\) and decreasing in \(\mu\) if \(\frac{C_g}{C_b}<\frac{D_{KL}(P||Q)}{D_{KL}(Q||P)}\).

	We now compute the marginal rate of substitution \(MRS(\pi|\mu)\). Observe that
	\begin{gather}
		\frac{\partial f(\pi,\mu)}{\partial p} = - \underbrace{C_b\frac{p-q}{p(1-p)}}_{=:A_1(p,q)} - \mu \underbrace{\left[-1+C_g\ln\frac{p(1-q)}{(1-p)q}-C_b\frac{p-q}{p(1-p)}\right]}_{=:A_2(p,q)}, \label{eq.f1}\\
		\frac{\partial f(\pi,\mu)}{\partial q} = \underbrace{1+C_b\ln\frac{p(1-q)}{(1-p)q}}_{=:A_3(p,q)} + \mu \underbrace{\left[-1+C_g\frac{p-q}{q(1-q)}-C_b\ln\frac{p(1-q)}{(1-p)q}\right]}_{=:A_4(p,q)}. \label{eq.f2}
	\end{gather}
	Hence,
	\begin{equation*}
		MRS(\pi|\mu) = -\frac{\partial f(\pi,\mu)/\partial p}{\partial f(\pi,\mu)/\partial q} = \frac{A_1(p,q)+A_2(p,q)\mu}{A_3(p,q)+A_4(p,q)\mu}.
	\end{equation*}
	Taking derivative with respect to \(\mu\),
	\begin{equation*}
		\frac{\partial}{\partial\mu} MRS(\pi|\mu) = \frac{\Delta(p,q)}{\big[A_3(p,q)+A_4(p,q)\mu\big]^2},
	\end{equation*}
	where
	\begin{align*}
		\Delta(p,q) &= A_2(p,q)A_3(p,q)-A_1(p,q)A_4(p,q) \\
		&= (C_g-C_b)\ln\frac{p(1-q)}{(1-p)q} + C_gC_b\left[\left(\ln\frac{p(1-q)}{(1-p)q}\right)^2-\frac{(p-q)^2}{pq(1-p)(1-q)}\right] - 1.
	\end{align*}
	Notice that \(\Delta(p,q)\) is independent of \(\mu\). Therefore, the marginal rate of substitution \(MRS(\pi|\mu)\) is monotonic in the sender's prior belief \(\mu\), and whether it is increasing or decreasing is determined by the sign of \(\Delta(p,q)\).

	Define a change of variable
	\begin{equation}\label{eq.a17}
		t=\frac{(1-p)q}{p(1-q)}.
	\end{equation}
	\(t\in(0,1)\) for all \((p,q)\in\Pi^\circ\). We can rewrite \(\Delta\) as a function of just \(t\). To simplify notation, we use the same letter \(\Delta\) to denote this function
	\begin{equation}\label{eq.a14}
		\Delta(t)=-(C_g-C_b)\ln t+C_gC_b\left[(\ln t)^2-\frac{(1-t)^2}{t}\right]-1.
	\end{equation}
	Taking derivative of (\ref{eq.a14}),
	\begin{equation}\label{eq.a15}
		\Delta'(t) = -\frac{C_g-C_b}{t} + \frac{C_gC_b}{t}\left(2\ln t-t+\frac{1}{t}\right).
	\end{equation}
	Hence, \(\Delta'(t)=0\) if and only if
	\begin{equation}\label{eq.a19}
		2\ln t-t+\frac{1}{t} = \frac{1}{C_b}-\frac{1}{C_g}.
	\end{equation}
	
	The left-hand side of (\ref{eq.a19}) is positive for all \(t\in(0,1)\) and decreasing in \(t\). As \(t\to 0\), it goes to infinity, and at \(t=1\), it equals zero. Therefore, if \(C_g\leq C_b\), \(\Delta'(t)>0\) for all \(t\in(0,1)\), so \(\Delta(t)<\Delta(1)=-1\) for all \(t\in(0,1)\). That is, the marginal rate of substitution at any experiment is decreasing in the sender's prior belief, and the single-crossing property holds.

	If \(C_g>C_b\), (\ref{eq.a19}) has a unique solution \(t^\star\in(0,1)\). \(\Delta(t)\) is single-peaked and obtains its maximum at \(t=t^\star\). Therefore, the single-crossing property holds if and only if \(\Delta(t^\star)\leq 0\). Notice that \(\Delta(1)=-1\) and \(\Delta'(1)<0\), so \(\Delta(t^\star)>-1\). By envelope theorem,
	\begin{equation}\label{eq.d9}
		\frac{\partial\Delta(t^\star)}{\partial C_g} = -\ln t^\star + C_b\left[(\ln t^\star)^2-\frac{(t^\star-1)^2}{t^\star}\right] \geq \frac{\Delta(t^\star)+1}{C_g} > 0.
	\end{equation}
	Notice that \(y'(x)=\frac{y(x)+1}{x}\) solves a linear function \(y\) of \(x\). Therefore, (\ref{eq.d9}) implies that \(\Delta(t^\star)\) increases in \(C_g\) at least as fast as a linear function. Moreover, given any \(C_b>0\), as \(C_g\downarrow C_b\), \(t^\star\to 1\), and \(\Delta(t^\star)<0\). Therefore, \(\Delta(t^\star)=0\) has a unique solution \(C_g=\hat{K}(C_b)\), and the single-crossing property holds if and only if \(C_g\leq\hat{K}(C_b)\).

	We claim that
	\begin{equation}\label{eq.b3}
		\hat{K}(C_b)=\left(\frac{\hat{x}(C_b)^2}{1+\hat{x}(C_b)}-\frac{1}{C_b}\right)^{-1},
	\end{equation}
	where \(x=\hat{x}(C_b)>0\) solves
	\begin{equation} \label{eq.b4}
		x-\ln(1+x)=\frac{1}{C_b}.
	\end{equation}
	The left-hand side of (\ref{eq.b4}) is increasing in \(x\) on \(\mathbb{R}_+\). It equals zero at \(x=0\) and goes to infinity as \(x\to\infty\). Hence, \(\hat{x}(C_b)\) and the right-hand side of (\ref{eq.b3}) are well defined.

	We verify the claim by showing that \(\Delta(t^\star)=0\) if \(C_g\) equals (\ref{eq.b3}). In (\ref{eq.a19}), substituting \(C_g\) using (\ref{eq.b3}), we have
	\begin{equation}\label{eq.d11}
		2\ln t-t+\frac{1}{t}=\frac{2}{C_b}-\frac{\hat{x}(C_b)^2}{1+\hat{x}(C_b)}.
	\end{equation}
	By (\ref{eq.b4}), it is easy to see that \(t^\star=\frac{1}{1+\hat{x}(C_b)}\) solves (\ref{eq.d11}). To simplify notation, we write \(x\) for \(\hat{x}(C_b)\). Then evaluating (\ref{eq.a14}) at \(t^\star\) yields
	\begin{align*}
		\Delta(t^\star) &= -\frac{C_b^2x^2-2C_b(1+x)}{C_bx^2-x-1}\ln(1+x) + \frac{C_b^2(1+x)}{C_bx^2-x-1}\left[(\ln(1+x))^2-\frac{x^2}{1+x}\right]-1 \\
		&= -\frac{C_b^2x^2-2C_b(1+x)}{C_bx^2-x-1}\left(x-\frac{1}{C_b}\right) + \frac{C_b^2(1+x)}{C_bx^2-x-1}\left(x-\frac{1}{C_b}\right)^2 - \frac{C_b^2x^2}{C_bx^2-x-1}-1 \\
		&= \left(x-\frac{1}{C_b}\right)\frac{C_b(1+x)+C_b^2x}{C_bx^2-x-1}-\frac{C_b^2x^2}{C_bx^2-x-1}-1 \\
		&= \frac{C_bx^2-x-1}{C_bx^2-x-1}-1 = 0,
	\end{align*}
	which is the desired result.

	By (\ref{eq.b4}), \(\hat{x}\) is a smooth function of \(C_b\). Hence, by (\ref{eq.b3}), \(\hat{K}\) is a smooth function of \(C_b\). We are left to show that \(\hat{K}'(C_b)>0\), \(\hat{K}''(C_b)<0\), and \(\lim_{C_b\downarrow 0}\hat{K}'(C_b)=\infty\). Taking derivative of (\ref{eq.b4}),
	\begin{equation*}
		\hat{x}'(C_b)=-\frac{1}{C_b^2}\left(1-\frac{1}{1+\hat{x}(C_b)}\right)^{-1}.
	\end{equation*}
	Substituting this in the derivative of (\ref{eq.b3}), we have
	\begin{equation*}
		\hat{K}'(C_b) = \left(\frac{\hat{K}(C_b)}{C_b}\right)^2\frac{1}{1+\hat{x}(C_b)} > 0.
	\end{equation*}
	Applying l'H\^{o}pital's rule, we have
	\begin{equation*}
		\lim_{C_b\to 0}\frac{1}{\hat{K}'(C_b)}=\lim_{C_b\to 0}\frac{1}{1+\hat{x}(C_b)}=0.
	\end{equation*}
	That is, \(\lim_{C_b\downarrow 0}\hat{K}'(C_b)=\infty\). To calculate the second order derivative, we use the fact that \([(1+\hat{x}(C_b))\hat{K}'(C_b)]'=\hat{x}'(C_b)\hat{K}'(C_b)+(1+\hat{x}(C_b))\hat{K}''(C_b)\). Hence,
	\begin{align}
		\hat{K}''(C_b) &= \frac{[(1+\hat{x}(C_b))\hat{K}'(C_b)]'-\hat{x}'(C_b)\hat{K}'(C_b)}{1+\hat{x}(C_b)} \nonumber\\
		&= \frac{\hat{K}(C_b)^2}{(1+\hat{x}(C_b))C_b^3}\left[2\left(\frac{\hat{K}(C_b)}{C_b(1+\hat{x}(C_b)){}}-1\right)+\frac{1}{C_b\hat{x}(C_b)}\right].\label{eq.b10}
	\end{align}
	Substituting \(\hat{K}(C_b)\) using (\ref{eq.b3}) and \(\frac{1}{C_b}\) using (\ref{eq.b4}), and to shorten notation, writing \(x\) for \(\hat{x}(C_b)\), the bracket in (\ref{eq.b10}) becomes
	\begin{equation}\label{eq.b11}
		\frac{3x^2-(x^2+2x)\ln(1+x)-(1+x)(\ln(1+x))^2}{(1+x)\ln(1+x)-x}.
	\end{equation}
	The denominator of (\ref{eq.b11}) is positive for all \(x>0\). The third order derivative of the numerator of (\ref{eq.b11}) with respect to \(x\) is
	\begin{equation*}
		\frac{-2x(2+x)+2(1+x)\ln(1+x)}{(1+x)^3} < \frac{2[\ln(1+x)-x]}{(1+x)^2},
	\end{equation*}
	which is negative for all \(x>0\). Notice that evaluated at \(x=0\), the numerator of (\ref{eq.b11}) and its derivatives up to the third order are zero. Therefore, the numerator of (\ref{eq.b11}) is negative for all \(x>0\). Hence, (\ref{eq.b11}) is negative, and \(\hat{K}''(C_b)<0\).
\end{proof}

\subsection{Proof of Proposition \ref{pro3_part2}}

\begin{proof}
	If \(C_g>\hat{K}(C_b)\), \(\Delta(t^\star)>0\). Therefore, \(\Delta(t)\) has two zeros \(\hat{t}\) and \(\check{t}\) such that \(1>\hat{t}>t^\star>\check{t}>0\), and
	\begin{equation*}
		\Delta(t) \left\{\begin{array}{ll}
			<0 & \text{if \(t\in(0,\check{t})\cup(\hat{t},1)\)} \\
			=0 & \text{if \(t\in\{\check{t},\hat{t}\}\)} \\ 
			>0 & \text{if \(t\in(\check{t},\hat{t})\)}
		\end{array}\right..
	\end{equation*}
	Letting
	\begin{equation}\label{eq.b5}
		\hat{\mathbf{p}}(q) = \frac{q}{q+\hat{t}(1-q)},\qquad \check{\mathbf{p}}(q) = \frac{q}{q+\check{t}(1-q)}
	\end{equation}
	yields the desired result (\ref{eq.4}).

	Notice that
	\begin{equation*}
		\frac{\partial\Delta(t)}{\partial C_b} = \ln t + C_g\left[(\ln t)^2-\frac{(1-t)^2}{t}\right] < 0
	\end{equation*}
	for all \(t\in(0,1)\). Therefore, \(\hat{t}\) is decreasing in \(C_b\), and \(\check{t}\) is increasing in \(C_b\). Equivalently, \(\hat{\mathbf{p}}(q)\) is increasing in \(C_b\), and \(\check{\mathbf{p}}(q)\) is decreasing in \(C_b\) for all \(q\in(0,1)\). On the other hand,
	\begin{equation*}
		\frac{\partial\Delta(t)}{\partial C_g} = -\ln t + C_b\left[(\ln t)^2-\frac{(1-t)^2}{t}\right] = \frac{\Delta(t)+1-C_b\ln t}{C_g}.
	\end{equation*}
	Hence, at \(\hat{t}\) and \(\check{t}\),
	\begin{equation*}
		\frac{\partial\Delta(t)}{\partial C_g} = \frac{1-C_b\ln t}{C_g} > 0.
	\end{equation*}
	Therefore, \(\hat{t}\) is increasing in \(C_g\), and \(\check{t}\) is decreasing in \(C_b\). Equivalently, \(\hat{\mathbf{p}}(q)\) is decreasing in \(C_g\), and \(\check{\mathbf{p}}(q)\) is increasing in \(C_g\).
\end{proof}

\subsection{Proof of Lemma \ref{lmm4}}

\begin{proof}
	To obtain a contradiction, suppose that there exists a D1 equilibrium where both sender types \(i\) and \(j\) choose \(\pi=(p,q)\). Without loss of generality, let \(i\) be the highest sender type that chooses \(\pi\), hence \(\beta(\pi)<\mu_i\). For the rest of the proof, we assume that \(MRS(\pi|\mu_i)< MRS(\pi|\mu_j)\). A similar proof works if \(MRS(\pi|\mu_i)> MRS(\pi|\mu_j)\).

	Since \(\pi\) is persuasive at \(\beta(\pi)<\mu_i\), by continuity, there is a neighborhood of \(\pi\) wherein all experiments are persuasive at belief \(\mu_i\).

	For each sender type \(\theta\), denote by \(\mathfrak{q}_\theta\) her indifference curve through \(\pi\). That is, 
	\begin{equation*}
		f((p',\mathfrak{q}_\theta(p')),\mu_\theta)=f(\pi,\mu_\theta)
	\end{equation*}
	for all \(p'\). \(\mathfrak{q}_\theta\) is well defined on a small neighborhood of \(p\), and it is the solution to the initial value problem
	\begin{equation}\label{eq.d13}
		\mathfrak{q}_\theta'(p') = MRS((p',\mathfrak{q}_\theta(p'))|\mu_\theta)~\text{and}~\mathfrak{q}_\theta(p)=q.
	\end{equation}
	By Propositions \ref{pro3}, \(MRS(\pi'|\mu)\) is monotone in \(\mu\) for all \(\pi'\). Hence, there exists \(\varepsilon>0\) such that \(\mathfrak{q}_\theta(p')\) is strictly decreasing in \(\theta\) for all \(p'\in(p,p+\varepsilon)\).

	Hence, we can find \(\tilde{\pi}=(\tilde{p},\tilde{q})\) such that \(\tilde{\pi}\) is persuasive at belief \(\mu_i\), \(\tilde{p}\in(p,p+\varepsilon)\), and \(\tilde{q}\in(\mathfrak{q}_i(\tilde{p}),\mathfrak{q}_{i-1}(\tilde{p}))\). By construction, \(f(\tilde{\pi},\mu_i)>f((\tilde{p},\mathfrak{q}_i(\tilde{p})),\mu_i)=f(\pi,\mu_i)=v^\star_i\), and \(f(\tilde{\pi},\mu_k)<f((\tilde{p},\mathfrak{q}_k(\tilde{p})),\mu_k)=f(\pi,\mu_k)= v^\star_k\) for all \(k<i\). That is, \(\tilde{\pi}\) is a profitable deviation for the sender type \(i\) if \(\beta(\tilde{\pi})\geq\mu_i\), so \(\beta(\tilde{\pi})<\mu_i\). However, since \(\tilde{\pi}\) is strictly equilibrium dominated for all sender types \(k<i\), the D1 criterion requires that \(\beta(\tilde{\pi})\geq\mu_i\), a contradiction.
\end{proof}

\subsection{Results relating to the symmetric information benchmark}

We present in this section results relating to the symmetric information benchmark where the sender and the receiver have heterogeneous priors \(\mu\) and \(\beta\), respectively, and they agree to disagree. Lemma \ref{lmm.a1} shows that \(V(\mu,\beta)>0\) if and only if the ``average'' cost of experiments is sufficiently low. Two auxiliary functions, \(M(p,q)\) and \(N(p,q)\) are defined. Lemma \ref{lmm.a7} proves some properties of these functions that are necessary for the proof of Lemma \ref{lmm.a6}. Lemma \ref{lmm.a5} characterizes the equilibrium outcome of the symmetric information benchmark, which is used to prove Proposition \ref{pro7}.

\begin{lmm}\label{lmm.a1}
	In the symmetric information benchmark, \(V(\mu,\beta)>0\) if and only if \(\mathbf{F}(C_g,C_b,\mu,\beta)>0\), where
	\begin{align*}
		\mathbf{F}(C_g,C_b,\mu,\beta)=\mu+(1-\mu)\mathbf{Q}(\beta)+&\mu C_g\left[\ln \mathbf{Q}(\beta)+1-\mathbf{Q}(\beta)\right] \\
		&\quad-(1-\mu)C_b\left[\mathbf{Q}(\beta)\ln \mathbf{Q}(\beta)+1-\mathbf{Q}(\beta)\right].
	\end{align*}
	If \(\mathbf{F}(C_g,C_b,\mu,\beta)>0\), there exists a unique equilibrium outcome which is persuasive at belief \(\beta\). If \(\mathbf{F}(C_g,C_b,\mu,\beta)\leq 0\), the unique equilibrium outcome is the uninformative experiment.
\end{lmm}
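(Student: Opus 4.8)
The plan is to reduce the two–dimensional maximization defining $V(\mu,\beta)$ to a one–dimensional, concave problem along the boundary of the persuasiveness constraint, and then to recognize $\mathbf{F}$ as the right–derivative at the origin of that one–dimensional objective.

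First I would establish the reduction. From the computation in the proof of Proposition~\ref{pro3} we have $c(\pi\mid\mu)=C_g\mu\,D_{KL}(P\|Q)+C_b(1-\mu)\,D_{KL}(Q\|P)$, where $P,Q$ are Bernoulli with parameters $p,q$. Since $(p,q)\mapsto D_{KL}(P\|Q)$ is jointly convex, and strictly so on $\Pi^\circ$, the map $c(\cdot\mid\mu)$ is jointly convex on $\Pi$, so $f(\cdot,\mu)$ is jointly concave on $\Pi$, strictly on $\Pi^\circ$, with the unique global maximizer $(1,1)$ (where $f=1$, since $f(p,q)\le\mu p+(1-\mu)q\le1$ with equality only at $p=q=1$). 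Because $\beta<\bar\beta$ gives $\mathbf{Q}(\beta)<1$, this maximizer is not persuasive at $\beta$. A standard concavity argument then confines the persuasive optimum to the line $q=\mathbf{Q}(\beta)p$: for any persuasive $\pi$ with $q/p<\mathbf{Q}(\beta)$, moving along the segment toward $(1,1)$ strictly raises $f$ until the segment crosses that line. Hence $V(\mu,\beta)=\sup_{p\in(0,1)}\varphi(p)$, where $\varphi(p):=f\big((p,\mathbf{Q}(\beta)p),\mu\big)$ is concave (the restriction of a concave function to a line), strictly so, with $\varphi(0^+)=0$ and $\varphi(1^-)=-\infty$ (the experiment $(1,\mathbf{Q}(\beta))$ reveals the bad state). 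If $V(\mu,\beta)>0$, strict concavity plus these endpoint values yield a unique interior maximizer $p^\star$, and the corresponding $\pi^\star=(p^\star,\mathbf{Q}(\beta)p^\star)$ is the unique persuasive maximizer.

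Next I would compute $\varphi'(0^+)$. Writing $Q=\mathbf{Q}(\beta)$ and differentiating $D_{KL}(\mathrm{Bern}(p)\|\mathrm{Bern}(Qp))$ and $D_{KL}(\mathrm{Bern}(Qp)\|\mathrm{Bern}(p))$ in $p$, then evaluating at $p=0$, gives $Q-1-\ln Q$ and $Q\ln Q+1-Q$ respectively, so that
\[
\varphi'(0^+)=\mu+(1-\mu)Q+C_g\mu(\ln Q+1-Q)-C_b(1-\mu)(Q\ln Q+1-Q)=\mathbf{F}(C_g,C_b,\mu,\beta).
\]
Finally I would conclude. If $\mathbf{F}>0$, then $\varphi(p)>0$ for small $p>0$, so $V(\mu,\beta)\ge\varphi(p)>0$, and by the reduction the benchmark game has the unique equilibrium outcome $\pi^\star$, persuasive at $\beta$. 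If $\mathbf{F}\le0$, concavity gives $\varphi(p)\le\varphi(0)+\varphi'(0)p\le0$, and strict concavity makes this strict for $p>0$; since $V(\mu,\beta)\ge0$ always (the footnote preceding the lemma), $V(\mu,\beta)=0$, every persuasive experiment yields $f<0$, and the uninformative experiment is the unique equilibrium outcome.

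\textbf{Main obstacle.} The delicate step is the reduction in the first paragraph — restricting to the line $q=\mathbf{Q}(\beta)p$ and getting uniqueness of the maximizer there. The joint‑convexity‑of‑$D_{KL}$ observation together with the ``push toward the global maximum $(1,1)$'' argument is the clean route; an alternative that avoids two‑dimensional concavity is to note that every persuasive experiment lies on a ray through the origin of slope $r\le\mathbf{Q}(\beta)$, that $f$ is concave with value $0$ at the origin along each such ray, so $f$ is positive somewhere on the ray if and only if its radial derivative at the origin, $\psi(r):=\mu+(1-\mu)r+C_g\mu(\ln r+1-r)-C_b(1-\mu)(r\ln r+1-r)$, is positive, and then to verify $\psi'(r)=(1-\mu)+C_g\mu\frac{1-r}{r}-C_b(1-\mu)\ln r>0$ on $(0,1)$, so the best ray is $r=\mathbf{Q}(\beta)$ and positivity of the supremum is equivalent to $\psi(\mathbf{Q}(\beta))=\mathbf{F}>0$. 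Everything after the reduction is routine calculus and convexity bookkeeping.
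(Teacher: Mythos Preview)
Your proposal is correct and follows essentially the same architecture as the paper: reduce to the ray $q=\mathbf{Q}(\beta)p$, observe that $\varphi(p)=f((p,\mathbf{Q}(\beta)p),\mu)$ is concave with $\varphi(0^+)=0$ and $\varphi(1^-)=-\infty$, identify $\mathbf{F}$ as $\varphi'(0^+)$, and read off the dichotomy.

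The one place you work harder than necessary is the reduction. The paper dispatches it in one line by noting that $\partial f/\partial q>0$ on $\Pi^\circ$ (both terms of $(1-\mu)\big[1+C_b\ln\frac{p(1-q)}{(1-p)q}\big]+\mu C_g\frac{p-q}{q(1-q)}$ are positive when $p>q$), so for any persuasive $(p,q)$ with $q<\mathbf{Q}(\beta)p$ one simply raises $q$ to $\mathbf{Q}(\beta)p$ and improves. Your joint-concavity ``push toward $(1,1)$'' argument is valid but heavier, and your own alternative in the obstacle paragraph---showing the radial derivative $\psi(r)$ is increasing in $r$ on $(0,1)$---is in fact the same monotonicity-in-$q$ observation in different coordinates, and is the cleaner route. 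After the reduction, your computation of $\varphi'(0^+)=\mathbf{F}$ and the concavity-based conclusion match the paper exactly; the paper additionally packages $\varphi'(p)$ as $\mu M(p,Qp)+(1-\mu)N(p,Qp)$ with explicit formulas for $M,N$, which it reuses in later lemmas.
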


\begin{proof}
	Notice that the sender's expected payoff \(f((p,q),\mu)\) is increasing in \(q\), and that \(\lim_{p\downarrow 0}f((p,\mathbf{Q}(\beta)p),\mu)=0\). Therefore, \(V(\mu,\beta)\) is the value of
	\begin{equation}\label{eq.2}
		\max_{p\in[0,1]}f((p,\mathbf{Q}(\beta)p),\mu).
	\end{equation}
	The objective function of (\ref{eq.2}) is continuous and concave. Therefore, it is solved by the first order condition. To simplify notation, we will denote briefly the derivative \(\frac{d}{dp}f((p,\mathbf{Q}(\beta)p),\mu)\) by \(f'(p)\) and \(\mathbf{Q}(\beta)\) by \(Q\). Observe that that
	\begin{equation}\label{eq.b2}
		f'(p) = \mu M(p,Qp)+(1-\mu)N(p,Qp),
	\end{equation}
	where
	\begin{gather*}
		M(p,q) := 1+C_g\left[\frac{p-q}{p(1-q)}-\ln\frac{p(1-q)}{(1-p)q}\right], \\
		N(p,q) := \frac{q}{p}\left(1-C_b\left[\frac{p-q}{(1-p)q}-\ln\frac{p(1-q)}{(1-p)q}\right]\right).
	\end{gather*}
	Since \(\lim_{p=1}f'(p)=-\infty\), the first order condition \(f'(p)=0\) has an interior solution \(\hat{p}\in(0,1)\) if and only if
	\begin{equation*}
		\lim_{p\downarrow 0} f'(p) = \mathbf{F}(C_g,C_b,\mu,\beta) > 0.
	\end{equation*}
	If \(\mathbf{F}(C_g,C_b,\mu,\beta)> 0\), there is a unique equilibrium where the sender chooses \(\hat{\pi}=(\hat{p},Q\hat{p})\), and \(V(\mu,\beta)>0\) is her equilibrium payoff. If \(\mathbf{F}(C_g,C_b,\mu,\beta)\leq 0\), \(f'(p)<0\) for all \(p\in(0,1)\). Hence, the sender's problem (\ref{eq.2}) has the corner solution \(\hat{p}=0\). In this case, the unique equilibrium outcome is the uninformative experiment.
\end{proof}

\begin{rmk}
	Notice that \(\ln Q+1-Q<0\), \(Q\ln Q+1-Q>0\) for all \(Q\in(0,1)\). Hence, Lemma \ref{lmm.a1} shows that persuasion is possible if and only if a weighted average of \(C_g\) and \(C_b\) is below some threshold.
\end{rmk}

\begin{lmm}\label{lmm.a7}
	Let \(C_g>\hat{K}(C_b)\). \(M(p,q), N(p,q)>0\) if \(q<p\leq\hat{\mathbf{p}}(q)\); \(M(p,q), N(p,q)<0\) if \(\check{\mathbf{p}}(q)\leq p<1\).
\end{lmm}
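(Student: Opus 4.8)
The plan is to push everything through the one-dimensional substitution $t=\frac{(1-p)q}{p(1-q)}\in(0,1)$ used in the proof of Proposition~\ref{pro3}, and then to locate the zero sets of $M$ and $N$ relative to those of $\Delta$. Since $\frac{p-q}{p(1-q)}=1-t$, $\frac{p-q}{(1-p)q}=\frac1t-1$ and $\ln\frac{p(1-q)}{(1-p)q}=-\ln t$, the functions from the proof of Lemma~\ref{lmm.a1} become
\begin{equation*}
	M(p,q)=1-C_g\,\phi(t),\qquad N(p,q)=\tfrac{q}{p}\bigl(1-C_b\,\psi(t)\bigr),
\end{equation*}
where $\phi(t):=t-1-\ln t$ and $\psi(t):=\frac1t-1+\ln t$. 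Since $\phi'(t)=1-\frac1t<0$ and $\psi'(t)=\frac{t-1}{t^2}<0$ on $(0,1)$, both $\phi$ and $\psi$ are strictly decreasing, vanish at $t=1$, and tend to $+\infty$ as $t\downarrow0$, hence are bijections onto $(0,\infty)$; thus $M>0\iff t>t_M$ and $N>0\iff t>t_N$ where $t_M:=\phi^{-1}(1/C_g)$ and $t_N:=\psi^{-1}(1/C_b)$, and $t_N$ is independent of $C_g$. As $t$ is strictly decreasing in $p$ for fixed $q$ (equal to $1$ at $p=q$, decreasing to $0$ as $p\uparrow1$), and since $p=\hat{\mathbf p}(q)\iff t=\hat t$ and $p=\check{\mathbf p}(q)\iff t=\check t$, the regions $q<p\le\hat{\mathbf p}(q)$ and $\check{\mathbf p}(q)\le p<1$ are $t\in[\hat t,1)$ and $t\in(0,\check t]$. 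So it suffices to show $t_M,t_N\in(\check t,\hat t)$.

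The first step is to show $t_M>t_N$, which is where the threshold $\hat K$ must be pinned down. Let $y:=\hat x(C_b)$ solve $y-\ln(1+y)=1/C_b$ as in \eqref{eq.b4}. Then $\psi\!\bigl(\tfrac1{1+y}\bigr)=(1+y)-1-\ln(1+y)=1/C_b$, so $t_N=\frac1{1+y}$, and using \eqref{eq.b3}–\eqref{eq.b4},
\begin{equation*}
	\phi(t_N)=\tfrac1{1+y}-1+\ln(1+y)=-\tfrac{y}{1+y}+\ln(1+y)=\tfrac{y^{2}}{1+y}-\tfrac1{C_b}=\frac{1}{\hat K(C_b)} .
\end{equation*}
Hence $C_g>\hat K(C_b)$ gives $\phi(t_M)=1/C_g<1/\hat K(C_b)=\phi(t_N)$, and since $\phi$ is strictly decreasing, $t_M>t_N$.

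The second step uses an algebraic identity tying $\Delta$ to $M$ and $N$. From the expressions for $\partial f/\partial p$ and $\partial f/\partial q$ in the proof of Proposition~\ref{pro3} and the fact that $\tfrac{d}{dp}f((p,Qp),\mu)=\mu M+(1-\mu)N$ (equation \eqref{eq.b2}), one reads off $N=-A_1+\tfrac{q}{p}A_3$ and $M-N=-A_2+\tfrac{q}{p}A_4$, so that
\begin{equation*}
	\Delta=A_2A_3-A_1A_4=-M\,A_3+N\,(A_3+A_4),
\end{equation*}
where on $\{p>q\}$ we have $A_3=1+C_b\ln\frac{p(1-q)}{(1-p)q}>0$ and $A_3+A_4=C_g\frac{p-q}{q(1-q)}>0$. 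Evaluating the identity at $t=t_N$, where $N=0$, gives $\Delta(t_N)=-M(t_N)A_3>0$, since $t_N<t_M$ forces $\phi(t_N)>1/C_g$, i.e.\ $M(t_N)<0$; evaluating at $t=t_M$, where $M=0$, gives $\Delta(t_M)=N(t_M)(A_3+A_4)>0$, since $t_M>t_N$ forces $\psi(t_M)<1/C_b$, i.e.\ $N(t_M)>0$. By Proposition~\ref{pro3_part2}, $\Delta>0$ only on $(\check t,\hat t)$, so $t_N,t_M\in(\check t,\hat t)$ (and, with $t_N<t_M$, $\check t<t_N<t_M<\hat t$). Finally, for $t\in[\hat t,1)$ one has $t>t_M>t_N$, so $M,N>0$, and for $t\in(0,\check t]$ one has $t<t_N<t_M$, so $M,N<0$; translating back through the substitution yields the two assertions.

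The main obstacle is the first step: the lemma collapses entirely to the inequality $t_M>t_N$, and the only clean route I see is to recognize that $\hat K(C_b)=1/\phi(t_N)$ — i.e.\ that the implicit single-crossing threshold of Proposition~\ref{pro3} is precisely the reciprocal of $\phi$ at the zero of $N$. The identity $\Delta=-MA_3+N(A_3+A_4)$ is a short but slightly fiddly computation; once it and $t_M>t_N$ are available, the containments $t_N,t_M\in(\check t,\hat t)$ and the final conclusion are routine sign-chasing.
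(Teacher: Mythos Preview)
Your proof is correct and takes a genuinely different route from the paper's. Both arguments pass through the substitution $t=\frac{(1-p)q}{p(1-q)}$ and the monotonicity of $M,N$ in $t$, but they diverge from there. The paper derives the two identities
\[
\Delta(t)=-m(t)n(t)+\tfrac{1-t}{t}\bigl[C_bm(t)-tC_gn(t)\bigr],\qquad \Delta'(t)=\tfrac1t\bigl[C_bm(t)-C_gn(t)\bigr],
\]
uses $\Delta'(t^\star)=0$ and $\Delta(t^\star)>0$ to force $m(t^\star),n(t^\star)>0$ (whence $m(\hat t),n(\hat t)>0$ by monotonicity), and then handles $\check t$ by a separate two-step contradiction: first ruling out $m(\check t)n(\check t)\le 0$, then ruling out $m(\check t),n(\check t)>0$ via the existence of a zero of $n$ below $\check t$.

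You instead name the zeros $t_M,t_N$ of $M,N$ directly, prove $t_M>t_N$ by recognizing the closed form \eqref{eq.b3}--\eqref{eq.b4} as the statement $\phi(t_N)=1/\hat K(C_b)$, and then use your identity $\Delta=-MA_3+N(A_3+A_4)$ (which indeed reduces to the paper's $\Delta(t)$ once one substitutes $N=\tfrac{q}{p}n$ and simplifies) to get $\Delta(t_M),\Delta(t_N)>0$ and hence $\check t<t_N<t_M<\hat t$. Your argument is more symmetric (the two boundary cases are treated identically) and yields extra information: the precise location of the sign changes of $M$ and $N$, and the clean reinterpretation of the single-crossing threshold as $\hat K(C_b)=1/\phi(t_N)$. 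The paper's argument, on the other hand, does not need to invoke the explicit formula for $\hat K$ and works entirely with the qualitative facts $\Delta(t^\star)>0$, $\Delta'(t^\star)=0$, $\Delta'(\check t)>0$.
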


\begin{proof}
	Using the change of variable (\ref{eq.a17}), we can rewrite \(M(p,q)\) and \(N(p,q)\) as
	\begin{gather*}
		M(p,q) = 1+C_g(1-t+\ln t)=:m(t)\\
		N(p,q) = \frac{q}{p}\left[1-C_b\left(\frac{1}{t}-1+\ln t\right)\right]=:\frac{q}{p}\cdot n(t).
	\end{gather*}
	Both \(m(t)\) and \(n(t)\) are increasing in \(t\). Hence, it is sufficient to show that \(m(\hat{t}),n(\hat{t})>0\) and \(m(\check{t}),n(\check{t})<0\).

	Notice that
	\begin{equation}\label{eq.a22}
		\Delta(t) = -m(t) n(t) +\frac{1-t}{t}\big[C_b m(t)-tC_g n(t)\big],
	\end{equation}
	and
	\begin{equation}\label{eq.a23}
		\Delta'(t) = \frac{1}{t}\big[C_b m(t)-C_g n(t)\big].
	\end{equation}

	Recall that \(\Delta'(t^\star)=0\). Therefore, \(C_b m(t^\star)=C_g n(t^\star)\) by (\ref{eq.a23}), and
	\begin{equation*}
		m(t^\star) = \frac{C_b m(t^\star)-t^\star C_g n(t^\star)}{(1-t^\star)C_b} = \frac{t^\star}{(1-t^\star)^2C_b}\big[\Delta(t^\star)+m(t^\star) n(t^\star)\big].
	\end{equation*}
	Since \(\Delta(t^\star)>0\) and \(m(t^\star) n(t^\star)=\frac{C_b}{C_g}(m(t^\star))^2\geq 0\), we conclude that \(m(t^\star),n(t^\star)>0\). By monotonicity, \(m(\hat{t}),n(\hat{t})>0\).

	We now prove that \(m(\check{t}),n(\check{t})\) are negative. First, we show that \(m(\check{t}) n(\check{t})> 0\). Suppose that on the contrary, \(m(\check{t}) n(\check{t})\leq 0\). Since \(\Delta'(\check{t})>0\), (\ref{eq.a23}) implies that \(m(\check{t})\geq 0> n(\check{t})\) or \(m(\check{t})>0\geq n(\check{t})\). Hence, \(C_bm(\check{t})-\check{t}C_gn(\check{t})>0\). But \(\Delta(\check{t})=0\), so (\ref{eq.a22}) implies that \(m(\check{t}) n(\check{t})> 0\), a contradiction. Now suppose that \(m(\check{t}),n(\check{t})\) are both positive. Notice that \(\lim_{t\to 0}n(t)=-\infty\). Therefore, there exists a unique \(t'\in(0,\check{t})\) such that \(n(t')=0\). Since \(\Delta'\) is decreasing, \(\Delta'(t')>\Delta'(\check{t})>0\). Hence, by (\ref{eq.a23}), \(m(t')>0\), and by (\ref{eq.a22}), \(\Delta(t')>0=\Delta(\check{t})\). This is a contradiction to \(\Delta(t)\) being strictly increasing in \(t\) on \((0,t^\star)\) and that \(t'<\check{t}<t^\star\). Hence, \(m(\check{t}),n(\check{t})<0\).
\end{proof}

\begin{lmm}\label{lmm.a5}
	Let \(\mathbf{F}(C_g,C_b,\mu,\beta)>0\) and \(C_g>\hat{K}(C_b)\), and denote by \((\hat{p},Q\hat{p})\) the equilibrium outcome of the symmetric information benchmark. \(\hat{p}\in(\hat{\mathbf{p}}(q),\check{\mathbf{p}}(q))\), and \(\hat{p}\) is decreasing in \(\mu\).
\end{lmm}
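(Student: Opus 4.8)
The plan is to read $\hat p$ off the first-order condition from the proof of Lemma~\ref{lmm.a1}, use the sign information of Lemma~\ref{lmm.a7} to locate $\hat p$, and then obtain the comparative static in $\mu$ from a single-crossing argument based on Proposition~\ref{pro3_part2}. Write $Q:=\mathbf{Q}(\beta)$; since $\beta\le\mu_N<\bar\beta$ one has $Q<1$, so $(p,Qp)\in\Pi^\circ$ for every $p\in(0,1)$. Recall from the proof of Lemma~\ref{lmm.a1} that, because $\mathbf{F}(C_g,C_b,\mu,\beta)>0$, $\hat p$ is the unique point of $(0,1)$ with $\varphi_\mu'(\hat p)=0$, where $\varphi_\mu(p):=f((p,Qp),\mu)$ is strictly concave and $\varphi_\mu'(p)=\mu M(p,Qp)+(1-\mu)N(p,Qp)$.

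For the first claim I would argue by contradiction. Put $q:=Q\hat p$, so $q<\hat p$. If $\hat p\le\hat{\mathbf{p}}(q)$, then Lemma~\ref{lmm.a7} gives $M(\hat p,q)>0$ and $N(\hat p,q)>0$, hence $\varphi_\mu'(\hat p)>0$; if instead $\hat p\ge\check{\mathbf{p}}(q)$, then Lemma~\ref{lmm.a7} gives $M(\hat p,q)<0$ and $N(\hat p,q)<0$, hence $\varphi_\mu'(\hat p)<0$. Either case contradicts $\varphi_\mu'(\hat p)=0$, so $\hat{\mathbf{p}}(q)<\hat p<\check{\mathbf{p}}(q)$.

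The key step for monotonicity is that the first-order condition is equivalent to $MRS((\hat p,Q\hat p)\,|\,\mu)=Q$. Writing $f_p,f_q$ for the partial derivatives of $f(\cdot,\mu)$, the chain rule gives $\varphi_\mu'(p)=f_p+Qf_q=f_q\bigl(Q-MRS((p,Qp)\,|\,\mu)\bigr)$; and on $\Pi^\circ$ we have $f_q=(1-\mu)-\partial c/\partial q>0$, because a Blackwell more informative experiment is strictly costlier, so raising $q$ toward $p$ strictly decreases the cost. Hence $\varphi_\mu'(\hat p)=0$ is equivalent to $MRS((\hat p,Q\hat p)\,|\,\mu)=Q$, and, since $f_q>0$ along the ray, strict concavity of $\varphi_\mu$ yields $MRS((p,Qp)\,|\,\mu)<Q$ for $p<\hat p(\mu)$ and $MRS((p,Qp)\,|\,\mu)>Q$ for $p>\hat p(\mu)$. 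Now take $\mu<\mu'$, both in the range where $\mathbf{F}>0$. By the first claim the experiment $\pi_0:=(\hat p(\mu),Q\hat p(\mu))$ lies in the region $\hat{\mathbf{p}}(q)<p<\check{\mathbf{p}}(q)$, so by Proposition~\ref{pro3_part2} $\frac{\partial}{\partial\mu}MRS(\pi_0\,|\,\mu)>0$; because the numerator $\Delta(p,q)$ of this derivative does not depend on $\mu$, the map $\nu\mapsto MRS(\pi_0\,|\,\nu)$ is strictly increasing on $(0,1)$, so $MRS(\pi_0\,|\,\mu')>MRS(\pi_0\,|\,\mu)=Q$. Applying the crossing property above to the type $\mu'$, $MRS((p,Qp)\,|\,\mu')>Q$ forces $p>\hat p(\mu')$; with $p=\hat p(\mu)$ this gives $\hat p(\mu)>\hat p(\mu')$, i.e.\ $\hat p$ is strictly decreasing in $\mu$.

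The load-bearing but routine ingredients are: (i) $f_q>0$ on $\Pi^\circ$, which makes $MRS$ well defined at and near $\hat p$ and makes the identity $\varphi_\mu'=f_q(Q-MRS)$ sign-preserving; and (ii) the fact that $\hat p$ lands in the intermediate, single-crossing-failure region of Proposition~\ref{pro3_part2} — precisely the first claim, which is why the two parts are established together. I expect (ii) to be the real content: the monotonicity ``wants'' $\frac{\partial}{\partial\mu}MRS>0$ at the optimal experiment, and one needs the first claim to know one is in that region. A heavier alternative would substitute $t=(1-p)q/[p(1-q)]$, write $M=m(t)$, $N=(q/p)n(t)$, and pin down the sign of $M(\hat p,Q\hat p)$ by locating the zeros of $m,n$ inside $(\check t,\hat t)$ relative to the maximizer of $\Delta$, but the $MRS=Q$ observation makes that unnecessary.
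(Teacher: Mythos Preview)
Your argument is correct. For the first claim you argue exactly as the paper does: the first-order condition $\mu M+(1-\mu)N=0$ forces $M$ and $N$ to have opposite signs, so Lemma~\ref{lmm.a7} places $\hat p$ strictly between $\hat{\mathbf p}(q)$ and $\check{\mathbf p}(q)$.

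For the monotonicity in $\mu$, your route genuinely differs from the paper's. The paper computes $\partial_\mu\varphi_\mu'(\hat p)=M(\hat p,Q\hat p)-N(\hat p,Q\hat p)$ directly and then argues that this quantity is negative by showing that, as $C_g$ varies with $C_b$ fixed, the only way $M-N$ can vanish at an optimizer is $M=N=0$, and that this system has the unique solution $C_g=\hat K(C_b)$; hence for $C_g>\hat K(C_b)$ one has $M-N<0$ and the implicit-function theorem gives $\hat p$ decreasing. You instead rewrite the first-order condition as $MRS((\hat p,Q\hat p)\,|\,\mu)=Q$ and invoke Proposition~\ref{pro3_part2}: since $\hat p$ lies in the intermediate region, $\Delta>0$ there, so $MRS$ is strictly increasing in the sender's belief at that experiment; the comparative static then drops out from the sign identity $\varphi_\mu'=f_q\,(Q-MRS)$ and $f_q>0$. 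The two arguments are equivalent at the level of signs (at the optimizer, $\partial_\mu MRS=-\partial_\mu\varphi_\mu'/f_q$), but yours is shorter because it recycles Proposition~\ref{pro3_part2} rather than re-deriving a threshold characterization, while the paper's route makes the role of the boundary $C_g=\hat K(C_b)$ explicit. One small cosmetic point: the justification ``$\beta\le\mu_N<\bar\beta$'' is not quite right in the heterogeneous-prior benchmark (nothing ties $\beta$ to $\mu_N$); what you actually need, and what holds, is simply $\beta<\bar\beta$ so that $Q<1$.
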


\begin{proof}
	At \(\hat{p}\), the first order condition \(f'(p)=0\) holds. By (\ref{eq.b2}), \(M(\hat{p},Q\hat{p})\) and \(N(\hat{p},Q\hat{p})\) have different signs. It then follows from Lemma \ref{lmm.a7} that \(\hat{p}\in(\hat{\mathbf{p}}(q),\check{\mathbf{p}}(q))\). 

	By (\ref{eq.b2}),
	\begin{equation}\label{eq.e8}
		\frac{\partial}{\partial\mu}f'(\hat{p}) = M(\hat{p},Q\hat{p})-N(\hat{p},Q\hat{p}).
	\end{equation}
	Notice that the right-hand side of (\ref{eq.e8}) is decreasing in \(C_g\), hence fixing \(C_b\), it is negative if \(C_g\) is sufficiently large. Moreover, since \(M(\hat{p},Q\hat{p})\) and \(N(\hat{p},Q\hat{p})\) have different signs, (\ref{eq.e8}) is zero if and only if \(M(\hat{p},Q\hat{p})=N(\hat{p},Q\hat{p})=0\). This defines a system of equations of \(\hat{p}\) and \(C_g\). It is easy to verify that it has a unique solution
	\begin{equation*}
		\hat{p} = 1-\frac{1-Q}{Q}\frac{1}{\hat{x}(C_b)},~\text{and}~C_g=\hat{K}(C_b),
	\end{equation*}
	where \(\hat{x}(C_b)\) and \(\hat{K}(C_b)\) are given by (\ref{eq.b3}) and (\ref{eq.b4}). That is, if \(C_g>\hat{K}(C_b)\), (\ref{eq.e8}) is decreasing, hence \(\hat{p}\) is decreasing in \(\mu\).
\end{proof}

\subsection{Proof of Lemma \ref{lmm.a6}}

We first prove the following lemma, which states that the sender's expected payoff from choosing \((\hat{\mathbf{p}}(q),q)\) is increasing in \(q\) for all sender types. Two auxiliary functions \(J(p,q)\) and \(K(p,q)\) are introduced which are necessary for the proof of Lemma \ref{lmm.a6}.

\begin{lmm}\label{lmm.a8}
	Let \(C_g>\hat{K}(C_b)\). For all \(\mu\in(0,1)\), \(f((\hat{\mathbf{p}}(q),q),\mu)\) is positive and increasing in \(q\).
\end{lmm}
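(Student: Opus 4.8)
The plan is to prove the stronger of the two claims first---that $q\mapsto f((\hat{\mathbf{p}}(q),q),\mu)$ is strictly increasing on $(0,1)$ for every fixed $\mu\in(0,1)$---and then recover positivity by checking that this function tends to $0$ as $q\downarrow 0$ and integrating.

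First I would differentiate along the curve. Recall that $p=\hat{\mathbf{p}}(q)$ is exactly the level set $\frac{(1-p)q}{p(1-q)}=\hat t$ with $\hat t$ the larger zero of $\Delta$ from Proposition \ref{pro3_part2}; implicit differentiation of this relation gives the convenient closed form $\hat{\mathbf{p}}'(q)=\frac{p(1-p)}{q(1-q)}$. Feeding this together with the partial derivatives of $f$ from \eqref{eq.f1} and \eqref{eq.f2} into the chain rule and collecting the $\mu$-terms, one obtains
\begin{equation*}
	\frac{d}{dq}\,f\big((\hat{\mathbf{p}}(q),q),\mu\big)=\frac{\mu\,J(p,q)+(1-\mu)\,K(p,q)}{q(1-q)},
\end{equation*}
where, writing $L:=\ln\frac{p(1-q)}{(1-p)q}$,
\begin{align*}
	J(p,q)&:=p(1-p)\Big[1+C_g\Big(\tfrac{p-q}{p(1-p)}-L\Big)\Big],\\
	K(p,q)&:=q(1-q)\Big[1+C_b\Big(L-\tfrac{p-q}{q(1-q)}\Big)\Big].
\end{align*}
Since $\mu\in(0,1)$ is arbitrary and $q(1-q)>0$, it suffices to show $J(p,q)>0$ and $K(p,q)>0$ at every point of the curve $p=\hat{\mathbf{p}}(q)$, $q\in(0,1)$.

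The key step is to reduce these two positivity statements to facts already established in the proof of Lemma \ref{lmm.a7}. With $t=\frac{(1-p)q}{p(1-q)}$ one has the elementary identities $\frac{p-q}{p(1-q)}=1-t$, $\frac{p-q}{q(1-p)}=\frac{1-t}{t}$, and $L=-\ln t$. Because $p\geq q$ (indeed $p>q$ on the curve for $q\in(0,1)$) we have $\frac{1-q}{1-p}\geq 1\geq\frac{1-p}{1-q}$, hence $\frac{p-q}{p(1-p)}\geq 1-t$ and $\frac{p-q}{q(1-q)}\leq\frac{1-t}{t}$. Substituting these bounds and using $t\equiv\hat t$ on the curve gives
\begin{align*}
	J(p,q)&\geq p(1-p)\big[1+C_g(1-\hat t+\ln\hat t)\big]=p(1-p)\,m(\hat t),\\
	K(p,q)&\geq q(1-q)\big[1-C_b(\tfrac{1}{\hat t}-1+\ln\hat t)\big]=q(1-q)\,n(\hat t),
\end{align*}
with $m,n$ the functions from the proof of Lemma \ref{lmm.a7}. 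Since $C_g>\hat K(C_b)$, that proof shows $m(t^\star),n(t^\star)>0$ and that $m,n$ are increasing, so $m(\hat t),n(\hat t)>0$; hence $J,K>0$ on the curve and $\frac{d}{dq}f>0$.

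Finally I would check the boundary behavior: as $q\downarrow 0$ we have $\hat{\mathbf{p}}(q)=\frac{q}{q+\hat t(1-q)}\to 0$, so the experiment approaches the uninformative one, both Kullback--Leibler terms in $c((\hat{\mathbf{p}}(q),q)|\mu)$ vanish, and $\mu p+(1-\mu)q\to 0$, so $f((\hat{\mathbf{p}}(q),q),\mu)\to 0$. Combined with $\frac{d}{dq}f>0$ on $(0,1)$, this yields $f((\hat{\mathbf{p}}(q),q),\mu)=\int_0^q\frac{d}{ds}f((\hat{\mathbf{p}}(s),s),\mu)\,ds>0$ and increasing in $q$, as claimed. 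I expect the only genuine work to be the bookkeeping in the second and third paragraphs---massaging the chain-rule derivative into the clean form $\mu J+(1-\mu)K$ and recognizing that, once the $t$-identities are in hand, the required bounds on $J$ and $K$ collapse to exactly the positivity of $m(\hat t)$ and $n(\hat t)$ that Lemma \ref{lmm.a7} already delivers; after that the argument is essentially mechanical.
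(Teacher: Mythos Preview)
Your proposal is correct and follows essentially the same route as the paper: differentiate along the curve $p=\hat{\mathbf{p}}(q)$ using $\hat{\mathbf{p}}'(q)=\frac{p(1-p)}{q(1-q)}$, split the derivative into a $\mu$-part and a $(1-\mu)$-part, and bound each below by $m(\hat t)$ and $n(\hat t)$ respectively via Lemma \ref{lmm.a7}. The only cosmetic differences are that the paper carries the factor $q(1-q)$ inside its definitions of $J$ and $K$ (so its derivative reads $\mu J+(1-\mu)K$ rather than $(\mu J+(1-\mu)K)/q(1-q)$), and it obtains the key inequalities via $\frac{q}{\hat{\mathbf{p}}(q)}>\hat t$ rather than your equivalent $\frac{1-q}{1-p}>1$; for the boundary behavior at $q\downarrow 0$ the paper uses a sandwich $\hat{\mathbf{p}}(q)>f>f((\hat{\mathbf{p}}(q),\hat t\,\hat{\mathbf{p}}(q)),\mu)$ instead of your direct limit of the KL terms, but both arguments are valid.
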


\begin{proof}
	By (\ref{eq.b5}), \(q=[q+\hat{t}(1-q)]\hat{\mathbf{p}}(q)>\hat{t}\hat{\mathbf{p}}(q)\), hence for all \(q\in(0,1)\)
	\begin{equation}\label{eq.a24}
		\hat{\mathbf{p}}(q)>f((\hat{\mathbf{p}}(q),q),\mu)>f((\hat{\mathbf{p}}(q),\hat{t}\hat{\mathbf{p}}(q)),\mu).
	\end{equation}
	Taking limit of (\ref{eq.a24}), we have \(\lim_{q\downarrow 0}f((\hat{\mathbf{p}}(q),q),\mu)=0\). Therefore, it suffices to show that \(f((\hat{\mathbf{p}}(q),q),\mu)\) is increasing in \(q\).

	By (\ref{eq.f1}) and (\ref{eq.f2}),
	\begin{equation*}
		\begin{aligned}
			\frac{\partial}{\partial q}f((\hat{\mathbf{p}}(q),q),\mu) &= \hat{\mathbf{p}}'(q)\left[\mu-\mu C_g\ln\frac{\hat{\mathbf{p}}(q)(1-q)}{(1-\hat{\mathbf{p}}(q))q}-(1-\mu) C_b\frac{\hat{\mathbf{p}}(q)-q}{\hat{\mathbf{p}}(q)(1-\hat{\mathbf{p}}(q))}\right] \\ 
			&\qquad\qquad\qquad + \left[(1-\mu)+\mu C_g\frac{\hat{\mathbf{p}}(q)-q}{q(1-q)}+(1-\mu) C_b\ln\frac{\hat{\mathbf{p}}(q)(1-q)}{(1-\hat{\mathbf{p}}(q))q}\right],
		\end{aligned}
	\end{equation*}
	and by (\ref{eq.b5}),
	\begin{equation*}
		\hat{\mathbf{p}}'(q) = \frac{\hat{\mathbf{p}}(q)(1-\hat{\mathbf{p}}(q))}{q(1-q)}.
	\end{equation*}
	Therefore,
	\begin{equation}\label{eq.a16}
		\frac{\partial}{\partial q}f((\hat{\mathbf{p}}(q),q),\mu) = J(\hat{\mathbf{p}}(q),q)\mu + K(\hat{\mathbf{p}}(q),q)(1-\mu),
	\end{equation}
	where
	\begin{gather*}
		J(p,q) := \frac{p(1-p)}{q(1-q)}\left(1+C_g\left[\frac{p-q}{p(1-p)}-\ln\frac{p(1-q)}{(1-p)q}\right]\right), \\
		K(p,q) := 1-C_b\left[\frac{p-q}{q(1-q)}-\ln\frac{p(1-q)}{(1-p)q}\right].
	\end{gather*}
	Substituting \(\hat{\mathbf{p}}(q)\) using (\ref{eq.b5}) and using the fact that \(\frac{q}{\hat{\mathbf{p}}(q)}>\hat{t}\), we have
	\begin{equation*}
		\frac{q(1-q)}{\hat{\mathbf{p}}(q)(1-\hat{\mathbf{p}}(q))}J(\hat{\mathbf{p}}(q),q) = 1+C_g\left[\frac{q}{\hat{\mathbf{p}}(q)}\frac{1-\hat{t}}{\hat{t}}+\ln\hat{t}\right] > 1+C_g(1-\hat{t}+\ln \hat{t}) = m(\hat{t}),
	\end{equation*}
	and
	\begin{equation*}
		K(\hat{\mathbf{p}}(q),q)=1-C_b\left[\frac{\hat{\mathbf{p}}(q)}{q}(1-\hat{t})+\ln\hat{t}\right] > 1-C_b\left(\frac{1}{\hat{t}}-1+\ln \hat{t}\right) = n(\hat{t}).
	\end{equation*}
	By Lemma \ref{lmm.a7}, \(m(\hat{t}), n(\hat{t})\) are positive. Hence, \(J(\hat{\mathbf{p}}(q),q), K(\hat{\mathbf{p}}(q),q)\) are positive, so \(f((\hat{\mathbf{p}}(q),q),\mu)\) is increasing in \(q\).
\end{proof}

We now prove Lemma \ref{lmm.a6}. To shorten notation, let \(\bar{Q}=\mathbf{Q}(\mu_N)\). 

Consider the type \(\theta\) sender's indifference curve through an experiment \(\pi=(\hat{\mathbf{p}}(q),q)\) such that \(\frac{q}{\hat{\mathbf{p}}(q)}\leq \bar{Q}\). This indifference curve intersects the straight line \(\frac{q}{p}=\bar{Q}\) twice. Let \((p_1,\bar{Q}p_1)\) and \((p_2,\bar{Q}p_2)\) be the intersections with \(p_1\leq\hat{\mathbf{p}}(q)<p_2\). That is, \(p_1\) and \(p_2\) solve
\begin{equation}\label{eq.a11}
	f((p,\bar{Q}p),\mu_\theta) = f((\hat{\mathbf{p}}(q),q),\mu_\theta).
\end{equation}
This defines \(p_1\) and \(p_2\) as functions of \(\mu_\theta\) and \(q\).

We first argue that \(\pi\) is robust to large deviations if and only if \(p_2(\mu_\theta,q)\) is weakly decreasing in \(\theta\). To show necessity, suppose by way of contradiction that \(\pi\) is robust to large deviations, but \(p_2(\mu_i,q)<p_2(\mu_j,q)\) for some \(i<j\). Denote by \(\mathfrak{q}_\theta\) the indifference curves of sender type \(\theta\) through \(\pi\). By Proposition \ref{pro3_part2}, \(\mathfrak{q}_i<\mathfrak{q}_j\) on a small neighborhood to the right of \(\hat{\mathbf{p}}(q)\), but \(\mathfrak{q}_j(p_2(\mu_i,q))<\bar{Q}p_2(\mu_i,q)=\mathfrak{q}_i(p_2(\mu_i,q))\). Therefore, there exists some \(\pi'=(p',q')\) such that \(p'\in(\hat{\mathbf{p}}(q),p_2(\mu_i,q))\) and \(\frac{q'}{p'}<\bar{Q}\) where all sender types' indifference curves intersect. Hence, we can find a deviation \(\tilde{\pi}=(\tilde{p},\tilde{q})\) such that \(\tilde{\pi}\) is persuasive at belief \(\mu_N\), \(f(\tilde{\pi},\mu_N)>f(\pi,\mu_N)\), and \(f(\tilde{\pi},\mu_\theta)<f(\pi,\mu_\theta)\) for all \(\theta<N\). That is, \(\pi\) is not robust to large deviations, a contradiction.

Conversely, suppose that \(p_2(\mu_\theta,q)\) is weakly decreasing in \(\theta\), but \(\pi\) is not robust to large deviations. That is, there exists an experiment \(\tilde{\pi}\) that is persuasive at belief \(\mu_N\), and \(i<N\) such that \(\mathfrak{q}_i(\tilde{p})>\tilde{q}>\mathfrak{q}_N(\tilde{p})\). By Proposition \ref{pro3_part2}, the two indifference curves \(\mathfrak{q}_i\) and \(\mathfrak{q}_N\) intersect at some experiment \((p',q')\) where \(p'\in(\hat{\mathbf{p}}(q),\tilde{p})\). By Proposition \ref{pro3_part2}, \(\mathfrak{q}_i>\mathfrak{q}_N\) on \((p',1)\). Specifically, at \(p_2(\mu_i,q)\), \(\mathfrak{q}_i(p_2(\mu_i,q))=\bar{Q}p_2(\mu_i,q)>\mathfrak{q}_N(p_2(\mu_i,q))\). Hence, \(p_2(\mu_i,q)<p_2(\mu_N,q)\). This is a contradiction to the assumption that \(p_2(\mu_\theta,q)\) is weakly decreasing in \(\theta\).

We now show that \(p_2(\mu_\theta,q)\) is weakly decreasing in \(\theta\) if and only if \(q\geq\hat{q}\) for some \(\hat{q}\). This is done by showing that the solution to the following system of equations of \(p\) and \(q\)
\begin{equation}\label{eq.b8}
	p_2(\mu,q) = p~\text{for all}~\mu\in[0,1]
\end{equation}
is unique.\footnote{Notice that \(f(\pi,\mu)\) is linear in \(\mu\) for all \(\pi\in\Pi\). Therefore, (\ref{eq.b8}) is equivalent to \(p_2(\mu_i,q)=p_2(\mu_j,q)=p\) for some \(i\neq j\). If (\ref{eq.b8}) does not have a solution, either \(p_2(\mu_\theta,q)\) is strictly decreasing in \(\theta\) for all \(q\), or it is never weakly decreasing. In the former case, \(\hat{q}=0\); in the latter case, we can without loss let \(\hat{q}=1\). Hence, we proceed assuming that (\ref{eq.b8}) has a solution.} Moreover, denoting by \((\hat{p},\hat{q})\) the solution of (\ref{eq.b8}), \(p_2(\mu,q)\) is strictly increasing in \(\mu\) if \(q<\hat{q}\), and it is strictly decreasing in \(\mu\) if \(q>\hat{q}\).

Let \((\hat{p},\hat{q})\) be a solution to (\ref{eq.b8}). It is sufficient to show that
\begin{equation}\label{eq.b6}
	\frac{\partial^2}{\partial\mu\partial q} p_2(\mu,q)\Big|_{q=\hat{q}} < 0.
\end{equation}
Evaluating the left-hand side of (\ref{eq.a11}) at \(p=p_2:=p_2(\mu,q)\) and taking derivative with respect to \(q\),
\begin{equation*}
	\frac{\partial}{\partial q} f((p_2,\bar{Q}p_2),\mu) = [M(p_2,\bar{Q}p_2)\mu+N(p_2,\bar{Q}p_2)(1-\mu)] \frac{\partial p_2}{\partial q}.
\end{equation*}
The derivative of the right-hand side of (\ref{eq.a11}) with respect to \(q\) is given already by (\ref{eq.a16}).

Therefore,
\begin{equation}\label{eq.b12}
	\frac{\partial p_2}{\partial q} = \frac{J(\hat{\mathbf{p}}(q),q)\mu+K(\hat{\mathbf{p}}(q),q)(1-\mu)}{M(p_2,\bar{Q}p_2)\mu+N(p_2,\bar{Q}p_2)(1-\mu)}.
\end{equation}
Taking derivative of (\ref{eq.b12}) with respect to \(\mu\), the left-hand side of (\ref{eq.b6}) has the same sign as the numerator
\begin{equation}\label{eq.f6}
	N(\hat{p},\bar{Q}\hat{p})J(\hat{\mathbf{p}}(\hat{q}),\hat{q})-M(\hat{p},\bar{Q}\hat{p})K(\hat{\mathbf{p}}(\hat{q}),\hat{q}).
\end{equation}
Hence, it is equivalent to show that (\ref{eq.f6}) is negative.

By Proposition \ref{pro3}, at any experiment \((p,q)\) such that \(\hat{\mathbf{p}}(q)<p<\check{\mathbf{p}}(q)\), the slope of the indifference curve of a higher type sender is greater than that of a lower type sender. Hence, if the indifference curves of two sender types intersect twice at \((\hat{q},\hat{\mathbf{p}}(\hat{q}))\) and at \((\hat{p},Q\hat{p})\), it must be the case that \(\hat{p}>\check{\mathbf{p}}(Q\hat{p})\). By Lemma \ref{lmm.a7}, \(M(\hat{p},Q\hat{p}), N(\hat{p},Q\hat{p})<0\). By Lemma \ref{lmm.a8}, \(J(\hat{\mathbf{p}}(\hat{q}),\hat{q}), K(\hat{\mathbf{p}}(\hat{q}),\hat{q})>0\). Lastly, we observe that
\begin{equation*}
	N(p,q)J(p,q)-M(p,q)K(p,q)=\Delta(p,q).
\end{equation*}
Hence, \(N(\hat{\mathbf{p}}(\hat{q}),\hat{q})J(\hat{\mathbf{p}}(\hat{q}),\hat{q})-M(\hat{\mathbf{p}}(\hat{q}),\hat{q})K(\hat{\mathbf{p}}(\hat{q}),\hat{q})=0\). Therefore, it is equivalent to show
\begin{equation*}
	\frac{N(\hat{p},\bar{Q}\hat{p})}{M(\hat{p},\bar{Q}\hat{p})} > \frac{K(\hat{\mathbf{p}}(\hat{q}),\hat{q})}{J(\hat{\mathbf{p}}(\hat{q}),\hat{q})} = \frac{N(\hat{\mathbf{p}}(\hat{q}),\hat{q})}{M(\hat{\mathbf{p}}(\hat{q}),\hat{q})}.
\end{equation*}

After a change of variables, we can rewrite \(\frac{N(p,q)}{M(p,q)}\) as a function of \(t\), as is defined in (\ref{eq.a17}), and \(Q=\frac{q}{p}\),
\begin{equation*}
	\frac{N(p,q)}{M(p,q)} = Q\underbrace{\frac{\left(1-C_b\left[\frac{1}{t}-1+\ln t\right]\right)}{1+C_g\left(1-t+\ln t\right)}}_{=:g(t)}.
\end{equation*}
Let \(\bar{t}=\frac{\bar{Q}(1-\hat{p})}{1-\bar{Q}\hat{p}}\) and \(\hat{Q}=\frac{\hat{q}}{\hat{\mathbf{p}}(\hat{q})}\). We need to show that \(\bar{Q}g(\bar{t})>\hat{Q}g(\hat{t})\). Since \(\hat{Q}<\bar{Q}\), and \(g\) is positive, it is sufficient to show \(g(\bar{t})>g(\hat{t})\). Since \(\hat{t}>\bar{t}\), the proof is complete once we show that \(g(t)\) is decreasing in \(t\).

Observe that
\begin{equation}\label{eq.b7}
	g'(t) = \frac{1-t}{t^2\left(1+C_g\left(1-t+\ln t\right)\right)^2}\big[C_b-C_g t+C_gC_b(2(1-t)+(1+t)\ln t)\big].
\end{equation}
Taking derivative of the bracket in (\ref{eq.b7}) with respect to \(t\) yields
\begin{equation*}
	C_g\left[C_b\left(\ln t+\frac{1}{t}-1\right)-1\right],
\end{equation*}
which is strictly decreasing in \(t\) on \((0,1)\) and has a unique zero \(\left(1+\hat{x}(C_b)\right)^{-1}\), where \(\hat{x}(C_b)\) is the unique solution to (\ref{eq.b4}) on the positive real line. That is, the bracket in (\ref{eq.b7}) obtains its maximum at \(t=\left(1+\hat{x}(C_b)\right)^{-1}\). Moreover, \(2(1-t)+(1+t)\ln t<0\) for all \(t\in(0,1)\). Hence, the bracket in (\ref{eq.b7}) is strictly decreasing in \(C_g\). Therefore, substituting \(t=\left(1+\hat{x}(C_b)\right)^{-1}\) and \(C_g=\hat{K}(C_b)\), we obtain a strict upper bound of the bracket in (\ref{eq.b7})
\begin{equation}\label{eq.b9}
 	C_b-\frac{\hat{K}(C_b)}{1+\hat{x}(C_b)}+C_b\frac{\hat{K}(C_b)}{1+\hat{x}(C_b)}\left[2\hat{x}(C_b)-\left(2+\hat{x}(C_b)\right)\ln(1+\hat{x}(C_b))\right].
\end{equation}
Substituting \(\hat{K}(C_b)\) using (\ref{eq.b3}) and \(\ln(1+\hat{x}(C_b))\) using (\ref{eq.b4}), (\ref{eq.b9}) becomes zero. That is, \(g'(t)<0\).

\subsection{Proof of Proposition \ref{pro5}}

The necessity of (iii) and (iv) is shown in the main text. We divide the rest of the proof into two lemmas. Lemma \ref{pro1} shows that an experiment satisfying (iii) is a pooling equilibrium outcome if and only if (i) and (ii) are satisfied. Lemma \ref{lmm.a6_part2} shows that a pooling equilibrium outcome \((\hat{\mathbf{p}}(q),q)\) satisfies D1 if (iv) is satisfied.

\begin{lmm}\label{pro1}
	An experiment \(\pi=(\hat{\mathbf{p}}(q),q)\) is a pooling equilibrium outcome if and only if:
	\begin{enumerate*}[label=(\roman*)]
		\item \(\frac{q}{\hat{\mathbf{p}}(q)}\leq\mathbf{Q}(\mu_0)\);
		\item \(f(\pi,\mu_1)\geq V(\mu_1,\mu_1)\).
	\end{enumerate*}
\end{lmm}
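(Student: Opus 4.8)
The plan is to prove the two implications separately, the ``only if'' part being routine. Suppose \(\pi=(\hat{\mathbf{p}}(q),q)\) is a pooling equilibrium outcome. By Bayes' rule the receiver's interim belief at \(\pi\) equals \(\mu_0\), and since a pooling equilibrium is by definition a persuasion equilibrium while \(\mathbf{B}(\mu_0,\pi,b)\le\mu_0<\bar\beta\), the high action is taken with positive probability only after the good outcome; sequential rationality then forces \(\mathbf{B}(\mu_0,\pi,g)\ge\bar\beta\), i.e.\ \(q/\hat{\mathbf{p}}(q)\le\mathbf{Q}(\mu_0)\), which is (i). Under (i) the lowest type's equilibrium payoff is \(f(\pi,\mu_1)\); deviating to any \(\pi'\) persuasive at belief \(\mu_1\) gives her at least \(f(\pi',\mu_1)\), because the off-path belief \(\beta(\pi')\ge\mu_1\) preserves persuasiveness and \(\mu_N<\bar\beta\) precludes double persuasion. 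Taking the supremum over such \(\pi'\) gives \(f(\pi,\mu_1)\ge V(\mu_1,\mu_1)\), which is (ii).

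For the ``if'' direction I would exhibit the natural candidate: every type plays \(\pi\), the receiver believes \(\mu_0\) there, assigns the most pessimistic belief \(\mu_1\) at every \(\pi'\ne\pi\), and best responds everywhere. By (i) the receiver takes the high action after \(g\) and the low action after \(b\) on path, so type \(\theta\)'s payoff is \(f(\pi,\mu_\theta)\); a deviation to \(\pi'\) pays at most \(f(\pi',\mu_\theta)\) when \(\pi'\) is persuasive at belief \(\mu_1\) and a nonpositive amount otherwise, and since the supremum defining \(V(\mu_\theta,\mu_1)\) is not attained at \(\pi\) — by Lemma \ref{lmm.a5} its maximizer lies strictly inside \(\{\hat{\mathbf{p}}(q)<p<\check{\mathbf{p}}(q)\}\), whereas on the locus \(f\) is positive by Lemma \ref{lmm.a8} — the best deviation payoff for type \(\theta\) is exactly \(V(\mu_\theta,\mu_1)\). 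Hence the profile is an equilibrium if and only if \(f(\pi,\mu_\theta)\ge V(\mu_\theta,\mu_1)\) for every \(\theta\), and the real content is that on the locus \(p=\hat{\mathbf{p}}(q)\), condition (ii) implies the analogous inequality for all higher types.

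The key observation is that \(g(\mu):=f(\pi,\mu)-V(\mu,\mu_1)\) is concave in \(\mu\): \(f(\pi,\cdot)\) is affine because \(c(\pi|\mu)\) is an affine function of \(\mu\) (the opening computation in the proof of Proposition \ref{pro3}), and \(V(\cdot,\mu_1)=\sup_{\pi'}f(\pi',\mu)\) is a supremum of affine functions. A concave function is nonnegative on \([\mu_1,\mu_N]\) if and only if it is nonnegative at the two endpoints, so with \(g(\mu_1)\ge0\) from (ii) it suffices to show \(g(\mu_N)\ge g(\mu_1)\). Writing \(f(\pi',\mu)=f(\pi',\mu_1)+(\mu-\mu_1)\,\phi(\pi')\) with \(\phi(\pi')=\partial_\mu f(\pi',\mu)\), and letting \(\hat\pi\) be the experiment attaining \(V(\mu_N,\mu_1)\), optimality of the experiment attaining \(V(\mu_1,\mu_1)\) at the prior \(\mu_1\) yields \(V(\mu_N,\mu_1)-V(\mu_1,\mu_1)\le(\mu_N-\mu_1)\phi(\hat\pi)\), hence \(g(\mu_N)-g(\mu_1)\ge(\mu_N-\mu_1)\big(\phi(\pi)-\phi(\hat\pi)\big)\). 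Everything thus reduces to the single inequality \(\phi(\pi)\ge\phi(\hat\pi)\): the marginal value of a higher prior is larger at the nearby-deviation-robust experiment than at the symmetric-information benchmark experiment.

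This last inequality is the main obstacle and the only genuinely computational step. I would prove it with the change of variable \(t=(1-p)q/\big(p(1-q)\big)\in(0,1)\) used throughout the appendix, which turns \(\phi\) and the first-order conditions defining \(\hat{\mathbf{p}}\) and \(\hat\pi\) into one-variable expressions, and then combine the structural facts already available: the sign pattern of \(\Delta\) and the ordering \(\hat{\mathbf{p}}(q)<\check{\mathbf{p}}(q)\) from Proposition \ref{pro3_part2}; the fact from Lemma \ref{lmm.a5} that the benchmark optimizer has \(\hat p\in(\hat{\mathbf{p}},\check{\mathbf{p}})\) and is decreasing in the sender's prior; and the positivity of \(M,N\) on \(p=\hat{\mathbf{p}}(q)\) against their negativity on \(p\ge\check{\mathbf{p}}(q)\) (Lemma \ref{lmm.a7}). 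An alternative route that avoids \(g\) altogether uses Lemma \ref{lmm.a8} — \(f((\hat{\mathbf{p}}(q),q),\mu)\) is increasing in \(q\) — to define for each \(\theta\) a threshold \(q_\theta\) with \(f((\hat{\mathbf{p}}(q),q),\mu_\theta)\ge V(\mu_\theta,\mu_1)\iff q\ge q_\theta\), and then to show \(q_\theta\) is weakly decreasing in \(\theta\); differentiating the defining identity and applying the envelope theorem reduces this to the same comparison of \(\phi\)-values. Once the comparison is in hand, concavity delivers \(g\ge0\) on \([\mu_1,\mu_N]\), so no type deviates and \(\pi\) is a pooling equilibrium outcome.
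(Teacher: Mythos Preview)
Your necessity argument and the sufficiency setup (pessimistic off-path beliefs $\beta(\pi')=\mu_1$, reducing to $f(\pi,\mu_\theta)\ge V(\mu_\theta,\mu_1)$ for every $\theta$) are correct and match the paper. The concavity observation for $g(\mu)=f(\pi,\mu)-V(\mu,\mu_1)$ is clean and is not in the paper.

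The genuine gap is the inequality $\phi(\pi)\ge\phi(\hat\pi)$. You isolate it as ``the only genuinely computational step'' and then do not carry it out, pointing instead to the change of variable and a list of available lemmas. None of those lemmas compares $\partial_\mu f$ at two distinct experiments, and $\phi(\pi')=p'-q'-C_gD_{KL}(P'\|Q')+C_bD_{KL}(Q'\|P')$ involves exactly the weighted KL-difference whose sign is delicate when $C_g>\hat K(C_b)$. Note also that you aim for $g(\mu_N)\ge g(\mu_1)$, which is strictly stronger than the needed $g(\mu_N)\ge 0$; so even if the $\phi$-comparison failed at some $q$, the lemma could still hold, and your reduction would simply be too coarse. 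Your ``alternative route'' via thresholds $q_\theta$ is in fact the paper's route, but you again assert it collapses to the same $\phi$-comparison rather than executing it.

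The paper avoids $\phi$ entirely. It fixes $\theta>1$, lets $\hat\pi_\theta=(\hat p_\theta,\mathbf Q(\mu_1)\hat p_\theta)$ be the type-$\theta$ optimum at belief $\mu_1$, uses Lemma~\ref{lmm.a5} both to place $\hat p_\theta$ in the intermediate region $\hat{\mathbf p}<\hat p_\theta<\check{\mathbf p}$ and to order $\hat p_1>\hat p_\theta$, and then traces the indifference curves of types $1$ and $\theta$ from $\hat\pi_\theta$ leftward toward the locus $p=\hat{\mathbf p}(q)$. Since Proposition~\ref{pro3_part2} gives $\partial_\mu MRS>0$ throughout that region, the type-$\theta$ curve stays strictly below the type-$1$ curve until the locus is reached; this yields the threshold ordering $\bar q_\theta<\bar q_1$ directly. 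Condition~(ii) and the monotonicity from Lemma~\ref{lmm.a8} then give $q>\bar q_1>\bar q_\theta$, hence $f(\pi,\mu_\theta)\ge V(\mu_\theta,\mu_1)$. The argument is purely geometric; the monotonicity of $\hat p$ in $\mu$ from Lemma~\ref{lmm.a5} is the key ingredient you listed but never actually deployed.
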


\begin{proof}
	\textit{Necessity.} On the equilibrium path, \(\beta(\pi)=\mu_0\). Hence, \(\pi\) is persuasive at \(\mu_0\). That is, (i) is satisfied. Consider a deviation \(\tilde{\pi}\) that is persuasive at belief \(\mu_1\). The lowest sender type gets \(f(\tilde{\pi},\mu_1)\) by deviating to \(\tilde{\pi}\) regardless of the receiver's belief. Therefore, she does not have a profitable deviation only if her equilibrium payoff \(f(\pi,\mu_1)\geq \sup_{\tilde{\pi}\in\tilde{\Pi}_1} f(\tilde{\pi},\mu_1)=V(\mu_1,\mu_1)\), where \(\tilde{\Pi}_1\) denotes the set of all experiments that are persuasive at belief \(\mu_1\).

	\vskip 1em
	\textit{Sufficiency.} Let \(\pi=(\hat{\mathbf{p}}(q),q)\) satisfy conditions (i) and (ii). We first show that \(f(\pi,\mu_\theta)\geq V(\mu_\theta,\mu_1)\) for all sender types \(\theta>1\). If \(V(\mu_\theta,\mu_1)=0\), this follows from Lemma \ref{lmm.a8}, so we proceed assuming that \(V(\mu_\theta,\mu_1)>0\). By Lemma \ref{lmm.a8}, there exists \(\bar{q}_\theta\in(0,1)\) such that \(f(\pi,\mu_\theta)\geq V(\mu_\theta,\mu_1)\) if and only if \(q\geq\bar{q}_\theta\). By Lemma \ref{lmm.a1}, there exists a unique \(\hat{p}_\theta>0\) such that \(f((\hat{p}_\theta,\mathbf{Q}(\mu_1)\hat{p}_\theta),\mu_\theta)=V(\mu_\theta,\mu_1)\). By Lemma \ref{lmm.a5}, there exists \(\hat{p}_1>\hat{p}_\theta\) such that \(f((\hat{p}_1,\mathbf{Q}(\mu_1)\hat{p}_1),\mu_1)=V(\mu_1,\mu_1)\). Since \(f((p,\mathbf{Q}(\mu_1)p),\mu_1)\) is increasing in \(p\) on \([0,\hat{p}_1]\), \(f((\hat{p}_\theta,\mathbf{Q}(\mu_1)\hat{p}_\theta),\mu_1)>0\). Therefore, there exists \(\bar{q}_1\in(0,1)\) such that \(f(\pi,\mu_1)\geq f((\hat{p}_\theta,\mathbf{Q}(\mu_1)\hat{p}_\theta),\mu_1)\) if and only if \(q\geq\bar{q}_1\). By our assumption that (ii) is satisfied, \(f(\pi,\mu_1)\geq V(\mu_1,\mu_1)>f((\hat{p}_\theta,\mathbf{Q}(\mu_1)\hat{p}_\theta),\mu_1)\), so \(q>\bar{q}_1\). We are done once we show that \(\bar{q}_1>\bar{q}_\theta\).

	By way of contradiction, suppose that \(\bar{q}_1\leq\bar{q}_\theta\). Denote by \(\mathfrak{q}_1\) and \(\mathfrak{q}_\theta\) type 1 and type \(\theta\) senders' indifference curves through \((\hat{p}_\theta,\mathbf{Q}(\mu_1)\hat{p}_\theta)\), respectively. By Lemma \ref{lmm.a5}, \(\hat{\mathbf{p}}(\mathbf{Q}(\mu_1)\hat{p}_\theta)<\hat{p}_\theta<\check{\mathbf{p}}(\mathbf{Q}(\mu_1)\hat{p}_\theta)\). Therefore, by Proposition \ref{pro3}, \(\mathfrak{q}_\theta(p')<\mathfrak{q}_1(p')\) for all \(p'\) such that \(\hat{\mathbf{p}}(\mathfrak{q}_1(p'))\leq p'<\hat{p}_\theta\). Specifically, at \(p'=\hat{\mathbf{p}}(\bar{q}_\theta)\), we have \(\bar{q}_\theta=\mathfrak{q}_\theta(\hat{\mathbf{p}}(\bar{q}_\theta))<\mathfrak{q}_1(\hat{\mathbf{p}}(\bar{q}_\theta))\). But since \(\bar{q}_1\leq\bar{q}_\theta\), by Lemma \ref{lmm.a8}, \(f((\hat{\mathbf{p}}(\bar{q}_\theta),\bar{q}_\theta),\mu_1)\geq f((\hat{\mathbf{p}}(\bar{q}_1),\bar{q}_1),\mu_1)=f((\hat{p}_\theta,\mathbf{Q}(\mu_1)\hat{p}_\theta),\mu_1)\), so \(\bar{q}_\theta\geq\mathfrak{q}_1(\hat{\mathbf{p}}(\bar{q}_\theta))\), a contradiction. Therefore, \(\bar{q}_1>\bar{q}_\theta\).

	We now show by construction that \(\pi\) is a pooling equilibrium outcome. Let \(\pi_\theta=\pi\) for all \(\theta\in\Theta\), \(\beta(\pi)=\mu_0\), and \(\beta(\tilde{\pi})=\mu_1\) for all \(\tilde{\pi}\neq\pi\). Moreover, let \(\hat{\beta}\) and \(\mathbf{a}\) be such that \(\hat{\beta}(\pi',s)=\mathbf{B}(\beta(\pi'),\pi',s)\) for all \(\pi'\in\Pi\) and \(s\in\{g,b\}\), and \(\mathbf{a}(\pi',s)=1\) if and only if \(\hat{\beta}(\pi',s)\geq\bar{\beta}\). To check that \((\{\pi_\theta\}_{\theta\in\Theta},\mathbf{a},\beta,\hat{\beta})\) is an equilibrium, we only need to show that any deviation \(\tilde{\pi}\neq\pi\) is not profitable for the sender. If \(\tilde{\pi}\) is persuasive at belief \(\mu_1\), the sender's payoff from deviating to \(\tilde{\pi}\) is at most \(V(\mu_\theta,\mu_1)\leq f(\pi,\mu_\theta)\). If \(\tilde{\pi}\) is unpersuasive at belief \(\mu_1\), the sender's deviation payoff is \(-c(\tilde{\pi}|\mu_\theta)\leq 0\). Therefore, the sender does not have a profitable deviation.
\end{proof}

\begin{lmm}\label{lmm.a6_part2}
	A pooling equilibrium outcome \(\pi=(\hat{\mathbf{p}}(q),q)\) is a D1 pooling equilibrium outcome if it is robust to large deviations.
\end{lmm}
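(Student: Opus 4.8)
The plan is as follows. Since $\pi=(\hat{\mathbf p}(q),q)$ is a pooling equilibrium outcome, Lemma~\ref{pro1} supplies an equilibrium in which every sender type chooses $\pi$ and the receiver holds the off-path interim belief $\beta(\tilde\pi)=\mu_1$ at every $\tilde\pi\neq\pi$; it remains to show that, when $\pi$ is robust to large deviations, this belief system satisfies the D1 criterion. The D1 restriction at an off-path $\tilde\pi$ bites only through a pair of distinct types $(i,j)$ with $D^0_i(\tilde\pi)\subsetneq D_j(\tilde\pi)$, in which case $\beta(\tilde\pi)$ must lie in the convex hull of $\{\mu_\theta:\theta\neq i\}$. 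For $i\geq2$ this is automatic, since $\mu_1$ always belongs to that hull; so the whole task reduces to showing that for every $\tilde\pi\neq\pi$ there is \emph{no} type $j$ with $D^0_1(\tilde\pi)\subsetneq D_j(\tilde\pi)$.

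I would first record the structure of the sets $D_j(\tilde\pi)$. By Lemma~\ref{lmm.a8}, $v_\theta^\star=f(\pi,\mu_\theta)>0$ for every $\theta$. Because $\mu_N<\bar\beta$, the bad outcome is never persuasive, so for each $\tilde\pi=(\tilde p,\tilde q)$ the map $\beta\mapsto\bar v(\beta,\tilde\pi|\theta)$ is a two-step function, equal to $f(\tilde\pi,\mu_\theta)$ for $\beta\geq\beta^\ast:=\mathbf Q^{-1}(\tilde q/\tilde p)$ and to $-c(\tilde\pi|\mu_\theta)\leq0$ below $\beta^\ast$. Since $v_\theta^\star>0$, the lower branch never enters $D_j(\tilde\pi)$ or $D^0_1(\tilde\pi)$; hence $D_j(\tilde\pi)$ equals $[\max(\mu_1,\beta^\ast),\mu_N]$ if $f(\tilde\pi,\mu_j)>v_j^\star$ and is empty otherwise, while $D^0_1(\tilde\pi)$ has the same form with $\geq$ in place of $>$.

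Next I would split on whether $\tilde\pi$ is persuasive at $\mu_N$. If it is not, then $\bar v(\beta,\tilde\pi|j)\leq0<v_j^\star$ for all $\beta\in[\mu_1,\mu_N]$ and all $j$, so every $D_j(\tilde\pi)$ is empty and there is nothing to check. If $\tilde\pi$ is persuasive at $\mu_N$, I would use that $c(\pi|\mu)$ is affine in $\mu$ (the computation in the proof of Proposition~\ref{pro3}), so $g(\mu):=f(\tilde\pi,\mu)-f(\pi,\mu)$ is affine, hence monotone, and therefore $S:=\{\theta:g(\mu_\theta)>0\}$ is an initial segment $\{1,\dots,k\}$, a final segment $\{k,\dots,N\}$, all of $\Theta$, or empty. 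Robustness to large deviations says exactly that a $\tilde\pi$ persuasive at $\mu_N$ with $g(\mu_N)>0$ has $g(\mu_\theta)>0$ for all $\theta$; so if $\mu_N\in S$ then $S=\Theta$, and therefore $S\neq\emptyset$ forces $1\in S$, i.e.\ $f(\tilde\pi,\mu_1)>v_1^\star$. Condition~(ii) of Lemma~\ref{pro1} gives $v_1^\star\geq V(\mu_1,\mu_1)$, so no experiment persuasive at $\mu_1$ can pay type $1$ more than $v_1^\star$; hence $\tilde\pi$ is not persuasive at $\mu_1$, $\beta^\ast>\mu_1$, and $D^0_1(\tilde\pi)=[\beta^\ast,\mu_N]$. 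Every $D_j(\tilde\pi)$ is then either this same interval or empty, so $D^0_1(\tilde\pi)\supseteq D_j(\tilde\pi)$ and $D^0_1(\tilde\pi)\subsetneq D_j(\tilde\pi)$ cannot occur. Finally, if $S=\emptyset$ then $f(\tilde\pi,\mu_j)\leq v_j^\star$ for all $j$, so again all $D_j(\tilde\pi)$ are empty. This disposes of every $\tilde\pi$, proving D1.

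The substantive step is the persuasive-at-$\mu_N$ case, and within it the claim that the set of ``eager'' types $S$ is either downward closed or all of $\Theta$; this is the one place where the affine dependence of $f(\cdot,\mu)$ on $\mu$ must be combined with robustness to large deviations. Without the latter, $S$ could be a final segment missing type $1$, D1 would then force $\beta(\tilde\pi)\geq\mu_2$, and---because near $\pi$ the indifference curves fan out with higher types above---such a deviation could become strictly profitable for the highest pooling type, destroying the equilibrium. That possibility is precisely what $q\geq\hat q$ (equivalently, robustness to large deviations, by Lemma~\ref{lmm.a6}) rules out.
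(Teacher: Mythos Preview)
Your argument is correct, and it is somewhat different from---and arguably cleaner than---the paper's own proof. Both proofs verify that the equilibrium supplied by Lemma~\ref{pro1}, with off-path interim belief $\mu_1$ everywhere, satisfies D1; both reduce this to showing that the set $S=\{\theta:f(\tilde\pi,\mu_\theta)>f(\pi,\mu_\theta)\}$ is downward closed for every $\tilde\pi$ persuasive at $\mu_N$. The paper establishes downward closure geometrically, invoking the equivalence (from the proof of Lemma~\ref{lmm.a6}) between robustness to large deviations and the monotonicity of $p_2(\mu_\theta,q)$ in $\theta$, and then tracing indifference curves to conclude $\mathfrak{q}_j\leq\mathfrak{q}_i$ on the relevant interval. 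You instead observe that $\mu\mapsto f(\tilde\pi,\mu)-f(\pi,\mu)$ is affine (since the cost is affine in $\mu$), so $S$ is an interval of types; then the definition of robustness alone rules out any proper final segment, forcing $1\in S$ whenever $S\neq\emptyset$. Your additional remark that D1 places no constraint for $i\geq2$ (because $\mu_1$ lies in the convex hull of $\{\mu_\theta:\theta\neq i\}$) is a nice shortcut the paper leaves implicit. The paper's route has the advantage of yielding the full statement ``$f(\tilde\pi,\mu_i)>f(\pi,\mu_i)\Rightarrow f(\tilde\pi,\mu_j)>f(\pi,\mu_j)$ for all $j<i$'' directly in indifference-curve language, which connects to the surrounding figures; your route is shorter and uses only the definition of robustness plus affinity, avoiding the $p_2$ machinery entirely.
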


\begin{proof}
	Recall from the proof of Lemma \ref{lmm.a6} that \((\hat{\mathbf{p}}(q),q)\) is robust to large deviations if and only if \(p_2(\mu_\theta,q)\) is weakly decreasing in \(\theta\). We are to show that if \(p_2(\mu_\theta,q)\) is weakly decreasing in \(\theta\), and \(f(\tilde{\pi},\mu_i)>f(\pi,\mu_i)\) for some \(i\) and \(\tilde{\pi}\) that is persuasive at belief \(\mu_N\), then \(f(\tilde{\pi},\mu_j)>f(\pi,\mu_j)\) for all \(j<i\). Hence, the critical off-path receiver beliefs that assign any deviation to the lowest type is consistent with the D1 criterion.

	Let \(\mathfrak{q}_\theta\) be the type \(\theta\) sender's indifference curve through \(\pi\). Notice that \(f(\tilde{\pi},\mu_i)>f(\pi,\mu_i)\) if and only if \(\tilde{p}\in(p_1(\mu_i,q),p_2(\mu_i,q))\) and \(\tilde{q}\in(\mathfrak{q}_i(\tilde{p}),\mathbf{Q}(\mu_N)\tilde{p}]\). Fix any \(j<i\). By Proposition \ref{pro3_part2}, \(\mathfrak{q}_i(\tilde{p})>\mathfrak{q}_j(\tilde{p})\) for all \(\tilde{p}<\hat{\mathbf{p}}(q)\). Moreover, \(\mathfrak{q}_i>\mathfrak{q}_j\) on a small neighborhood to the right of \(\hat{\mathbf{p}}(q)\) and at \(p_2(\mu_i,q)\). Therefore, \(\mathfrak{q}_i(\tilde{p})>\mathfrak{q}_j(\tilde{p})\) for all \(\tilde{p}\in(\hat{\mathbf{p}}(q),p_2(\mu_i,q))\). Otherwise, \(\mathfrak{q}_i=\mathfrak{q}_j\) has two solutions \(p'_1<p'_2\) on \((\hat{\mathbf{p}}(q),p_2(\mu_i,q))\), and \(\frac{\partial}{\partial\mu}MRS((p'_1,\mathfrak{q}_i(p'_1))|\mu)<0<\frac{\partial}{\partial\mu}MRS((p'_2,\mathfrak{q}_i(p'_2))|\mu)\), which is a contradiction to Proposition \ref{pro3_part2}. Therefore, for all \(\tilde{p}\in(p_1(\mu_i,q),p_2(\mu_i,q))\), \(\mathfrak{q}_j(\tilde{p})\leq \mathfrak{q}_i(\tilde{p})\). That is, \(f(\tilde{\pi},\mu_i)>f(\pi,\mu_i)\) implies \(f(\tilde{\pi},\mu_j)>f(\pi,\mu_j)\).
\end{proof}

\subsection{Proof of Proposition \ref{pro11}}

The proof uses the following lemma, which we prove at the end. It says that there is a unique experiment (aside from the uninformative experiment) that gives all types of the sender zero payoff.

\begin{lmm}\label{lmm.a9}
	Let \(C_g>\hat{K}(C_b)\). There exists a unique \(\pi\in\Pi^\circ\) such that \(f(\pi,\mu_\theta)=0\) for all \(\theta\). Moreover, \(\frac{\partial}{\partial\mu}MRS(\pi|\mu)<0\).
\end{lmm}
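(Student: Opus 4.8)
The plan is to reduce the statement to an explicit one–dimensional problem. Because $c(\pi\mid\mu)$ is affine in $\mu$ — in the proof of Proposition~\ref{pro3}, $c(\pi\mid\mu)=C_g\mu D_{KL}(P\|Q)+C_b(1-\mu)D_{KL}(Q\|P)$ with $P,Q$ the Bernoulli laws with means $p,q$ — so is $f(\pi,\mu)$, so $f(\pi,\mu_\theta)=0$ for all $\theta$ is equivalent to $f(\pi,1)=0$ and $f(\pi,0)=0$, i.e.\ to $p=C_gD_{KL}(P\|Q)$ and $q=C_bD_{KL}(Q\|P)$; any solution has $p>q$ since $f=q>0$ on the uninformative locus. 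Introducing $d=\ln(p/q)>0$, $r=\ln\frac{1-q}{1-p}>0$ and $\ell=d+r=\ln\frac{p(1-q)}{(1-p)q}$, and using $D_{KL}(P\|Q)=pd-(1-p)r$, $D_{KL}(Q\|P)=-qd+(1-q)r$, the two equations become $\frac{p}{1-p}=\frac{C_gr}{C_gd-1}$ and $\frac{q}{1-q}=\frac{C_br}{1+C_bd}$ (forcing $C_gd>1$); substituting these back and using $e^{\ell}=\frac{p(1-q)}{(1-p)q}$ and $e^{r}=\frac{1-q}{1-p}$ collapses the problem to the system $\ell=\psi(d)$, $d=\psi(\ell)$, where
\[
\psi(x):=\ln\frac{C_g(1+C_bx)}{C_b(C_gx-1)},\qquad \psi'(x)=-\frac{C_g+C_b}{(C_gx-1)(1+C_bx)}<0,
\]
a strictly decreasing, strictly convex bijection of $(1/C_g,\infty)$ onto $(0,\infty)$. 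Conversely, a solution with $\ell>d$ reconstructs a unique $\pi\in\Pi^{\circ}$ with $p>q$; the case $\ell=d$ returns only the excluded uninformative experiment.

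The first task is therefore to prove that $\psi$ admits a unique $2$-cycle $\{d,\ell\}$ with $d<\ell$. Letting $d^{\star}$ be the unique fixed point of $\psi$, this says that $\Theta(d):=\psi(\psi(d))-d$ has exactly one zero on $(1/C_g,d^{\star})$. One checks $\Theta(d^{\star})=0$ and $\Theta(1/C_g^{+})=-1/C_g<0$, and the hypothesis enters here: $C_g>\hat K(C_b)$ forces $|\psi'(d^{\star})|<1$, so $\Theta'(d^{\star})=\psi'(d^{\star})^2-1<0$, whence $\Theta>0$ just left of $d^{\star}$ and an interior zero exists (the admissibility $d^{\star}<\psi^{-1}(1/C_g)$ is automatic). \emph{Uniqueness is the main obstacle.} I would establish it by showing $\Theta$ is single-peaked on $(1/C_g,d^{\star})$: from the formula for $\psi'$, $\Theta'(d)$ has the sign of $(C_g+C_b)^2-A(d)B(d)A'(d)B'(d)$ with $A=1+C_b\ell$, $B=C_g\ell-1$, $A'=1+C_bd$, $B'=C_gd-1$ and $\ell=\psi(d)$, and expressing this through $X=e^{\ell}-1$ reduces $\Theta'(d)=0$ to a single transcendental equation in $X$ with a unique root on the relevant range, so that $\Theta'$ changes sign exactly once. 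The strict convexity of $\psi$ and the auxiliary inequalities from Lemmas~\ref{lmm.a7}–\ref{lmm.a8} should carry this computation.

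For the second assertion, the derivative formula in the proof of Proposition~\ref{pro3} shows that $\frac{\partial}{\partial\mu}MRS(\pi\mid\mu)$ has the sign of $\Delta(p,q)$, so it suffices to show $\Delta<0$ at the experiment just produced. Writing $\Delta$ through $t=e^{-\ell}$ and using the factorization $C_gC_b\ell^2+(C_g-C_b)\ell-1=(C_g\ell-1)(1+C_b\ell)=AB$, together with $e^{\ell}+e^{-\ell}-2=\frac{(C_g+C_b)^2}{C_gC_bA'B'}$ (from $e^{\ell}=\frac{C_gA'}{C_bB'}$) and the symmetric identity $AB=\frac{(C_g+C_b)^2e^{d}}{C_gC_b(e^{d}-1)^2}$, one gets $\Delta=AB-\frac{(C_g+C_b)^2}{A'B'}$ and hence
\[
\Delta<0\ \Longleftrightarrow\ (C_g+C_b)\sqrt{e^{\ell}e^{d}}<C_gC_b(e^{\ell}-1)(e^{d}-1).
\]
Finally, subtracting the defining equations rewritten as $\ln(1+X)=\frac{C_g+C_b}{C_gC_bY}+\frac1{C_g}$ and $\ln(1+Y)=\frac{C_g+C_b}{C_gC_bX}+\frac1{C_g}$, with $X=e^{\ell}-1$, $Y=e^{d}-1$, yields $C_gC_bXY=(C_g+C_b)\,\frac{(1+X)-(1+Y)}{\ln(1+X)-\ln(1+Y)}$, i.e.\ $C_gC_bXY/(C_g+C_b)$ is the logarithmic mean of $1+X$ and $1+Y$; since $X\neq Y$, the classical inequality $\text{(geometric mean)}<\text{(logarithmic mean)}$ delivers exactly the displayed inequality, so $\Delta<0$ and the marginal rate of substitution is strictly decreasing in $\mu$.
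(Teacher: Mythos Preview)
Your reduction to a two-cycle for the explicit map $\psi$ is correct and the logarithmic-mean argument for $\Delta<0$ is elegant, but there are two genuine gaps. First, the existence step asserts without proof that $C_g>\hat K(C_b)$ forces $|\psi'(d^\star)|<1$. This is true but needs work: from the fixed-point relation one gets $C_g+C_b=C_b(e^{d^\star}-1)(C_gd^\star-1)$, whence $|\psi'(d^\star)|<1$ is equivalent to $e^{d^\star}-1-d^\star<1/C_b$, i.e.\ $d^\star<\ln(1+\hat x(C_b))$ in the notation of (\ref{eq.b4}); one then checks that equality holds exactly at $C_g=\hat K(C_b)$ and that $d^\star$ is strictly decreasing in $C_g$. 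Second, and more importantly, your uniqueness argument is explicitly left as a sketch (``single-peakedness of $\Theta$\ldots Lemmas~\ref{lmm.a7}--\ref{lmm.a8} should carry this computation''). You do not need that route: your own $\Delta<0$ computation already yields uniqueness. At any interior zero $d$ of $\Theta$ one has
\[
\Theta'(d)=\psi'(\ell)\psi'(d)-1=\frac{(C_g+C_b)^2}{AB\,A'B'}-1,
\]
and your identity $\Delta=AB-(C_g+C_b)^2/(A'B')$ shows $\Delta<0\Leftrightarrow AB\,A'B'<(C_g+C_b)^2\Leftrightarrow\Theta'(d)>0$. So $\Theta$ crosses zero only from below on $(1/C_g,d^\star)$, which together with $\Theta(1/C_g^+)<0$ and $\Theta>0$ just left of $d^\star$ gives exactly one interior zero.

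The paper's route is different and more direct: it works with the level curves $\sigma_1,\sigma_0$ of $f(\cdot,1)=0$ and $f(\cdot,0)=0$ in $(p,q)$ space, establishes existence from the boundary asymptotics at $p\to 0$ and $p\to 1$ (this is where $C_g>\hat K(C_b)$ enters, via $\lim_{p\downarrow 0}\sigma_1'>\lim_{p\downarrow 0}\sigma_0'$), and then derives uniqueness and $\partial_\mu MRS<0$ in one stroke by showing $\sigma_1'(p)<\sigma_0'(p)$ at any intersection, which reduces to the elementary inequality $\bigl(\ln\frac{1-q}{1-p}\bigr)^2<\frac{(p-q)^2}{(1-p)(1-q)}$. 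Your logarithmic-mean argument is a pleasant alternative to that inequality, but the paper's crossing argument avoids the detour through the two-cycle formulation and the unverified claim about $\psi'(d^\star)$.
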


Let \(\pi^\star=(p^\star,q^\star)\) be the unique \(\pi\) in Lemma \ref{lmm.a9}, and \(\bar{\mu}=\mathbf{Q}^{-1}\left(\frac{q^\star}{p^\star}\right)\). We show that an uninformative D1 equilibrium exists if and only if conditions (i) and (ii) are satisfied.

Suppose that an uninformative D1 equilibrium exists, but, contrary to the claim, (i) is not satisfied. Then there exists an experiment \(\hat{\pi}_1\) that gives the lowest sender type a positive payoff \(V(\mu_1,\mu_1)\) regardless of the receiver's belief, hence it is a profitable deviation, a contradiction. Suppose that (ii) is not satisfied, i.e., \(\frac{q^\star}{p^\star}<\mathbf{Q}(\mu_N)\). Then in a neighborhood of \(\pi^\star\), all experiments are persuasive at belief \(\mu_N\). We can thus choose in there an experiment \(\tilde{\pi}\) such that deviating to \(\tilde{\pi}\) is profitable for the highest sender type but strictly equilibrium dominated for all other types of the sender. Hence, the uninformative equilibrium does not satisfy the D1 criterion, a contradiction. Therefore, conditions (i) and (ii) must be satisfied.

Conversely, suppose that (i) and (ii) are satisfied. By construction, an experiment \(\pi=(p,q)\) that is persuasive at belief \(\mu_N\) gives some sender type \(\theta\) a positive payoff, i.e.,  \(f(\pi,\mu_\theta)>0\), only if \(p<p^\star\). And by Lemma \ref{lmm.a9}, \(f(\pi,\mu_{\theta'})>0\) for all \(\theta'<\theta\). That is, the lowest sender type is the most keen to deviate. By (i), if the receiver's off-path belief is such that \(\beta(\pi)=\mu_1\) for all \(\pi\), the lowest sender type does not have a profitable deviation from choosing the uninformative experiment. Therefore, there exists an uninformative D1 equilibrium where \(\beta(\pi)=\mu_1\) for all \(\pi\) that is not the uninformative experiment.

\begin{proof}[\indent Proof of Lemma \ref{lmm.a9}]
	Since \(f(\pi,\mu)\) is an affine function in \(\mu\), \(f(\pi,\mu_\theta)=0\) for all \(\theta\) is equivalent to the following system of equations
	\begin{gather}
		p-C_g\left[p\ln\frac{p}{q}-(1-p)\ln\frac{1-q}{1-p}\right]=0, \label{eq.e3}\\
		q+C_b\left[q\ln\frac{p}{q}-(1-q)\ln\frac{1-q}{1-p}\right]=0. \label{eq.e4}
	\end{gather}
	
	The left-hand side of (\ref{eq.e3}) is increasing in \(q\). Given any \(p\in(0,1)\), it goes to negative infinity as \(q\) goes to zero and equals \(p\) when \(q=p\). Therefore, (\ref{eq.e3}) defines a function \(\sigma_1:p\mapsto q\) such that \(0<\sigma_1(p)<p\) for all \(p\in(0,1)\). Similarly, (\ref{eq.e4}) defines a function \(\sigma_0:p\mapsto q\) on \((0,1)\). Both \(\sigma_1\) and \(\sigma_0\) are continuous, hence their limits are well defined at 0 and 1. As \(p\) goes to 0, \(\lim_{p\downarrow 0}\sigma_1(p)=\lim_{p\downarrow 0}\sigma_0(p)=0\). As \(p\) goes to 1,
	\begin{equation}\label{eq.e5}
		\lim_{p\uparrow 1}\sigma_1(p) = \exp\left(-\frac{1}{C_g}\right)<1=\lim_{p\uparrow 1}\sigma_0(p).
	\end{equation}
	The first equality in (\ref{eq.e5}) is by observing that the left-hand side of (\ref{eq.e3}) converges to \(1+C_g\ln(q)\) as \(p\) goes to 1. The second equality in (\ref{eq.e5}) is by observing that, the limit resides in \([0,1]\), but for all \(q<1\), the left-hand side of (\ref{eq.e4}) goes to negative infinity as \(p\) goes to 1.

	We now show that \(\lim_{p\downarrow 0}\sigma_1'(p)>\lim_{p\downarrow 0}\sigma_0'(p)\). Therefore, for small positive \(p\), \(\sigma_1(p)>\sigma_0(p)\), so a solution exists for the system of equations (\ref{eq.e3}), (\ref{eq.e4}). Notice that, intuitively, \(\sigma_1\) is the sender's zero-payoff curve if her prior is 1. Therefore, \(V(1,\mu_r)>0\) if and only if \(\mathbf{Q}(\mu_r)>\lim_{p\downarrow 0}\sigma_1'(p)\). By Lemma \ref{lmm.a1}, \(Q=\lim_{p\downarrow 0}\sigma_1'(p)\) is the solution to
	\begin{equation}\label{eq.e6}
		1+C_g\left(\ln Q+1-Q\right) = 0.
	\end{equation}
	Notice that \(\ln Q+1-Q\) is increasing in \(Q\) on \((0,1]\). As \(Q\) goes to 0, it goes to negative infinity, and it equals 0 when \(Q=1\). Therefore, (\ref{eq.e6}) has a unique solution which is increasing in \(C_g\). By substituting \(C_g=\hat{K}(C_b)\) using (\ref{eq.b3}) and solving (\ref{eq.e6}), we obtain a strict lower bound
	\begin{equation*}
		\lim_{p\downarrow 0}\sigma_1'(p) > \frac{1}{1+\hat{x}(C_b)}.
	\end{equation*}
	Similarly, \(\sigma_0\) is the sender's zero-payoff curve if her prior is 0. Therefore, \(\lim_{p\downarrow 0}\sigma_0'(p)\) solves \(Q-C_b(Q\ln Q+1-Q)=0\). By (\ref{eq.b4}),
	\begin{equation*}
		\lim_{p\downarrow 0}\sigma_0'(p)=\frac{1}{1+\hat{x}(C_b)}.
	\end{equation*}

	Let \(\pi=(p,q)\) be a solution to (\ref{eq.e3}), (\ref{eq.e4}). We wan to show that \(\sigma_1'(p)<\sigma_0'(p)\). This simultaneously implies uniqueness and that \(\frac{\partial}{\partial\mu}MRS(\pi|\mu)<0\), since \(\sigma_1'(p)=MRS(\pi|1)\) and \(\sigma_0'(p)=MRS(\pi|0)\). Taking derivative of the left-hand side of (\ref{eq.e3}) with respect to \(p\) and evaluating at \(\pi\) yields
	\begin{equation*}
		1-C_g\left[\ln\frac{p}{q}+\ln\frac{1-q}{1-p}\right] = C_g\frac{1}{p}\ln\frac{1-q}{1-p}.
	\end{equation*}
	The derivative with respect to \(q\) is \(\frac{p-q}{q(1-q)}\). Hence,
	\begin{equation*}
		\sigma_1'(p) = \frac{q(1-q)}{p(p-q)}\ln\frac{1-q}{1-p}.
	\end{equation*}
	Similarly,
	\begin{equation*}
		\sigma_0'(p) = \frac{q(p-q)}{p(1-p)}\left(\ln\frac{1-q}{1-p}\right)^{-1}.
	\end{equation*}
	Therefore, \(\sigma_1'(p)<\sigma_0'(p)\) is equivalent to
	\begin{equation}\label{eq.e7}
		\left(\ln\frac{1-q}{1-p}\right)^2<\frac{(p-q)^2}{(1-p)(1-q)}.
	\end{equation}
	Notice that both sides of (\ref{eq.e7}) are decreasing in \(q\) on \((0,p]\), and they equal 0 at \(q=p\). Hence, it is sufficient to show that the right-hand decreases faster, that is,
	\begin{equation*}
		-2\ln\frac{1-q}{1-p} > -\frac{1-q}{1-p}+\frac{1-p}{1-q}.
	\end{equation*}
	This is true since \(\frac{1-q}{1-p}>1\) and \(2\ln x-x+\frac{1}{x}<0\) for all \(x>1\).
\end{proof}

\subsection{Proof of Proposition \ref{pro6}}

\begin{proof}
	We first show that there exists a unique collection of experiments satisfying conditions (i) to (iii). We then show that \(\{\pi_\theta\}_{\theta\in\Theta}\) is a D1 equilibrium outcome if and only if these conditions are satisfied.

	\vskip1em
	\textit{Existence and uniqueness.} Let \(\theta\geq 2\) be such that \(V(\mu_\theta,\mu_\theta)>0\), and fix \(v^\star_{\theta-1}\in[0,V(\mu_{\theta-1},\mu_{\theta-1})]\) such that \(v^\star_{\theta-1}>0\) if \(V(\mu_{\theta-1},\mu_{\theta-1})>0\). Consider the following maximization problem with a relaxed constraint
	\begin{equation}\label{eq.a5}
		\begin{aligned}
			&\max_{\pi=(p,q)} f(\pi,\mu_\theta) \\
			&s.t.~ \frac{q}{p}\leq\mathbf{Q}(\mu_\theta)~\text{and}~f(\pi,\mu_{\theta-1})\leq v_{\theta-1}^\star.
		\end{aligned}
	\end{equation}
	We are to show that (\ref{eq.a5}) has a unique solution \((p_\theta,q_\theta)\), and \(\frac{q_\theta}{p_\theta}=\mathbf{Q}(\mu_\theta)\). Therefore, (\ref{eq.a5}) is equivalent to (\ref{eq.5}). Moreover, \(v^\star_\theta\) is bounded from above by \(V(\mu_\theta,\mu_\theta)\) and, as we show later in the proof, strictly positive. Hence, by induction on \(\theta\), we can solve a unique collection of experiments \(\{\pi_\theta\}_{\theta\in\Theta}\) satisfying conditions (i) to (iii).

	If \(v^\star_{\theta-1}>0\), there exists a unique \(\tilde{\mu}\in(0,\mu_{\theta-1}]\) such that \(V(\mu_{\theta-1},\tilde{\mu})=v^\star_{\theta-1}\). If \(v^\star_{\theta-1}=0\), \(V(\mu_{\theta-1},\mu_{\theta-1})=0\). Therefore, \(\mathbf{F}(C_g,C_b,\mu_{\theta-1},\mu_{\theta-1})<0\), and by assumption, \(\mathbf{F}(C_g,C_b,\mu_\theta,\mu_\theta)>0\), where \(\mathbf{F}\) is defined in Lemma \ref{lmm.a1}. Hence, there exists some \(\tilde{\mu}\in[\mu_{\theta-1},\mu_\theta]\) such that \(\mathbf{F}(C_g,C_b,\mu_{\theta-1},\tilde{\mu})\leq 0\) and \(\mathbf{F}(C_g,C_b,\mu_\theta,\tilde{\mu})>0\), i.e., \(V(\mu_\theta,\tilde{\mu})>0\), and \(V(\mu_{\theta-1},\tilde{\mu})=0\). In either case, there exists a unique experiment \(\tilde{\pi}\) that is persuasive at belief \(\tilde{\mu}\) such that \(f(\tilde{\pi},\mu_\theta)=V(\mu_\theta,\tilde{\mu})\), and \(f(\tilde{\pi},\mu_{\theta-1})<V(\mu_{\theta-1},\tilde{\mu})=v^\star_{\theta-1}\). Hence, (\ref{eq.a5}) is equivalent to
	\begin{equation}\label{eq.d12}
		\max_{\pi\in\hat{\Pi}} f(\pi,\mu_\theta)
	\end{equation}
	where \(\hat{\Pi}=\{\pi=(p,q):\frac{q}{p}\leq\mathbf{Q}(\mu_\theta), f(\pi,\mu_{\theta-1})\leq v^\star_{\theta-1},f(\pi,\mu_\theta)\geq f(\tilde{\pi},\mu_\theta)\}\) is nonempty and compact. Moreover, \(\hat{\Pi}\) does not contain \((0,0)\), so \(f(\cdot,\mu_\theta)\) is continuous on \(\hat{\Pi}\). Therefore, (\ref{eq.d12}) admits a solution and its value is at least \(f(\tilde{\pi},\mu_\theta)>0\).

	Let \(\pi_\theta=(p_\theta,q_\theta)\) be a solution of (\ref{eq.d12}). Uniqueness is shown once we show that \(\frac{q_\theta}{p_\theta}=\mathbf{Q}(\mu_\theta)\), since by single-crossing, the intersection of \(\hat{\Pi}\) and \(\frac{q}{p}=\mathbf{Q}(\mu_\theta)\) is connected, and \(f((p,\mathbf{Q}(\mu_\theta)p),\mu_\theta)\) is concave in \(p\). Suppose by way of contradiction that \(\frac{q_\theta}{p_\theta}<\mathbf{Q}(\mu_\theta)\). Then in a neighborhood of \(\pi_\theta\), all experiments are persuasive at belief \(\mu_\theta\). By single-crossing, we can find in there an experiment \(\pi'\) such that \(f(\pi',\mu_\theta)>f(\pi_\theta,\mu_\theta)\geq f(\tilde{\pi},\mu_\theta)\) and \(f(\pi',\mu_{\theta-1})<f(\pi_\theta,\mu_{\theta-1})\leq v^\star_{\theta-1}\). That is, \(\pi'\in\hat{\Pi}\) and \(f(\pi',\mu_\theta)>f(\pi_\theta,\mu_\theta)\). This is a contradiction to \(\pi_\theta\) being a maximizer of (\ref{eq.d12}).

	\vskip1em
	\textit{Necessity.} Let \(\{\pi_\theta\}_{\theta\in\Theta}\) be a D1 equilibrium outcome. Assume by way of contradiction that \(V(\mu_\theta,\mu_\theta)=0\), but \(\pi_\theta\) is not the uninformative experiment for some sender type \(\theta\). By Lemma \ref{lmm4}, \(\pi_\theta\) is only chosen by the sender type \(\theta\), so it is persuasive at belief \(\beta(\pi_\theta)=\mu_\theta\). But \(V(\mu_\theta,\mu_\theta)=0\), so \(f(\pi_\theta,\mu_\theta)<0\). Therefore, the uninformative experiment is a profitable deviation for the sender, a contradiction. Hence, (i) is satisfied. Notice that the value of maximization problem (\ref{eq.d12}), which equals \(v^\star_\theta\) due to the equivalence between (\ref{eq.d12}) and (\ref{eq.5}), is at least \(f(\tilde{\pi},\mu_\theta)>0\). Therefore, (ii) is implied by (iii).

	We now show that \(v^\star_1=V(\mu_1,\mu_1)\). If \(V(\mu_1,\mu_1)=0\), \(v_1^\star=0\) by (i). If \(V(\mu_1,\mu_1)>0\), by Proposition \ref{pro3}, there exists a unique experiment \(\hat{\pi}_1\) that is persuasive at belief \(\mu_1\) such that \(f(\hat{\pi}_1,\mu_1)=V(\mu_1,\mu_1)\). That is, the lowest sender type gets \(V(\mu_1,\mu_1)\) from choosing \(\hat{\pi}_1\) regardless of the receiver's belief, so her equilibrium payoff \(v^\star_1\geq V(\mu_1,\mu_1)\). On the other hand, \(\pi_1\) is persuasive at belief \(\mu_1\), so \(v^\star_1=f(\pi_1,\mu_1)\leq V(\mu_1,\mu_1)\). Therefore, \(v^\star_1=V(\mu_1,\mu_1)\).

	For the rest of the proof, we denote by \(v^\star_\theta\) the unique payoff of the sender type \(\theta\) pinned down by (i) through (iii), and \(\pi^\star_{\theta}=(p^\star_{\theta-1},\mathbf{Q}(\mu_\theta)p^\star_\theta)\) the solution to (\ref{eq.d12}) for all \(\theta\geq 2\) such that \(V(\mu_\theta,\mu_\theta)>0\).

	In order to show that the rest of (iii) holds, we prove the following lemma: \(f(\pi^\star_\theta,\mu_{\theta'})\leq v^\star_{\theta'}\) for all \(\theta\geq 2\) such that \(V(\mu_\theta,\mu_\theta)>0\) and \(\theta'<\theta\). Consider two cases. First, if \(v^\star_{\theta-1}=0\), then by (i) and (ii), \(v^\star_{\theta'}=0\) for all sender types \(\theta'<\theta\). By single-crossing, \(f(\pi^\star_{\theta},\mu_{\theta-1})\leq 0\) implies that \(f(\pi^\star_\theta,\mu_{\theta'})\leq 0\) for all \(\theta'<\theta\). Second, if \(\pi^\star_{\theta-1}\) is not the uninformative experiment, by single-crossing, \(p>p^\star_{\theta-1}\) for all \(\pi=(p,\mathbf{Q}(\mu_\theta)p)\in\hat{\Pi}\). Otherwise, the indifference curve of the type \(\theta\) sender through \(\pi\) is lower than the indifference curve of the type \(\theta-1\) sender through \(\pi^\star_{\theta-1}\) on \([p,1)\). By construction, \(\tilde{\mu}\) is such the indifference curve of the type \(\theta-1\) sender through \(\pi^\star_{\theta-1}\) is tangent to the straight line \(\frac{q}{p}=\mathbf{Q}(\tilde{\mu})\). Therefore, for some \(\pi'=(p',q')\) such that \(f(\pi',\mu_\theta)=f(\pi,\mu_\theta)\), \(\frac{q'}{p'}<\mathbf{Q}(\tilde{\mu})\). Hence, \(f(\pi,\mu_{\theta})=f(\pi',\mu_\theta)<V(\mu_\theta,\tilde{\mu})=f(\tilde{\pi},\mu_\theta)\). That is, \(\pi\notin\hat{\Pi}\), a contradiction. Hence, \(p^\star_\theta>p^\star_{\theta-1}\). By single-crossing, \(f(\pi^\star_\theta,\mu_{\theta-1})\leq f(\pi^\star_{\theta-1},\mu_{\theta-1})\) implies that \(f(\pi^\star_\theta,\mu_{\theta'})\leq f(\pi^\star_{\theta-1},\mu_{\theta'})\) for all \(\theta'<\theta\). By induction on \(\theta\), we conclude that \(f(\pi^\star_\theta,\mu_{\theta'})\leq f(\pi^\star_{\theta'},\mu_{\theta'})=v^\star_{\theta'}\) for all \(\theta'<\theta\).

	We now show that condition (iii) holds. Suppose that \(\pi_\theta\neq\pi^\star_\theta\) for some \(\theta\geq 2\) such that \(V(\mu_\theta,\mu_\theta)>0\). That is, \(f(\pi^\star_\theta,\mu_\theta)>f(\pi_\theta,\mu_\theta)\). Let \(q'<q^\star_\theta\) be such that \(f(\pi',\mu_\theta)>f(\pi_\theta,\mu_\theta)\), where \(\pi'=(p^\star_\theta,q')\). \(\pi'\) is persuasive at belief \(\mu_\theta\), so it is a profitable deviation for the type \(\theta\) sender if \(\beta(\pi')\geq\mu_\theta\). Moreover, \(f(\pi',\mu_{\theta'})<f(\pi^\star_\theta,\mu_{\theta'})\leq v^\star_{\theta'}\) for all \(\theta'<\theta-1\). That is, \(\pi'\) is strictly equilibrium dominated for all sender types \(\theta'\leq\theta\). Hence, the D1 equilibrium requires that \(\beta(\pi')\geq\mu_\theta\), but given this off-path belief, \(\pi'\) is a profitable deviation for the type \(\theta\) sender, a contradiction. Hence, \(\pi_\theta=\pi^\star_\theta\) for all \(\theta\).

	\vskip1em
	\textit{Sufficiency.} Let \(\{\pi_\theta\}_{\theta\in\Theta}\) satisfy conditions (i) to (iii). We are to construct a D1 equilibrium with outcome \(\{\pi_\theta\}_{\theta\in\Theta}\). Let \(\beta(\pi_\theta)\) satisfy Bayes' rule for all \(\theta\in\Theta\), i.e., \(\beta(\pi_\theta)=\mu_\theta\) if \(\pi_\theta\) is not the uninformative experiment. For all off-the-equilibrium-path experiments \(\pi'\notin\{\pi_\theta\}_{\theta\in\Theta}\), let \(\hat{\Theta}({\pi'})=\{\theta\in\Theta:f(\pi',\mu_\theta)>v^\star_{\theta}\}\). That is, \(\hat{\Theta}({\pi'})\) is the set of sender types who may find \(\pi'\) to be a profitable deviation if \(\pi'\) is persuasive at belief \(\beta(\pi')\). Let \(\beta(\pi')=\mu_1\) if \(\hat{\Theta}(\pi')=\emptyset\) and \(\beta(\pi')=\min\hat{\Theta}(\pi')\) if \(\hat{\Theta}(\pi')\) is nonempty. Moreover, let \(\hat{\beta}\) and \(\mathbf{a}\) be such that \(\hat{\beta}(\pi',s)=\mathbf{B}(\beta(\pi'),\pi',s)\) for all \(\pi'\in\Pi\) and \(s\in\{g,b\}\), and \(\mathbf{a}(\pi',s)=1\) if and only if \(\hat{\beta}(\pi',s)\geq\bar{\beta}\). 

	Notice that \(D^0_\theta(\pi')=D_\theta(\pi')=[\mathbf{Q}^{-1}(\frac{q'}{p'}),\mu_N]\) for all \(\pi'=(p',q')\in\Pi^\circ\) and all \(\theta\in\hat{\Theta}(\pi')\), and \(D_\theta(\pi')=\emptyset\) for all \(\theta\notin\hat{\Theta}(\pi')\). Hence, the D1 criterion is satisfied. Hence, to check that \((\{\Pi_\theta\}_{\theta\in\Theta},\mathbf{a},\beta)\) is a D1 equilibrium, we only need to check that no sender type has a profitable deviation. If \(\beta(\pi')=\mu_1\), since all sender types' equilibrium payoffs \(v^\star_\theta\geq V(\mu_\theta,\mu_1)\), \(\pi'\) is not a profitable deviation. If \(\beta(\pi')=\mu_\theta\) for some \(\theta\geq 2\), \(\pi'\) is not persuasive at belief \(\mu_\theta\). Suppose by way of contradiction that \(\pi'\) is persuasive at belief \(\mu_\theta\). By definition, \(f(\pi',\mu_\theta)>f(\pi_\theta,\mu_\theta)\), \(f(\pi',\mu_{\theta-1})\leq v^\star_{\theta-1}\). This is a contradiction to that \(\pi_\theta\) is the solution to the maximization problem (\ref{eq.5}). Therefore, \(\pi'\) is not persuasive at belief \(\mu_\theta\). Hence, the payoff from deviating to \(\pi'\) is negative for all sender types, so there is no profitable deviation.
\end{proof}

\subsection{Proof of Proposition \ref{pro7}}

\begin{proof}
	If \(C_g>\hat{K}(C_b)\), in the symmetric information benchmark with no noisy signal, the sender chooses \(\pi^{si}(\mu_0)=(\hat{p},\mathbf{Q}(\mu_0)\hat{p})\), and \(\hat{p}>\hat{\mathbf{p}}(\mathbf{Q}(\mu_0)\hat{p}))\) by Lemma \ref{lmm.a5}. That is, \(\frac{\mathbf{Q}(\mu_0)\hat{p}}{\hat{\mathbf{p}}(\mathbf{Q}(\mu_0)\hat{p})}>\mathbf{Q}(\mu_0)\). Recall that \(\frac{\bar{q}}{\hat{\mathbf{p}}(\bar{q})}=\mathbf{Q}(\mu_0)\), and notice that \(\frac{q}{\hat{\mathbf{p}}(q)}=q+\hat{t}(1-q)\) is increasing in \(q\). Therefore, \(\mathbf{Q}(\mu_0)\hat{p}>\bar{q}\). Hence, \(\pi^{si}(\mu_0)\) is Blackwell more informative than the experiment \((\hat{\mathbf{p}}(\bar{q}),\bar{q})\).

	If \(C_g\leq\hat{K}(C_b)\), by Proposition \ref{pro6}, the sender chooses the same experiment \(\pi^{si}(\mu_\theta)\) in the D1 separating equilibrium if \(\theta=1\) or if \(V(\mu_\theta,\mu_\theta)=0\). If \(V(\mu_\theta,\mu_\theta)>0\) for some \(\theta>1\), denote by \(\pi^\star_\theta=(p^\star_\theta,\mathbf{Q}(\mu_\theta)p^\star_\theta)\) the experiment chosen by the type \(\theta\) sender in the D1 separating equilibrium. If \(\hat{\pi}_\theta\in\hat{\Pi}\), where \(\hat{\Pi}\) is defined in (\ref{eq.d12}), \(\pi^\star_\theta=\hat{\pi}_\theta\). If \(\hat{\pi}_\theta\notin\hat{\Pi}\), \(f((p,\mathbf{Q}(\mu_\theta)),\mu_\theta)\) is decreasing in \(p\) at \(p^\star_\theta\). Hence, \(p^\star_\theta>\hat{p}_\theta\), that is, \(\pi^\star_\theta\) is Blackwell more informative than \(\pi^{si}(\mu_\theta)\).
\end{proof}

\subsection{Results relating to more precise sender private information}

We assume in this section that \(\mu_N>\bar{\beta}\). We prove two lemmas. Lemma \ref{lmm.b8} shows that the sender's payoff \(1-c(\pi|\mu)\) satisfies the single-crossing property. Lemma \ref{lmm.b9} generalizes Lemma \ref{lmm4}. It states that pooling is only possible at an experiment \(\pi\in\Pi^\circ\) if sender types have the same marginal rate of substitution, and the receiver chooses the high (low) action after seeing the good (bad) outcome from \(\pi\).

\begin{lmm}\label{lmm.b8}
	For all \(\pi\in\Pi^\circ\),
	\begin{equation}\label{eq.e9}
		\frac{\partial}{\partial\mu}\left[-\frac{\partial c(\pi|\mu)/\partial p}{\partial c(\pi|\mu)\partial q}\right] < 0.
	\end{equation}
\end{lmm}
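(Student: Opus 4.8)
The plan is to exploit the fact, established in the proof of Proposition~\ref{pro3}, that $c(\pi|\mu)$ is \emph{affine} in $\mu$. Consequently $-\partial_p c/\partial_q c$ is a ratio of two affine functions of $\mu$, i.e.\ a Möbius function of $\mu$, whose monotonicity in $\mu$ is governed by the sign of a single $\mu$-free cross term.

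First I would record the partial derivatives explicitly. Writing $d_1(p,q)=D_{KL}(P\|Q)$, $d_0(p,q)=D_{KL}(Q\|P)$ and $L:=\ln\frac{p(1-q)}{(1-p)q}>0$ (recall $1>p>q>0$ on $\Pi^\circ$), the computation in the proof of Proposition~\ref{pro3} gives $c(\pi|\mu)=C_g\mu\, d_1+C_b(1-\mu)\, d_0$, and differentiating,
\begin{gather*}
	\frac{\partial c}{\partial p}=a_p+b_p\mu,\qquad \frac{\partial c}{\partial q}=a_q+b_q\mu,\\
	a_p=C_b\frac{p-q}{p(1-p)},\quad b_p=C_gL-C_b\frac{p-q}{p(1-p)},\quad a_q=-C_bL,\quad b_q=C_bL-C_g\frac{p-q}{q(1-q)}.
\end{gather*}
Note $\partial c/\partial q=-C_bL(1-\mu)-C_g\frac{p-q}{q(1-q)}\mu<0$ for all $\mu\in[0,1]$, so the ratio on the left of \eqref{eq.e9} is well defined everywhere on $\Pi^\circ$. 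Differentiating the Möbius function, $\frac{\partial}{\partial\mu}\big[-\frac{\partial_p c}{\partial_q c}\big]=-\frac{b_pa_q-a_pb_q}{(\partial_q c)^2}$, so it suffices to show $b_pa_q-a_pb_q>0$.

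Second, I would expand $b_pa_q-a_pb_q$; the two terms proportional to $C_b^2 L\frac{p-q}{p(1-p)}$ cancel, leaving
\begin{equation*}
	b_pa_q-a_pb_q=C_gC_b\left[\frac{(p-q)^2}{pq(1-p)(1-q)}-L^2\right],
\end{equation*}
which is (up to the positive factor $C_gC_b$) the negative of the bracket appearing in the definition of $\Delta(p,q)$ in the proof of Proposition~\ref{pro3}. Using the same change of variable $t=\frac{(1-p)q}{p(1-q)}\in(0,1)$ employed there, one has $L=-\ln t$ and $\frac{(p-q)^2}{pq(1-p)(1-q)}=\frac{(1-t)^2}{t}$, so the bracket equals $\frac{(1-t)^2}{t}-(\ln t)^2$. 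Substituting $t=e^{-2s}$ with $s>0$ turns this into $(e^s-e^{-s})^2-(2s)^2=4(\sinh^2 s-s^2)>0$, since $\sinh s>s$ for every $s>0$. Hence $b_pa_q-a_pb_q>0$ and \eqref{eq.e9} follows.

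The only place requiring care is the cancellation in the expansion of $b_pa_q-a_pb_q$ together with the bookkeeping of the partial derivatives of the Kullback--Leibler divergences; this is routine algebra and is essentially lifted from the proof of Proposition~\ref{pro3}. Once the problem is reduced to the elementary inequality $\sinh s>s$ there is no further obstacle, so I do not anticipate any genuine difficulty.
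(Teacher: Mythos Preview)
Your argument is correct and follows essentially the same route as the paper: both recognize that $c(\pi|\mu)$ is affine in $\mu$, write the ratio as a M\"obius function, and reduce the sign of its $\mu$-derivative to the sign of the single cross term, which in the paper's $A_i$ notation is $(A_2+1)(A_3-1)-A_1(A_4+1)$ and coincides (up to sign) with your $b_pa_q-a_pb_q$; after the change of variable $t$ both expressions collapse to $C_gC_b\big[(\ln t)^2-\tfrac{(1-t)^2}{t}\big]$. The only difference is cosmetic: the paper simply asserts this bracket is negative on $(0,1)$, whereas you supply the elementary verification via $t=e^{-2s}$ and $\sinh s>s$.
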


\begin{proof}
	Notice that
	\begin{equation*}
		-\frac{\partial c(\pi|\mu)/\partial p}{\partial c(\pi|\mu)\partial q} = \frac{A_1(p,q) + [A_2(p,q)+1]\mu}{[A_3(p,q)-1] + [A_4(p,q)+1]\mu},
	\end{equation*}
	where \(A_1\), \(A_2\), \(A_3\), \(A_4\) are defined in (\ref{eq.f1}) and (\ref{eq.f2}). Therefore, the left-hand side of (\ref{eq.e9}) equals
	\begin{equation}\label{eq.e10}
		\frac{[A_2(p,q)+1][A_3(p,q)-1]-A_1(p,q)[A_4(p,q)+1]}{\left([A_3(p,q)-1] + [A_4(p,q)+1]\mu\right)^2},
	\end{equation}
	which has the same sign as its numerator. Using the change of variable (\ref{eq.a17}), the numerator of (\ref{eq.e10}) equals
	\begin{equation*}
		C_gC_b\left[(\ln t)^2-\frac{(1-t)^2}{t}\right],
	\end{equation*}
	which is negative for all \(t\in(0,1)\). Hence, the left-hand side of (\ref{eq.e9}) is negative.
\end{proof}

\begin{lmm}\label{lmm.b9}
	Let \(\{\pi_\theta\}_{\theta\in\Theta}\) be a D1 equilibrium outcome, and \(\pi=(p,q)\in\Pi^\circ\). If \(\pi_i=\pi_j=\pi\) for some \(i,j\in\Theta\), \(MRS(\pi|\mu_i)=MRS(\pi|\mu_j)\), and \(\beta(\pi)<\mathbf{Q}^{-1}\left(\frac{1-q}{1-p}\right)\).
\end{lmm}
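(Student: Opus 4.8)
The plan is to run the nearby‑deviation argument from the proof of Lemma~\ref{lmm4}, but keeping track of the receiver's action after the \emph{bad} outcome, which matters now because with $\mu_N>\bar\beta$ the relevant off‑path interim belief can be high enough that the bad outcome also triggers the high action. Write $n:=\max\{\theta:\pi_\theta=\pi\}$. Since $\pi$ is chosen by at least two types ($i$ and $j$) and the priors are strictly ordered, Bayes' rule gives $\beta(\pi)<\mu_n$, and $n\ge 2$. Because $\pi\in\Pi^\circ$ has strictly positive cost while type $n$'s equilibrium payoff $v_n^\star$ is at least $0$ (the uninformative experiment is always available), $\pi$ must be persuasive at $\beta(\pi)$ after the good outcome, so $\mathbf{Q}^{-1}(q/p)\le\beta(\pi)<\mu_n$; hence every experiment in a small neighborhood of $\pi$ is persuasive at belief $\mu_n$ after the good outcome. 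The two conclusions are proved in turn.

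\emph{Step 1: $\beta(\pi)<\mathbf{Q}^{-1}(\tfrac{1-q}{1-p})$.} Suppose not. Then at belief $\beta(\pi)$ the receiver takes the high action after either outcome, so $v_\theta^\star=1-c(\pi|\mu_\theta)$ for each type $\theta$ choosing $\pi$, and $v_k^\star\ge 1-c(\pi|\mu_k)$ for every type $k$ (deviate to $\pi$). Since $\mathbf{Q}^{-1}(\tfrac{1-q}{1-p})\le\beta(\pi)<\mu_n$, by continuity the bad outcome still triggers the high action at belief $\mu_n$ for all experiments near $\pi$. By Lemma~\ref{lmm.b8}, $-c(\cdot|\mu)$ satisfies strict single‑crossing, so, letting $\iota_\theta$ be the level curve of $-c(\cdot|\mu_\theta)$ through $\pi$, there is $\varepsilon>0$ with $\iota_\theta(p')$ strictly decreasing in $\theta$ on $(p,p+\varepsilon)$. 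Pick an off‑path $\tilde\pi=(\tilde p,\tilde q)$ near $\pi$ with $\tilde p\in(p,p+\varepsilon)$ and $\iota_n(\tilde p)<\tilde q<\iota_{n-1}(\tilde p)$; then $c(\tilde\pi|\mu_n)<c(\pi|\mu_n)$ while $c(\tilde\pi|\mu_k)>c(\pi|\mu_k)$ for every $k<n$. For $k<n$ every possible deviation payoff of type $k$ at $\tilde\pi$ — one of $-c(\tilde\pi|\mu_k)<0$, $f(\tilde\pi,\mu_k)\le 1-c(\tilde\pi|\mu_k)$, and $1-c(\tilde\pi|\mu_k)<1-c(\pi|\mu_k)\le v_k^\star$ — is strictly below $v_k^\star$, so $D^0_k(\tilde\pi)=\emptyset$; while at belief $\mu_n$ type $n$'s payoff from $\tilde\pi$ is $1-c(\tilde\pi|\mu_n)>v_n^\star$, so $D_n(\tilde\pi)\ne\emptyset$. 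The D1 criterion then forces $\beta(\tilde\pi)\ge\mu_n$, making $\tilde\pi$ a strictly profitable deviation for type $n$ — a contradiction.

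\emph{Step 2: $MRS(\pi|\mu_i)=MRS(\pi|\mu_j)$.} By Step 1 the receiver at $\beta(\pi)$ takes the high action only after the good outcome, so $v_\theta^\star=f(\pi,\mu_\theta)$ for types choosing $\pi$ and $v_k^\star\ge f(\pi,\mu_k)$ for all $k$. Suppose $MRS(\pi|\mu_i)\ne MRS(\pi|\mu_j)$. By Proposition~\ref{pro3}, $\tfrac{\partial}{\partial\mu}MRS(\pi|\mu)$ has a constant, here nonzero, sign; treat $\tfrac{\partial}{\partial\mu}MRS(\pi|\mu)<0$ (the case $>0$ is symmetric, perturbing $p$ downward). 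As in Lemma~\ref{lmm4}, the $f$‑indifference curves $\mathfrak{q}_\theta$ through $\pi$ satisfy $\mathfrak{q}_\theta(p')$ strictly decreasing in $\theta$ on some $(p,p+\varepsilon)$. Pick an off‑path $\tilde\pi=(\tilde p,\tilde q)$ near $\pi$ with $\tilde p\in(p,p+\varepsilon)$ and $\mathfrak{q}_n(\tilde p)<\tilde q<\mathfrak{q}_{n-1}(\tilde p)$, so $f(\tilde\pi,\mu_n)>f(\pi,\mu_n)=v_n^\star$ and $f(\tilde\pi,\mu_k)<f(\pi,\mu_k)\le v_k^\star$ for every $k<n$. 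The step function $\beta\mapsto\bar v(\beta,\tilde\pi|\theta)$ equals $-c(\tilde\pi|\mu_\theta)$ for $\beta<\mathbf{Q}^{-1}(\tilde q/\tilde p)$, then $f(\tilde\pi,\mu_\theta)$ on $[\mathbf{Q}^{-1}(\tilde q/\tilde p),\mathbf{Q}^{-1}(\tfrac{1-\tilde q}{1-\tilde p}))$, then $1-c(\tilde\pi|\mu_\theta)$; hence $D_n(\tilde\pi)$ contains the whole interval $[\max\{\mu_1,\mathbf{Q}^{-1}(\tilde q/\tilde p)\},\mu_N]$ (in particular $\mu_n$), whereas for $k<n$ we have $D^0_k(\tilde\pi)\subseteq[\mathbf{Q}^{-1}(\tfrac{1-\tilde q}{1-\tilde p}),\mu_N]$. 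Since $\tilde p>\tilde q$ forces $\mathbf{Q}^{-1}(\tilde q/\tilde p)<\mathbf{Q}^{-1}(\tfrac{1-\tilde q}{1-\tilde p})$, we get $D^0_k(\tilde\pi)\subsetneq D_n(\tilde\pi)$ for every $k<n$, so again the D1 criterion forces $\beta(\tilde\pi)\ge\mu_n$, making $\tilde\pi$ a profitable deviation for type $n$ — a contradiction. Hence $MRS(\pi|\mu_i)=MRS(\pi|\mu_j)$.

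The main obstacle is the discrepancy between the two steps. In Lemma~\ref{lmm4} it sufficed that the nearby deviation be strictly equilibrium‑dominated for all lower types, but in Step~2 a high off‑path belief can push $\tilde\pi$ into the region where the bad outcome also persuades, so $D^0_k(\tilde\pi)$ need not be empty for the $f$‑based argument. The resolution is not to aim for equilibrium dominance but to compare $D^0_k(\tilde\pi)$ and $D_n(\tilde\pi)$ directly: because the ``both outcomes persuade'' belief threshold $\mathbf{Q}^{-1}(\tfrac{1-\tilde q}{1-\tilde p})$ strictly exceeds the ``good outcome persuades'' threshold $\mathbf{Q}^{-1}(\tilde q/\tilde p)$, the inequalities $f(\tilde\pi,\mu_n)>v_n^\star$ and $f(\tilde\pi,\mu_k)<v_k^\star$ already yield $D^0_k(\tilde\pi)\subsetneq D_n(\tilde\pi)$ — all that the D1 criterion requires.
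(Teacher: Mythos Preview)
Your proof is correct and follows essentially the same two-step route as the paper: first ruling out $\beta(\pi)\ge\mathbf{Q}^{-1}\bigl(\tfrac{1-q}{1-p}\bigr)$ via a nearby deviation built from the single-crossing of $-c(\cdot|\mu)$ (Lemma~\ref{lmm.b8}), then ruling out unequal $MRS$ via the $f$-based nearby deviation from Lemma~\ref{lmm4}. Your explicit comparison of $D_k^0(\tilde\pi)$ and $D_n(\tilde\pi)$ in Step~2, using that the ``good outcome persuades'' threshold lies strictly below the ``both outcomes persuade'' threshold, is exactly the paper's argument and in fact spells out the strict-inclusion step more cleanly than the paper's terse final sentence.
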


\begin{proof}
	Without loss of generality, let \(i\) be the highest sender type that chooses \(\pi\), hence \(\beta(\pi)<\mu_i\). We first show that \(\beta(\pi)<\mathbf{Q}^{-1}\left(\frac{1-q}{1-p}\right)\). Suppose by way of contradiction that \(\beta(\pi)\geq \mathbf{Q}^{-1}\left(\frac{1-q}{1-p}\right)\). Then in a small neighborhood of \(\pi\), all experiments \(\tilde{\pi}=(\tilde{p},\tilde{q})\) satisfy \(\mathbf{Q}^{-1}\left(\frac{1-\tilde{q}}{1-\tilde{p}}\right)<\mu_i\). By Lemma \ref{lmm.b8}, we can find in this neighborhood an experiment \(\tilde{\pi}\) such that \(1-c(\tilde{\pi}|\mu_i)>1-c(\pi|\mu_i)\), and \(1-c(\tilde{\pi}|\mu_k)>1-c(\pi|\mu_k)\) for all \(k<i\). Notice that the sender takes the high action if \(\pi\) is conducted regardless of its outcome, so the sender gets \(1-c(\pi|\mu_\theta)\) from choosing \(\pi\). Therefore, \(\tilde{\pi}\) is a profitable deviation for the type \(i\) sender if the receiver's interim belief \(\beta(\tilde{\pi})\geq\mathbf{Q}^{-1}\left(\frac{1-\tilde{q}}{1-\tilde{p}}\right)\), but it is strictly equilibrium dominated for all sender types \(k<i\). The D1 criterion thus requires that \(\beta(\tilde{\pi})\geq\mu_i>\mathbf{Q}^{-1}\left(\frac{1-\tilde{q}}{1-\tilde{p}}\right)\). This makes \(\tilde{\pi}\) a profitable deviation for the sender type \(i\), a contradiction.

	Therefore, \(\beta(\pi)<\mathbf{Q}^{-1}\left(\frac{1-q}{1-p}\right)\), and the sender's payoff from choosing \(\pi\) is \(f(\pi,\mu_\theta)\). The rest of the proof is in analogy to that of Lemma \ref{lmm4}. There exists a neighborhood of \(\pi\) wherein all experiments \(\tilde{\pi}\) are persuasive at belief \(\mu_i\) and \(1-c(\tilde{\pi}|\mu_\theta)>f(\pi,\mu_\theta)\). Suppose by way of contradiction that \(MRS(\pi|\mu_i)\neq MRS(\pi|\mu_j)\). Then there exists in this neighborhood an experiment \(\tilde{\pi}\) such that \(f(\tilde{\pi},\mu_i)>f(\pi,\mu_i)\), and \(f(\tilde{\pi},\mu_k)<f(\pi,\mu_k)\) for all \(k<i\). Deviating to \(\tilde{\pi}\) is strictly profitable for the sender type \(i\) if \(\beta(\tilde{\pi})\geq\mathbf{Q}^{-1}\left(\frac{\tilde{q}}{\tilde{p}}\right)\), but it gives all sender types \(k<i\) less than their equilibrium payoff if \(\beta(\tilde{\pi})\leq\mathbf{Q}^{-1}\left(\frac{1-\tilde{q}}{1-\tilde{p}}\right)\). Notice that \(\frac{1-\tilde{q}}{1-\tilde{p}}>1>\frac{\tilde{q}}{\tilde{p}}\), and \(\mathbf{Q}\) is increasing. Hence, the D1 criterion requires that \(\beta(\tilde{\pi})\geq\mu_i\), which makes \(\tilde{\pi}\) a profitable deviation for the sender type \(i\), a contradiction. Therefore, \(MRS(\pi|\mu_i)=MRS(\pi|\mu_j)\).
\end{proof}

\subsection{Proof of Proposition \ref{pro9}}

\begin{proof}
	For \(\underline{n}\leq n\leq\bar{n}\), let \(z_n\) be the expected cost of acquiring additional signals before the difference between the number of \(g\)'s and \(b\)'s reaches either threshold, conditional on the state being good and the current difference being \(n\). \(\{z_n\}_{\underline{n}\leq n\leq\bar{n}}\) satisfies the recurrence relation
	\begin{equation}\label{eq.a8}
		z_n = \bar{c}_g + \alpha z_{n+1} + (1-\alpha) z_{n-1}
	\end{equation}
	for all \(\underline{n}<n<\bar{n}\), where \(\bar{c}_g=\alpha c_g+(1-\alpha)c_b\), and the boundary conditions \(z_{\underline{n}}=z_{\bar{n}}=0\). Rewriting (\ref{eq.a8}) as
	\begin{equation*}
		\alpha\left(z_{n+1}+\frac{n+1}{2\alpha-1}\bar{c}_g\right)-\left(z_n+\frac{n}{2\alpha-1}\bar{c}_g\right)+(1-\alpha)\left(z_{n-1}+\frac{n-1}{2\alpha-1}\bar{c}_g\right)=0,
	\end{equation*}
	we can solve that
	\begin{equation*}
		z_n = C_1 x^n + C_2 - \frac{n}{2\alpha-1}\bar{c}_g,
	\end{equation*}
	where \(x=\frac{1-\alpha}{\alpha}\), and
	\begin{equation*}
		C_1=-\frac{\bar{c}_g}{2\alpha-1}\frac{\bar{n}-\underline{n}}{x^{\underline{n}}-x^{\bar{n}}},\quad C_2=\frac{\bar{c}_g}{2\alpha-1}\frac{\bar{n}x^{\underline{n}}-\underline{n}x^{\bar{n}}}{x^{\underline{n}}-x^{\bar{n}}}
	\end{equation*}
	are constants solved using the boundary conditions. Hence, the expected cost of implementing the threshold stopping rule \(\tau\) conditional on the good state, \(\mathbb{E}[c_gn_g(h_\tau)+c_bn_b(h_\tau)|\omega=G]\), equals
	\begin{equation*}
		z_0=\frac{\bar{c}_g}{2\alpha-1}\frac{\bar{n}(x^{\underline{n}}-1)-\underline{n}(x^{\bar{n}}-1)}{x^{\underline{n}}-x^{\bar{n}}}.
	\end{equation*}

	Similarly, conditional on the bad state, the expected cost of implementing \(\tau\),
	\begin{equation*}
		\mathbb{E}[c_gn_g(h_\tau)+c_bn_b(h_\tau)|\omega=B] = \frac{\bar{c}_b}{2\alpha-1}\frac{\bar{n}(x^{\bar{n}}-x^{\bar{n}+\underline{n}})-\underline{n}(x^{\underline{n}}-x^{\bar{n}+\underline{n}})}{x^{\underline{n}}-x^{\bar{n}}}.
	\end{equation*}
	Hence, the expected cost of implementing the strategy \(\tau\), \(\mathbb{E}[c_g n_g(h_\tau)+c_b n_b(h_\tau)]\), is
	\begin{equation}\label{eq.f3}
		\frac{1}{2\alpha-1}\left[\mu_0\frac{\bar{n}(x^{\underline{n}}-1)-\underline{n}(x^{\bar{n}}-1)}{x^{\underline{n}}-x^{\bar{n}}}\bar{c}_g+(1-\mu_0)\frac{\bar{n}(x^{\bar{n}}-x^{\bar{n}+\underline{n}})-\underline{n}(x^{\underline{n}}-x^{\bar{n}+\underline{n}})}{x^{\underline{n}}-x^{\bar{n}}}\bar{c}_b\right].
	\end{equation}

	We are to verify that (\ref{eq.f3}) equals \(\mathbb{E}[H(\mu_0)-H(\hat{\mu})]\). First, notice that
	\begin{equation*}
		\ln\left(\frac{1-\mu_n}{\mu_n}\right) = \ln\left(\frac{1-\mu_0}{\mu_0}\right)+n\ln x
	\end{equation*}
	for \(n=\bar{n},\underline{n}\). And
	\begin{gather*}
		\frac{\mu_0-\mu_{\underline{n}}}{\mu_{\bar{n}}-\mu_{\underline{n}}}\mu_{\bar{n}} = \frac{\mu_0\left(1+\frac{1-\mu_0}{\mu_0}x^{\underline{n}}\right)-1}{\left(1+\frac{1-\mu_0}{\mu_0}x^{\underline{n}}\right)-\left(1+\frac{1-\mu_0}{\mu_0}x^{\bar{n}}\right)} = \frac{\mu_0(x^{\underline{n}}-1)}{x^{\underline{n}}-x^{\bar{n}}},\\
		\frac{\mu_{\bar{n}}-\mu_0}{\mu_{\bar{n}}-\mu_{\underline{n}}}\mu_{\underline{n}} = \frac{1-\mu_0\left(1+\frac{1-\mu_0}{\mu_0}x^{\bar{n}}\right)}{\left(1+\frac{1-\mu_0}{\mu_0}x^{\underline{n}}\right)-\left(1+\frac{1-\mu_0}{\mu_0}x^{\bar{n}}\right)} = \frac{\mu_0(1-x^{\bar{n}})}{x^{\underline{n}}-x^{\bar{n}}}.
	\end{gather*}
	Hence,
	\begin{equation*}
		\mathbb{E}\left[\hat{\mu}\ln\left(\frac{1-\hat{\mu}}{\hat{\mu}}\right)\right] = \mu_0\ln\left(\frac{1-\mu_0}{\mu_0}\right) + \mu_0\frac{\bar{n}(x^{\underline{n}}-1)-\underline{n}(x^{\bar{n}}-1)}{x^{\underline{n}}-x^{\bar{n}}}\ln x.
	\end{equation*}
	Similarly,
	\begin{equation*}
		\mathbb{E}\left[(1-\hat{\mu})\ln\left(\frac{\hat{\mu}}{1-\hat{\mu}}\right)\right] = (1-\mu_0)\ln\left(\frac{\mu_0}{1-\mu_0}\right) + (1-\mu_0)\frac{\bar{n}(x^{\bar{n}}-x^{\bar{n}+\underline{n}})-\underline{n}(x^{\underline{n}}-x^{\bar{n}+\underline{n}})}{x^{\underline{n}}-x^{\bar{n}}}\ln x.
	\end{equation*}
	Therefore, (\ref{eq.f3}) equals \(\mathbb{E}[H(\mu_0)-H(\hat{\mu})]\).
\end{proof}

\subsection{Proof of Proposition \ref{pro10}}

\begin{proof}
	Let \(\pi=(p,q)\in\Pi^\circ\). In the game with Shannon Entropy cost,
	\begin{gather*}
		\frac{\partial f(\pi,\mu)}{\partial p} = \mu \left[1-C\left(\ln\frac{1-r}{r}-\ln\frac{1-p}{p}\right)\right], \\
		\frac{\partial f(\pi,\mu)}{\partial q} = (1-\mu) \left[1+C\left(\ln\frac{1-q}{q}-\ln\frac{1-r}{r}\right)\right],
	\end{gather*}
	where \(r:=\mu p+(1-\mu) q\). The marginal rate of substitution is therefore
	\begin{equation}\label{eq.a9}
		MRS(\pi|\mu) = -\frac{\mu}{1-\mu}\left[1-\frac{CL}{R}\right],
	\end{equation}
	where \(L:=\ln\frac{1-q}{q}-\ln\frac{1-p}{p}>0\), and \(R:=1+C\left(\ln\frac{1-q}{q}-\ln\frac{1-r}{r}\right)>1\).

	Taking derivative of (\ref{eq.a9}) with respect to \(\mu\),
	\begin{equation}\label{eq.d1}
		\frac{\partial}{\partial\mu} MRS(\pi|\mu) = \frac{MRS(\pi|\mu)}{\mu(1-\mu)} - \frac{\mu}{1-\mu}\frac{C^2L}{R^2}\frac{p-q}{r(1-r)}.
	\end{equation}
	Hence, if \(MRS(\pi|\mu)\leq 0\), \(\frac{\partial}{\partial\mu}MRS(\pi|\mu)<0\). Therefore, \(MRS(\pi|\mu)\) as a function of \(\mu\) has at most one zero on \((0,1)\). Moreover, notice that \(\lim_{\mu\downarrow 0}MRS(\pi|\mu)=0\), and \(\lim_{\mu\uparrow 1}MRS(\pi|\mu)=-\infty\). Hence, \(MRS(\pi|\mu)\) has a unique zero on \((0,1)\) if and only if
	\begin{equation*}
		\lim_{\mu\downarrow 0}\frac{\partial}{\partial\mu}MRS(\pi|\mu) = C\left(\ln\frac{1-q}{q}-\ln\frac{1-p}{p}\right)-1>0.
	\end{equation*}
	That is,
	\begin{equation*}
		p > \tilde{\mathbf{p}}(q) := \frac{q}{q+(1-q)e^\frac{1}{C}}.
	\end{equation*}
	If \(p\leq\tilde{\mathbf{p}}(q)\), \(MRS(\pi|\mu)\) is negative and decreasing in \(\mu\) for all \(\mu\in(0,1)\).

	We now show that if \(p>\tilde{\mathbf{p}}(q)\), \(MRS(\pi|\mu)\) is single-peaked. Hence, it first increases and then decreases in \(\mu\). This concludes statement (i). To shorten notation, we will briefly denote by \(M,M',M''\) the marginal rate of substitution \(MRS(\pi|\mu)\) and its first and second order derivatives with respect to \(\mu\), respectively. Multiplying (\ref{eq.d1}) with \(1-\mu\) and taking derivative with respect to \(\mu\),
	\begin{align}
		(1-\mu)M''-M' &= \frac{M'}{\mu}-\frac{M}{\mu^2} -\frac{C^2L}{R^2}\frac{p-q}{r^2(1-r)^2}(r^2-2rq+q) +2\frac{C^2L}{R^3}\frac{p-q}{r^2(1-r)^2}(r-q) \nonumber\\
		&< \frac{M'}{\mu}-\frac{M}{\mu^2}-\frac{C^2L}{R^2}\frac{p-q}{r^2(1-r)^2}(r^2-2rq+3q-2r) \label{eq.d2}
	\end{align}
	Suppose that at some \(\mu\in(0,1)\), \(M'=0\). By (\ref{eq.d1}), \(M>0\), and
	\begin{equation}\label{eq.d3}
		\frac{C^2L}{R^2}\frac{p-q}{r(1-r)} = \frac{M}{\mu^2}.
	\end{equation}
	Substituting \(M'=0\) and (\ref{eq.d3}) in (\ref{eq.d2}),
	\begin{equation*}
		(1-\mu)M'' < -\frac{M}{\mu^2}\frac{3q-r-2rq}{r(1-r)}.
	\end{equation*}
	Notice that
	\begin{equation}\label{eq.d4}
		3q-r-2rq = -2(1-\mu)q^2 + (4-\mu-2\mu p)q + \mu p.
	\end{equation}
	The right-hand side of (\ref{eq.d4}) is increasing in \(q\) on \((0,1)\), and it equals \(\mu p>0\) at \(q=0\). Hence, \(3q-r-2rq>0\), and \(M''<0\) at \(\mu\). That is, \(MRS(\pi|\mu)\) obtains its global maximum at \(\mu\), and it has no local minimum. Therefore, \(MRS(\pi|\mu)\) is single-peaked.
	
	Fix any \(q\in(0,1)\) and \(i<j\). Recall that, if \(p\leq\tilde{\mathbf{p}}(q)\), \(MRS(\pi|\mu)\) is decreasing in \(\mu\), and observe from (\ref{eq.d1}) that \(\lim_{p\uparrow 1}\frac{\partial}{\partial\mu}MRS(\pi|\mu)=\infty\) for all \(\mu\in(0,1)\). Hence, there exists some \(\tilde{p}>\tilde{\mathbf{p}}(q)\) such that \(MRS((\tilde{p},q)|\mu_i)=MRS((\tilde{p},q)|\mu_j)\). We are to show that
	\begin{equation*}
		\frac{\partial}{\partial p}\big(MRS((p,q)|\mu_i)-MRS((p,q)|\mu_j)\big)\Big|_{p=\tilde{p}} < 0.
	\end{equation*}
	Letting \(\tilde{\mathbf{p}}_{i,j}:q\mapsto\tilde{p}\), statement (ii) then follows.

	Taking derivative of (\ref{eq.a9}) with respect to \(p\),
	\begin{equation}\label{eq.d6}
		\frac{\partial}{\partial p}MRS(\pi|\mu) = \frac{\mu}{1-\mu}\frac{C}{R}\left(\frac{1}{p(1-p)}-\frac{CL}{R}\frac{\mu}{r(1-r)}\right).
	\end{equation}
	Substituting \(R\) in (\ref{eq.d6}) using (\ref{eq.a9}), we have
	\begin{equation*}
		\frac{\partial}{\partial p}MRS(\pi|\mu) = \frac{1}{L}\left(M+\frac{\mu}{1-\mu}\right)\left[\frac{1}{p(1-p)}-\left(1+\frac{1-\mu}{\mu}M\right)\frac{\mu}{r(1-r)}\right],
	\end{equation*}
	where we again use \(M\) to denote \(MRS(\pi|\mu)\). Notice that \(L>0\) is independent of \(\mu\), and \(MRS((\hat{p},q)|\mu_i)=MRS((\hat{p},q)|\mu_j)\). Hence, it is sufficient to show that
	\begin{equation}\label{eq.d7}
		G(\mu) := \left(m+\frac{\mu}{1-\mu}\right)\left[\frac{1}{p(1-p)}-\left(1+\frac{1-\mu}{\mu}m\right)\frac{\mu}{r(1-r)}\right]
	\end{equation}
	is increasing given any \(m, p\), and \(q\). Taking derivative of (\ref{eq.d7}) with respect to \(\mu\),
	\begin{align*}
		G'(\mu) &= \frac{r^2-2rp+p}{r^2(1-r)^2}m^2 - 2\frac{r^2-2rq+q}{r^2(1-r)^2}m \\
		&\qquad\qquad\qquad\qquad\qquad+ \frac{1}{(1-\mu)^2}\left[\frac{1}{p(1-p)}-\frac{\mu}{r(1-r)}\right]-\frac{\mu}{1-\mu}\frac{r^2-2rq+q}{r^2(1-r)^2} \\
		&= \frac{(mp-q)^2}{pr^2} + \frac{(m(1-p)-(1-q))^2}{(1-p)(1-r)^2} > 0.
	\end{align*}
	That is, \(G(\mu)\) is increasing.
\end{proof}

\bibliography{Signaling}

\end{document}